\newcommand{\NP}{\ensuremath{\mathsf{NP}}\xspace}
\newcommand{\FPT}{\ensuremath{\mathsf{FPT}}\xspace}
\newcommand{\containment}{\ensuremath{\mathsf{NP \subseteq coNP/poly}}\xspace}
\newcommand{\uC}{\ensuremath{\hat{\mathcal{C}}_3}\xspace}
\newcommand{\mC}{\ensuremath{\mathcal{C}_3}\xspace}
\newcommand{\sC}{\ensuremath{\mathcal{C}_1}\xspace}
\newcommand{\usC}{\ensuremath{\hat{\mathcal{C}}_1}\xspace}
\newcommand{\msC}{\ensuremath{\mathcal{C}_1}\xspace}
\newcommand{\relC}{\ensuremath{\mathcal{C}_{\mathtt{rel}}}\xspace}
\newcommand{\irrC}{\ensuremath{\mathcal{C}_{\mathtt{irr}}}\xspace}
\newcommand{\irrM}{\ensuremath{M_{\mathtt{irr}}}\xspace}
\newcommand{\irrD}{\ensuremath{V_{\mathtt{irr}}}\xspace}
\newcommand{\fieldF}{\ensuremath{\mathbb{F}}\xspace}
\newcommand{\I}{\ensuremath{\mathcal{I}}\xspace}
\newcommand{\N}{\mathbb{N}}
\newcommand{\R}{\mathbb{R}}
\newcommand{\T}{\ensuremath{\mathcal{T}}\xspace}
\newcommand{\bX}{\ensuremath{\overline{X}}\xspace}
\newcommand{\opX}{\ensuremath{X_{\mathtt{op}}}\xspace}
\newcommand{\Y}{\ensuremath{\mathcal{Y}}\xspace}
\newcommand{\Oh}{\mathcal{O}}
\newcommand{\yes}{\textbf{yes}\xspace}
\newcommand{\no}{\textbf{no}\xspace}
\DeclareMathOperator{\opt}{opt}
\DeclareMathOperator{\tw}{tw}
\theoremstyle{plain}
\newtheorem{lemma}{Lemma}
\newtheorem{theorem}{Theorem}
\newtheorem{proposition}{Proposition}
\newtheorem{claim}{Claim}
\theoremstyle{definition}
\newtheorem{definition}{Definition}
\theoremstyle{remark}
\newtheorem{observation}{Observation}
\newcommand{\dunion}{\mathbin{\dot\cup}}
\newcommand{\probname}[1]{\lowercase{\textsc{#1}}}
\newcommand{\problem}[1]{\probname{#1}\xspace}
\newcommand{\vc}{\probname{Vertex Cover}\xspace}
\newcommand{\vck}{\probname{Vertex Cover$(k)$}\xspace}
\newcommand{\vcalp}{\probname{Vertex Cover}$(k-\mbox{\textsc{lp}})$\xspace}
\newcommand{\vcamm}{\probname{Vertex Cover}$(k-\mbox{\textsc{mm}})$\xspace}
\newcommand{\vcanb}{\probname{Vertex Cover}$(k-(2\mbox{\textsc{lp}}-\mbox{\textsc{mm}}))$\xspace}
\newcommand{\atwosat}{\probname{Almost 2-SAT($k$)}\xspace}
\newcommand{\problembox}[4]{
\begin{center}
\framebox{
    \begin{minipage}{0.95\textwidth}
    \problem{#1}\\
    \textbf{Input:} #2 \\
    \textbf{Parameter:} #3 \\
    \textbf{Question:} #4
\end{minipage}
}
\end{center}
}
\title{A randomized polynomial kernelization for Vertex Cover with a smaller parameter}
\author{Stefan Kratsch\\
Universit\"at Bonn, Institut f\"ur Informatik\\
\texttt{kratsch@cs.uni-bonn.de}}
\begin{document}

\maketitle

\begin{abstract}
In the \problem{Vertex Cover} problem we are given a graph $G=(V,E)$ and an integer $k$ and have to determine whether there is a set $X\subseteq V$ of size at most $k$ such that each edge in $E$ has at least one endpoint in $X$. The problem can be easily solved in time $\Oh^*(2^k)$, making it fixed-parameter tractable (FPT) with respect to $k$. While the fastest known algorithm takes only time $\Oh^*(1.2738^k)$, much stronger improvements have been obtained by studying \emph{parameters that are smaller than~$k$}. Apart from treewidth-related results, the arguably best algorithm for \problem{Vertex Cover} runs in time $\Oh^*(2.3146^p)$, where $p=k-LP(G)$ is only the excess of the solution size $k$ over the best fractional vertex cover (Lokshtanov et al.\ TALG 2014). Since $p\leq k$ but $k$ cannot be bounded in terms of $p$ alone, this strictly increases the range of tractable instances.

Recently, Garg and Philip (SODA 2016) greatly contributed to understanding the parameterized complexity of the \problem{Vertex Cover} problem. They prove that $2LP(G)-MM(G)$ is a lower bound for the vertex cover size of $G$, where $MM(G)$ is the size of a largest matching of $G$, and proceed to study parameter $\ell=k-(2LP(G)-MM(G))$. They give an algorithm of running time $\Oh^*(3^\ell)$, proving that \problem{Vertex Cover} is FPT in $\ell$. It can be easily observed that $\ell\leq p$ whereas $p$ cannot be bounded in terms of $\ell$ alone.
We complement the work of Garg and Philip by proving that \problem{Vertex Cover} admits a randomized polynomial kernelization in terms of $\ell$, i.e., an efficient preprocessing to size polynomial in $\ell$. This improves over parameter $p=k-LP(G)$ for which this was previously known (Kratsch and Wahlstr\"om FOCS 2012).
\end{abstract}


\section{Introduction}

A \emph{vertex cover} of a graph $G=(V,E)$ is a set $X\subseteq V$ such that each edge $e\in E$ has at least one endpoint in $X$. The \vc problem of determining whether a given graph $G$ has a vertex cover of size at most $k$ has been an important benchmark problem in parameterized complexity for both \emph{fixed-parameter tractability} and \emph{(polynomial) kernelization},\footnote{Detailed definitions can be found in Section~\ref{section:preliminaries}. Note that we use $\ell$, rather than $k$, as the default symbol for parameters and use \vc{}$(\ell)$ to refer to the \vc problem with parameter $\ell$.} which are the two notions of tractability for parameterized problems. Kernelization, in particular, formalizes the widespread notion of efficient preprocessing, allowing a rigorous study  (cf.~\cite{Kratsch14}).
We present a randomized polynomial kernelization for \vc for the to-date smallest parameter, complementing a recent fixed-parameter tractability result~by~Garg~and~Philip~\cite{GargP16}.

Let us first recall what is known for the so-called \emph{standard parameterization} \vck, i.e., with parameter $\ell=k$: There is a folklore $\Oh^*(2^k)$ time\footnote{We use $\Oh^*$ notation, which suppresses polynomial factors.} algorithm for testing whether a graph $G$ has a vertex cover of size at most $k$, proving that \vck is fixed-parameter tractable (\FPT); this has been improved several times with the fastest known algorithm due to Chen et al.~\cite{ChenKX10} running in time $\Oh^*(1.2738^k)$. Under the Exponential Time Hypothesis of Impagliazzo et al.~\cite{ImpagliazzoPZ01} there is no algorithm with runtime $\Oh^*(2^{o(k)})$. The best known kernelization for \vck reduces any instance $(G,k)$ to an equivalent instance $(G',k')$ with $|V(G')|\leq 2k$; the total size is $\Oh(k^2)$~\cite{ChenKJ01}. Unless \containment and the polynomial hierarchy collapses there is no kernelization to size $\Oh(k^{2-\varepsilon})$~\cite{DellM14}.

At first glance, the \FPT and kernelization results for \vck seem essentially best possible. This is true for parameter $\ell=k$, but there are \emph{smaller parameters} $\ell'$ for which both \FPT-algorithms and polynomial kernelizations are known. The motivation for this is that even when $\ell'=\Oh(1)$, the value $\ell=k$ may be as large as $\Omega(n)$, making both \FPT-algorithm and kernelization for parameter $k$ useless for such instances (time $2^{\Omega(n)}$ and size guarantee $\Oh(n)$). In contrast, for $\ell'=\Oh(1)$ an \FPT-algorithm with respect to $\ell'$ runs in polynomial time (with only leading constant depending on $\ell'$). Let us discuss the relevant type of smaller parameter, which relates to \emph{lower bounds on the optimum} and was introduced by Mahajan and Raman~\cite{MahajanR99}; two other types are discussed briefly under related work.

Two well-known lower bounds for the size of vertex covers for a graph $G=(V,E)$ are the maximum size of a matching of $G$ and the smallest size of fractional vertex covers for $G$; we (essentially) follow Garg and Philip~\cite{GargP16} in denoting these two values by $MM(G)$ and $LP(G)$. Note that the notation $LP(G)$ comes from the fact that fractional vertex covers come up naturally in the linear programming relaxation of the \vc problem, where we must assign each vertex a fractional value such that each edge is incident with total value of at least $1$. In this regard, it is useful to observe that the LP relaxation of the \problem{Maximum Matching} problem is exactly the dual of this. Accordingly, we have $MM(G)\leq LP(G)$ since each integral matching is also a fractional matching, i.e., with each vertex incident to a total value of at most $1$. Similarly, using $VC(G)$ to denote the minimum size of vertex covers of $G$ we get $VC(G)\geq LP(G)$ and, hence, $VC(G)\geq LP(G)\geq MM(G)$.

A number of papers have studied vertex cover with respect to ``above lower bound'' parameters $\ell'=k-MM(G)$ or $\ell''=k-LP(G)$ \cite{RazgonO09,RamanRS11,CyganPPW13,NarayanaswamyRRS12,LokshtanovNRRS14}. Observe that
\[
 k\geq k-MM(G) \geq k-LP(G).
\]
For the converse, note that $k$ can be unbounded in terms of $k-MM(G)$ and $k-LP(G)$, whereas $k-MM(G)\leq 2(k-LP(G))$ holds~\cite{KratschW12,Jansen_Thesis}. Thus, from the perspective of achieving fixed-parameter tractability (and avoiding large parameters) both parameters are equally useful for improving over parameter $k$. Razgon and O'Sullivan~\cite{RazgonO09} proved fixed-parameter tractability of \atwosat, which implies that \vcamm is \FPT due to a reduction to \atwosat by Mishra et al.~\cite{MishraRSSS11}. Using $k-MM(G)\leq 2(k-LP(G))$, this also entails fixed-parameter tractability of \vcalp. 

After several improvements~\cite{RamanRS11,CyganPPW13,NarayanaswamyRRS12,LokshtanovNRRS14} the fastest known algorithm, due to Lokshtanov et al.~\cite{LokshtanovNRRS14}, runs in time $\Oh^*(2.3146^{k-MM(G)})$. The algorithms of Narayanaswamy et al.~\cite{NarayanaswamyRRS12} and Lokshtanov et al.~\cite{LokshtanovNRRS14} achieve the same parameter dependency also for parameter $k-LP(G)$. The first (and to our knowledge only) kernelization result for these parameters is a randomized polynomial kernelization for \vcalp by Kratsch and Wahlstr\"om~\cite{KratschW12}, which of course applies also to the larger parameter $k-MM(G)$.

Recently, Garg and Philip~\cite{GargP16} made an important contribution to understanding the parameterized complexity of the \vc problem by proving it to be \FPT with respect to parameter $\ell=k-(2LP(G)-MM(G))$. Building on an observation of Lov\'asz and Plummer~\cite{LovaszP1986} they prove that $VC(G)\geq 2LP(G)-MM(G)$, i.e., that $2LP(G)-MM(G)$ is indeed a lower bound for the minimum vertex covers size of any graph $G$. They then design a branching algorithm with running time $\Oh^*(3^\ell)$ that builds on the well-known Gallai-Edmonds decomposition for maximum matchings to guide its branching choices.

\problembox{\vcanb}{A graph $G=(V,E)$ and an integer $k\in\N$.}{$\ell=k-(2LP(G)-MM(G))$ where $LP(G)$ is the minimum size of fractional vertex covers for $G$ and $MM(G)$ is the maximum cardinality of matchings of $G$.}{Does $G$ have a vertex cover of size at most $k$, i.e., a set $X\subseteq V$ of size at most $k$ such that each edge of $E$ has at least one endpoint in $X$?}

Since $LP(G)\geq MM(G)$, we clearly have $2LP(G)-MM(G)\geq LP(G)$ and hence $\ell= k-(2LP(G)-MM(G))$ is indeed at most as large as the previously best parameter $k-LP(G)$. We can easily observe that $k-LP(G)$ cannot be bounded in terms of $\ell$: For any odd cycle $C$ of length $2s+1$ we have $LP(C)=\frac12(2s+1)$, $VC(C)=s+1$, and $MM(C)=s$. Thus, a graph $G$ consisting of $t$ vertex-disjoint odd cycles of length $2s+1$ has $LP(G)=\frac12t(2s+1)$, $VC(G)=t(s+1)$, and $MM(G)=ts$. For $k=VC(G)=t(s+1)$ we get
\[
\ell=k - (2LP(G)-MM(G))=t(s+1) - t(2s+1) + ts=0
\]
whereas
\[
k-LP(G) = t(s+1) - \frac12t(2s+1) = \frac12t(2s+2) - \frac12t(2s+1) = \frac12t.
\]
Generally, it can be easily proved that $LP(G)$ and $2LP(G)-MM(G)$ differ by exactly $\frac12$ on any \emph{factor-critical} graph (cf.~Proposition~\ref{proposition:factorcritical:minvc:minfvc}). 

As always in parameterized complexity, when presented with a new fixed-parameter tractability result, the next question is whether the problem also admits a polynomial kernelization. It is well known that decidable problems are fixed-parameter tractable if and only if they admit a (not necessarily polynomial) kernelization.\footnote{We sketch this folklore fact for \vcanb: If the input is larger than $3^\ell$, where $\ell=k-(2LP(G)-MM(G))$, then the algorithm of Garg and Philip~\cite{GargP16} runs in polynomial time and we can reduce to an equivalent small yes- or no-instance; else, the instance size is bounded by $3^\ell$; in both cases we get size at most $3^\ell$ in polynomial time. The converse holds since a kernelization followed by any brute-force algorithm on an instance of, say, size $g(\ell)$ gives an \FPT running time in terms of $\ell$.} Nevertheless, not all problems admit polynomial kernelizations and, in the present case, both an extension of the methods for parameter $k-LP(G)$ \cite{KratschW12} or a lower bound proof similar to Cygan et al.~\cite{CyganLPPS14} or Jansen~\cite[Section 5.3]{Jansen_Thesis} (see related work) are conceivable.

\subparagraph{Our result.}
We give a randomized polynomial kernelization for \vcanb. This improves upon parameter $k-LP(G)$ by giving a strictly smaller parameter for which a polynomial kernelization is known. At high level, the kernelization takes the form of a (randomized) polynomial parameter transformation from \vcanb to \vcamm, i.e., a polynomial-time many-one (Karp) reduction with \emph{output parameter polynomially bounded in the input parameter}. It is well known (cf.~Bodlaender et al.~\cite{BodlaenderTY11}) that this implies a polynomial kernelization for the source problem, i.e., for \vcanb in our case. Let us give some more details of this transformation.

Since the transformation is between different parameterizations of the same problem, it suffices to handle parts of any input graph $G$ where the input parameter $\ell=k-(2LP(G)-MM(G))$ is (much) smaller than the output parameter $k-MM(G)$. After the well-known LP-based preprocessing (cf.~\cite{GargP16}), the difference in parameter values is equal to the number of vertices that are exposed (unmatched) by any maximum matching $M$ of $G$.
Consider the Gallai-Edmonds decomposition $V=A\dunion B\dunion D$ of $G=(V,E)$, where $D$ contains the vertices that are exposed by at least one maximum matching, $A=N(D)$, and $B=V\setminus (A\cup D)$. Let $M$ be a maximum matching and let $t$ be the number of exposed vertices. There are $t$ components of $G[D]$ that have exactly one exposed vertex each. The value $2LP(G)-MM(G)$ is equal to $|M|+t$ when $LP(G)=\frac12|V|$, as implied by LP-based preprocessing.

To reduce the difference in parameter values we will remove all but $\Oh(\ell^4)$ components of $G[D]$ that have an exposed vertex; they are called \emph{unmatched components} for lack of a matching edge to $A$ and we can ensure that they are not singletons. It is known that any such component $C$ is factor-critical and hence has no vertex cover smaller than $\frac12(|C|+1)$; this exactly matches its contribution to $|M|+t$: It has $\frac12(|C|-1)$ edges of $M$ and one exposed vertex. Unless the instance is trivially \no all but at most $\ell$ of these components $C$ have a vertex cover of size $\frac12(|C|+1)$, later called a \emph{tight vertex cover}. The only reason not to use a tight vertex cover for $C$ can be due to adjacent vertices in $A$ that are not selected; this happens at most $\ell$ times. A technical lemma proves that this can always be traced to at most three vertices of $C$ and hence at most three vertices in $A$ that are adjacent with $C$.

In contrast, there are (matched, non-singleton) components $C$ of $G[C]$ that together with a matched vertex $v\in A$ contribute $\frac12(|C|+1)$ to the lower bound due to containing this many matching edges. To cover them at this cost requires not selecting vertex $v$. This in turn propagates along $M$-alternating paths until the cover picks both vertices of an $M$-edge, which happens at most $\ell$ times, or until reaching an unmatched component, where it may help prevent a tight vertex cover. We translate this effect into a two-way separation problem in an auxiliary directed graph. Selecting both vertices of an $M$-edge is analogous to a adding a vertex to the separator. Relative to a separator the question becomes which sets of at most three vertices of $A$ that can prevent tight vertex covers are still reachable by propagation. At this point we can apply representative set tools from Kratsch and Wahlstr\"om~\cite{KratschW12} to identify a small family of such triplets that works for all separators (and hence for all so-called \emph{dominant} vertex covers) and keep only the corresponding components.

\subparagraph{Related work.}
Let us mention some further kernelization results for \vc with respect to nonstandard parameters. There are two further types of interesting parameters:
\begin{enumerate}
 \item \emph{Width-parameters:} Parameters such as treewidth allow dynamic programming algorithms running in time, e.g., $\Oh^*(2^{\tw})$, independently of the size of the vertex cover. It is known that there are no polynomial kernels for \vc (or most other \NP-hard problems) under such parameters~\cite{BodlaenderDFH09}. The treewidth of a graph is upper bounded by the smallest vertex cover, whereas graphs of bounded treewidth can have vertex cover size $\Omega(n)$.
 \item \emph{``Distance to tractable case''-parameters:} \vc can be efficiently solved on forests. By a simple enumeration argument it is fixed-parameter tractable when $\ell$ is the minimum number of vertices to delete such that $G$ becomes a forest. Jansen and Bodlaender~\cite{JansenB13} gave a polynomial kernelization to $\Oh(\ell^3)$ vertices. Note that the vertex cover size is an upper bound on $\ell$, whereas trees can have unbounded vertex cover size. The \FPT-result can be carried over to smaller parameters corresponding to distance from larger graph classes on which \vc is polynomial-time solvable, however, Cygan et al.~\cite{CyganLPPS14} and Jansen~\cite[Section 5.3]{Jansen_Thesis} ruled out polynomial kernels for some of them. E.g., if $\ell$ is the deletion-distance to an outerplanar graph then there is no kernelization for \vc{}$(\ell)$ to size polynomial in $\ell$ unless the polynomial hierarchy collapses~\cite{Jansen_Thesis}.
\end{enumerate}

\subparagraph{Organization.}
Section~\ref{section:preliminaries} gives some preliminaries. In Section~\ref{section:tightvertexcovers:factorcritical} we discuss vertex covers of factor-critical graphs and prove the claimed lemma about critical sets. Section~\ref{section:nicedecompositions} introduces a relaxation of the Gallai-Edmonds decomposition, called \emph{nice decomposition}, and Section~\ref{section:nicedecompositionsandvertxcovers} explores the relation between nice decompositions and vertex covers. The kernelization for \vcanb is given in Section~\ref{section:kernelization}. In Section~\ref{section:proofofmatroidresult} we provide for self-containment a result on representative sets that follows readily from~\cite{KratschW12}. We conclude in Section~\ref{section:conclusion}.


\section{Preliminaries}\label{section:preliminaries}

We use the shorthand~$[n]:=\{1,\ldots,n\}$. By $A\dunion B$ to denote the disjoint union of $A$ and $B$.

\subparagraph{Parameterized complexity.}
Let us recall that a \emph{parameterized problem} is a set $Q\subseteq\Sigma^*\times\N$ where $\Sigma$ is any finite alphabet, i.e., a language of pairs $(x,\ell)$ where the component $\ell\in\N$ is called the \emph{parameter}.
Recall also that a classical (unparameterized) problem is usually given as a set (language) $L\subseteq\Sigma^*$. For the classical problem \problem{Vertex Cover}, with instances $(G,k)$, asking whether $G$ has a vertex cover of size at most $k$, the canonical parameterized problem is \vck where the parameter value is simply $\ell=k$; this is the same procedure for any other decision problem obtained from an optimization problem by asking whether $\opt\leq k$ resp.\ $\opt\geq k$ and is called the \emph{standard parameterization}. We remark that this notation is usually abused by, e.g., using $(G,k)$ for an instance of \vck rather than the redundant $((G,k),k)$; we will use $(G,k)$ for $((G,k),k)$ and $(G,k,\ell)$ for $((G,k),\ell)$.

A parameterized problem $Q$ is \emph{fixed-parameter tractable} (\FPT) if there exists a function $f\colon\N\to\N$, a constant $c$, and an algorithm $A$ that correctly decides $(x,\ell)\in Q$ in time $f(\ell)\cdot |x|^c$ for all $(x,\ell)\in\Sigma^*\times\N$. A parameterized problem $Q$ has a \emph{kernelization} if there is a function $g\colon\N\to\N$ and a polynomial-time algorithm $K$ that on input $(x,\ell)$ returns an instance $(x',\ell')$ with $|x'|,\ell'\leq g(\ell)$ and with $(x,\ell)\in Q$ if and only if $(x',\ell')\in Q$. The function $g$ is called the \emph{size} of the kernelization $K$ and a polynomial kernelization requires that $g$ is polynomially bounded. A \emph{randomized (polynomial) kernelization} may err with some probability, in which case the returned instance is not equivalent to the input instance. Natural variants with one-side error respectively bounded error are defined completely analogous to randomized algorithms. For a more detailed introduction to parameterized complexity we recommend the recent books by Downey and Fellows~\cite{DowneyF13} and Cygan et al.~\cite{CyganFKLMPPS15}.

\subparagraph{Graphs.}
We require both directed and undirected graphs; all graphs are finite and simple, i.e., they have no parallel edges or loops. Accordingly, an undirected graph $G=(V,E)$ consists of a finite set $V$ of vertices and a set $E\subseteq\binom{V}{2}$ of edges; a directed graph $H=(V,E)$ consists of a finite set $V$ and a set $E\subseteq V^2\setminus\{(v,v)\mid v\in V\}$. For clarity, all undirected graphs are called $G$ and all directed graphs are called $H$ (possibly with indices etc.).
For a graph $G=(V,E)$ and vertex set $X\subseteq V$ we use $G-X$ to denote the graph induced by $V\setminus X$; we also use $G-v$ if $X=\{v\}$.
Analogous definitions are used for directed graphs $H$.

Let $H=(V,E)$ be a directed graph and let $S$ and $T$ be two not necessarily disjoint vertex sets in $H$. A set $X\subseteq V$ is an \emph{$S,T$-separator} if in $G-X$ there is no path from $S\setminus X$ to $T\setminus X$; note that $X$ may overlap both $S$ and $T$ and that $S\cap T\subseteq X$ is required. The set $T$ is \emph{closest to $S$} if there is no $S,T$-separator $X$ with $X\neq T$ and $|X|\leq|T|$, i.e., if $T$ is the unique minimum $S,T$-separator in $G$ (cf.~\cite{KratschW12}). Both separators and closeness have analogous definitions in undirected graphs but they are not required here.

\begin{proposition}[cf.~\cite{KratschW12}]\label{proposition:closest}
Let $H=(V,E)$ be a directed graph and let $S,T\subseteq V$ such that $T$ is closest to $S$. For any vertex $v\in V\setminus T$ that is reachable from $S$ in $H-T$ there exist $|T|+1$ (fully) vertex-disjoint paths from $S$ to $T\cup\{v\}$.
\end{proposition}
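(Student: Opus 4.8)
The plan is to deduce the statement from Menger's theorem, in the form for vertex-disjoint paths between two vertex sets where the separating set may intersect both sets, combined with the definition of closeness. First I would record that a minimum $S,T$-separator always exists: $T$ itself is an $S,T$-separator, since $T\setminus T=\emptyset$ and so in $H-T$ there is (vacuously) no path from $S\setminus T$ to $T\setminus T$. Thus ``$T$ is closest to $S$'' says precisely that $T$ is \emph{the} unique minimum $S,T$-separator, and in particular that the minimum size of an $S,T$-separator equals $|T|$.

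The heart of the argument is the claim that every $S,(T\cup\{v\})$-separator has size at least $|T|+1$. Granting this, the vertex version of Menger's theorem immediately produces $|T|+1$ pairwise (fully) vertex-disjoint paths from $S$ to $T\cup\{v\}$; since $|T\cup\{v\}|=|T|+1$, their endpoints are exactly the vertices of $T\cup\{v\}$, so one of them ends in $v$ and the remaining $|T|$ end at distinct vertices of $T$. To prove the claim I would argue by contradiction: let $X$ be an $S,(T\cup\{v\})$-separator with $|X|\le|T|$. By definition $S\cap(T\cup\{v\})\subseteq X$, hence $S\cap T\subseteq X$; moreover in $H-X$ there is no path from $S\setminus X$ to $(T\cup\{v\})\setminus X$, which contains $T\setminus X$, so there is no path from $S\setminus X$ to $T\setminus X$ either. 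Therefore $X$ is also an $S,T$-separator of size at most $|T|$, and closeness of $T$ to $S$ forces $X=T$. But $v\notin T=X$, and by hypothesis there is a path $P$ from some $s\in S$ to $v$ in $H-T=H-X$; since $P$ avoids $X$ we get $s\in S\setminus X$, so $P$ is a path from $S\setminus X$ to $(T\cup\{v\})\setminus X$ in $H-X$, contradicting that $X$ separates $S$ from $T\cup\{v\}$.

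The only step that needs care is the exact form of Menger's theorem invoked: for vertex sets $S$ and $R$ in a directed graph, the maximum number of pairwise vertex-disjoint $S$-to-$R$ paths equals the minimum size of a vertex set $X$ such that $H-X$ has no path from $S\setminus X$ to $R\setminus X$, where $X$ is allowed to contain vertices of $S$ and of $R$ (and trivial one-vertex paths at vertices of $S\cap R$ are permitted). This is standard, e.g.\ via vertex splitting together with a super-source adjacent to all of $S$ and a super-sink reachable from all of $R$, and it is the same version as used in~\cite{KratschW12}; modulo this, the argument above is routine, the only bookkeeping being the $S\cap R\subseteq X$ requirement built into the definition of separator, which is handled as indicated.
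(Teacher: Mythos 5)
Your proposal is correct and follows essentially the same route as the paper: apply Menger's theorem for vertex sets, observe that any small $S,(T\cup\{v\})$-separator would also be an $S,T$-separator, and use reachability of $v$ in $H-T$ to rule out that this separator equals $T$, contradicting closeness. You merely argue in the direct direction (lower-bounding the separator size before invoking Menger) and spell out the set version of Menger's theorem more explicitly, while the paper phrases the same argument as a contradiction to the non-existence of the paths.
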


\begin{proof}
Assume for contradiction that such $|T|+1$ directed paths do not exist. By Menger's Theorem there must be an $S,T\cup\{v\}$-separator $X$ of size at most $|T|$. Observe that $X\neq T$ since $v$ is reachable from $S$ in $H-T$. Thus, $X$ is an $S,T$-separator of size at most $|T|$ that is different from $T$; this contradicts closeness of $T$.
\end{proof}

For an undirected graph $G=(V,E)$, a \emph{matching} is any set $M\subseteq E$ such that no two edges in $M$ have an endpoint in common. If $M$ is a matching in $G=(V,E)$ then we will say that a path is $M$-alternating if its edges are alternatingly from $M$ and from $\overline{M}:=E\setminus M$. An $M,M$-path is an $M$-alternating path whose first and last edge are from $M$; it must have odd length. Similarly, we define $\overline{M},M$-paths, $M,\overline{M}$-paths (both of even length), and $\overline{M},\overline{M}$-paths (of odd length). If $M$ is a matching of $G$ and $v$ is incident with an edge of $M$ then we use $M(v)$ to denote the other endpoint of that edge, i.e., the \emph{mate} or \emph{partner} of $v$. Say that a vertex $v$ is \emph{exposed by $M$} if it is not incident with an edge of $M$; we say that $v$ is \emph{exposable} if it is exposed by some maximum matching of $G$. A graph $G=(V,E)$ is \emph{factor-critical} if for each vertex $v\in V$ the graph $G-v$ has a perfect matching (a \emph{near-perfect matching of $G$}); observe that all factor-critical graphs must have an odd number of vertices.

A \emph{vertex cover} of a graph $G=(V,E)$ is a set $X\subseteq V$ such that each edge $e\in E$ has at least one endpoint in $X$. There is a well-known linear programming relaxation of the \vc problem for a graph $G=(V,E)$:
\begin{align*}
\min \quad& \sum_{v\in V} x(v)\\
s.t. \quad& x(u)+x(v)\geq 1\\
&x(v)\geq 0
\end{align*}
The optimum value of this linear program can be computed in polynomial time and it is denoted $LP(G)$. The feasible solutions $x\colon V\to\R_{\geq 0}$ are called fractional vertex covers; the \emph{cost} of a solution/fractional vertex cover $x$ is $\sum_{v\in V} x(v)$. It is well-known that the extremal points $x$ of the linear program are half-integral, i.e., $x\in\{0,\frac12,1\}^V$. With this in mind, we will tacitly assume that all considered fractional vertex covers are half-integral. We will often use the simple fact that the size of any matching $M$ of $G$ lower bounds both the cardinality of vertex covers and the cost of fractional vertex covers of $G$.

\subparagraph{Gallai-Edmonds decomposition.} We will now introduce the Gallai-Edmonds decomposition following the well-known book of Lov\'asz and Plummer~\cite{LovaszP1986}.\footnote{We use $B$ instead $C$ for $V\setminus (A\cup D)$ to leave the letter $C$ for cycles and connected components.}

\begin{definition}\label{definition:ged}
Let $G=(V,E)$ be a graph. The \emph{Gallai-Edmonds decomposition} of $G$ is a partition of $V$ into three sets $A$, $B$, and $D$ where
\begin{itemize}
 \item $D$ consists of all vertices $v$ of $G$ such that there is a maximum matching $M$ of $G$ that contains no edge incident with $v$, i.e., that leaves $v$ exposed,
 \item $A$ is the set of neighbors of $D$, i.e., $A:=N(D)$, and
 \item $B$ contains all remaining vertices, i.e., $B:=V\setminus(A\cup D)$.
\end{itemize}
\end{definition}

It is known (and easy to verify) that the Gallai-Edmonds decomposition of any graph $G$ is unique and can be computed in polynomial time. The Gallai-Edmonds decomposition has a number of useful properties; the following theorem states some of them.

\begin{theorem}[cf.\ {\cite[Theorem~3.2.1]{LovaszP1986}}]\label{theorem:ged}
Let $G=(V,E)$ be a graph and let $V=A\dunion B\dunion D$ be its Gallai-Edmonds decomposition. The following properties hold:
\begin{enumerate}
 \item The connected components of $G[D]$ are factor-critical.
 \item The graph $G[B]$ has a perfect matching.
 \item Every maximum matching $M$ of $G$ consists of a perfect matching of $G[B]$, a near-perfect matching of each component of $G[D]$, and a matching of $A$ into $D$.
\end{enumerate}
\end{theorem}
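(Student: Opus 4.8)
The plan is to follow the classical route to the Gallai--Edmonds structure theorem, whose technical heart is \emph{Gallai's Lemma}: every connected graph in which all vertices are exposable (equivalently, $MM(G-v)=MM(G)$ for every $v$) is factor-critical. I would prove this lemma first, by a minimal-counterexample argument: assuming $G$ is connected with every vertex exposable but $G$ not factor-critical, pick a vertex $u$ for which $G-u$ has no perfect matching and a vertex $w$ missed by some maximum matching $N$ of $G-u$, chosen so that $\operatorname{dist}_G(u,w)$ is minimum; then $\operatorname{dist}_G(u,w)\geq 2$ (else $N\cup\{uw\}$ is larger), and the neighbour $z$ of $w$ on a shortest $u$--$w$ path must be $N$-matched (else $\operatorname{dist}_G(u,z)$ would be smaller for a valid pair); swapping the edge at $z$ for $zw$ produces another maximum matching of $G-u$ missing the old mate of $z$, and comparing with a maximum matching exposing $z$ in $G$ yields a contradiction. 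This is the only genuinely delicate step; everything afterwards is bookkeeping with $M$-alternating paths and symmetric differences.

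Granting Gallai's Lemma, I would establish the three items in order. For (1), fix a component $C$ of $G[D]$. By definition of $D$ every $v\in V(C)$ is exposable in $G$; playing a maximum matching exposing $v$ against a fixed maximum matching $M$ and tracing the resulting alternating paths --- which, since $A=N(D)$ and there is no edge between $B$ and $D$, stay inside $G[D]$ and can be routed within $C$ --- shows that every vertex of $C$ is exposable in $C$ and that $C$ is connected; Gallai's Lemma then gives that $C$ is factor-critical, hence of odd order. For (2), $A=N(D)$ implies there is no edge between $B$ and $D$, so any maximum matching $M$ matches each $b\in B$ either inside $B$ or into $A$; a symmetric-difference argument (playing $M$ against maximum matchings of $G$ that expose suitable neighbours in $G[D]$) rules out the latter --- otherwise $b$ itself would be exposable, contradicting $b\notin D$ --- so $M$ induces a perfect matching of $G[B]$.

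Finally, for (3) let $M$ be any maximum matching. By (2) it induces a perfect matching on $G[B]$, and by (1) each component $C$ of $G[D]$ is odd, so $M$ must leave at least one vertex of $C$ exposed within $C$; it remains to rule out any further ``waste''. A matching of the claimed shape --- perfect on $G[B]$, near-perfect on each component $C$ of $G[D]$, and $A$ matched into $D$ with distinct endpoints in $D$ --- has cardinality $\tfrac12\bigl(|V|-(c(G[D])-|A|)\bigr)$, writing $c(\cdot)$ for the number of connected components. On the other hand, the Berge--Tutte formula applied to the vertex set $A$ --- deleting $A$ leaves $G[B]$, which has a perfect matching by (2), together with the $c(G[D])$ odd components of $G[D]$ --- shows that this same quantity upper-bounds $MM(G)$. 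Since the two bounds coincide, every maximum matching realizes them simultaneously, which forces exactly the stated structure. I expect Gallai's Lemma --- together with the verification that each component of $G[D]$ satisfies its hypothesis --- to be the main obstacle; the rest reduces to routine rerouting arguments and the deficiency count.
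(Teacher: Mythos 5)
The paper does not prove this statement at all: it is quoted, with its proof, from Lov\'asz--Plummer (Theorem~3.2.1), so your proposal must stand on its own as a proof of the classical Gallai--Edmonds structure theorem. Your skeleton (Gallai's Lemma plus structural bookkeeping) is indeed the classical route, and your sketch of Gallai's Lemma is close to the standard distance-minimization argument, although the decisive step (``comparing with a maximum matching exposing $z$ in $G$ yields a contradiction'') is exactly where the symmetric-difference work happens and is only asserted. The genuine gap, however, is in your item (1). The claim that, playing a maximum matching $M_v$ exposing $v\in C$ against a fixed maximum matching $M$, the alternating components of $M\,\triangle\,M_v$ ``stay inside $G[D]$ and can be routed within $C$'' is false as stated: both matchings may contain $D$--$A$ edges, so such a path can leave $C$ through $A$, pass through $B$ or through \emph{other} components of $G[D]$, and return (already in the path $u$--$a$--$v$ with $D=\{u,v\}$, $A=\{a\}$, the difference of the two maximum matchings is a path through $A$ joining two different $D$-components). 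Establishing that every vertex of $C$ is exposable \emph{within $C$}, so that Gallai's Lemma applies to $C$ alone, is precisely the hard content of the theorem; the standard proof obtains it from the Stability Lemma (Lov\'asz--Plummer Lemma~3.2.2: deleting a vertex of $A$ leaves $D$ and $B$ unchanged and removes that vertex from $A$), whose repeated application reduces to $G-A$, where the $D$-components are genuine connected components and exposability localizes. Your shortcut replaces this lemma by an incorrect reachability claim, so the heart of the proof is missing. Item (2) has a milder version of the same problem: to rule out a maximum-matching edge $ab$ with $b\in B$, $a\in A$, one must handle the possibility that the alternating path used to expose a neighbour $d\in D$ of $a$ itself passes through $a$ and $b$; this can be patched, but your one-line ``symmetric-difference argument'' does not do it.

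In item (3) there is a second gap: you use both directions of the deficiency count but justify only one. The easy Berge--Tutte counting at $S=A$ gives $MM(G)\leq \tfrac12\bigl(|V|-c(G[D])+|A|\bigr)$, where $c(\cdot)$ counts components; but to say ``the two bounds coincide'' you must also \emph{exhibit} a matching of the claimed shape, i.e.\ show that $A$ can be matched into pairwise distinct components of $G[D]$. That existence statement is not automatic --- it needs a Hall-type argument for $A$ against the $D$-components (or it falls out of the stability-lemma route) --- and your text presupposes it by computing the cardinality of ``a matching of the claimed shape.'' Once equality of $MM(G)$ with the deficiency formula and items (1)--(2) are in place, your forcing argument for the structure of every maximum matching is fine. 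So: right overall architecture, but the two load-bearing steps --- factor-criticality of the $D$-components (Stability Lemma or an equivalent) and the existence of a structured matching attaining the bound --- are respectively based on a false claim and omitted.
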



\section{Tight vertex covers of factor-critical graphs}\label{section:tightvertexcovers:factorcritical}

In this section we study vertex covers of factor-critical graphs, focusing on those that are of smallest possible size (later called tight vertex covers). We first recall the fact that any factor-critical graph with $n\geq 3$ vertices has no vertex cover of size less than $\frac12(n+1)$. By a similar argument such graphs have no fractional vertex cover of cost less than $\frac12n$.

\begin{proposition}[folklore]\label{proposition:factorcritical:minvc:minfvc}
Let $G=(V,E)$ be a factor-critical graph with at least three vertices. Every vertex cover $X$ of $G$ has cardinality at least $\frac12(|V|+1)$ and every fractional vertex cover $x\colon V\to \R_{\geq 0}$ of $G$ has cost at least $\frac12|V|$.
\end{proposition}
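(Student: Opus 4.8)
The plan is to prove both bounds by the same pigeonhole/averaging idea: exhibit a suitable set of near-perfect matchings of $G$ (guaranteed by factor-criticality) and observe that any (fractional) vertex cover must pay at least $1$ for every matching edge.

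For the integral bound, fix any vertex cover $X$ of $G$ and suppose for contradiction that $|X|\leq\frac12(|V|-1)$, i.e.\ $|V\setminus X|\geq\frac12(|V|+1)$. Pick any vertex $v\in V\setminus X$ (note $V\setminus X\neq\emptyset$ since $|V|\geq 3$). Since $G$ is factor-critical, $G-v$ has a perfect matching $M$, which has exactly $\frac12(|V|-1)$ edges. Every edge of $M$ has an endpoint in $X$, and since $M$ is a matching these endpoints are distinct; hence $|X|\geq|M|=\frac12(|V|-1)$. Combined with the assumption this forces $|X|=\frac12(|V|-1)$ and $X\subseteq V(M)=V\setminus\{v\}$, so in fact $X$ consists of exactly one endpoint of each edge of $M$ and $v\notin X$. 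Now I would derive a contradiction by using factor-criticality at a second vertex: the neighbor $u=M(w)$ structure lets one find an $M$-alternating path from $v$ into $X$ that, upon swapping, produces a near-perfect matching missing a vertex of $X$ — equivalently, simply apply factor-criticality to a vertex $x\in X$ that is a neighbor of $v$ (such $x$ exists because $G$ is connected, being factor-critical with $|V|\geq 3$, so $v$ has a neighbor, and all of $v$'s neighbors lie in $X$). Then $G-x$ has a perfect matching $M'$ with $\frac12(|V|-1)$ edges, all of whose endpoints lie in $X\setminus\{x\}$, which has only $\frac12(|V|-1)-1$ vertices — contradiction. Actually the cleanest finish: from $|X|=\frac12(|V|-1)$, repeat the first argument with $v$ replaced by \emph{any} $v'\in V\setminus X$; we showed every such $v'$ is exposed by some perfect matching of $G-v'$ whose edges each contribute a distinct vertex of $X$, forcing $X\subseteq V\setminus\{v'\}$ for \emph{every} $v'\notin X$, i.e.\ $|V\setminus X|\leq 1$, contradicting $|V\setminus X|\geq\frac12(|V|+1)\geq 2$.

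For the fractional bound, let $x\colon V\to\R_{\geq 0}$ be a fractional vertex cover and again pick any $v\in V$ and a perfect matching $M$ of $G-v$. Each edge $e=\{a,b\}\in M$ satisfies $x(a)+x(b)\geq 1$, and summing over the $\frac12(|V|-1)$ pairwise-disjoint edges of $M$ gives $\sum_{w\in V\setminus\{v\}}x(w)\geq\frac12(|V|-1)$, so $\text{cost}(x)\geq\frac12(|V|-1)+x(v)$. Since this holds for \emph{every} choice of $v\in V$, I average it over all $v$: summing the inequality $\text{cost}(x)-x(v)\geq\frac12(|V|-1)$ over all $v\in V$ yields $|V|\cdot\text{cost}(x)-\text{cost}(x)\geq\frac12(|V|-1)|V|$, i.e.\ $(|V|-1)\,\text{cost}(x)\geq\frac12(|V|-1)|V|$, hence $\text{cost}(x)\geq\frac12|V|$ (dividing by $|V|-1>0$).

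I expect the only real obstacle to be the endgame of the integral case — ruling out the "perfectly tight" configuration $|X|=\frac12(|V|-1)$. The averaging trick used for the fractional bound does not transfer verbatim (vertex covers are not closed under averaging), so one must instead argue combinatorially that no single set $X$ can simultaneously be an exact transversal of a near-perfect matching of $G-v'$ for every $v'\notin X$; as sketched above this reduces to the observation that each $v'\in V\setminus X$ would be forced to lie outside $X$ by its own matching (trivially true) \emph{and} that the matching edges cover all of $X$, which pins $|V\setminus X|\le 1$. Everything else is routine counting, so the write-up should be short.
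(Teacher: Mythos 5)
Your fractional bound is correct, and it is a mild variant of the paper's argument: the paper picks a matching edge $\{u,v\}$, notes that one endpoint satisfies $x(v)\geq\frac12$, and sums over a perfect matching of $G-v$, whereas you average the inequality $\operatorname{cost}(x)-x(v)\geq\frac12(|V|-1)$ over all $v$; both are fine (your division by $|V|-1$ is legitimate since $|V|\geq 3$). For the integral bound, your \emph{first} finish is essentially the paper's proof, which is stated directly rather than by contradiction: pick any $x\in X$ (nonempty because $G$ has an edge), take a perfect matching $M_x$ of $G-x$ with $\frac12(|V|-1)$ edges, observe that each such edge needs a covering vertex lying in $X\setminus\{x\}$ (since $x$ is not an endpoint of any edge of $M_x$) and that these covering vertices are pairwise distinct, so $|X|\geq 1+\frac12(|V|-1)$. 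Two remarks on your write-up of that finish: the phrase ``all of whose endpoints lie in $X\setminus\{x\}$'' is literally false (the endpoints of $M'$ are all of $V\setminus\{x\}$); what you mean, and what is true, is that each edge of $M'$ contributes a distinct covering vertex from $X\setminus\{x\}$. Also, requiring $x$ to be a neighbour of $v$ plays no role in the count, so the appeal to connectivity of factor-critical graphs is unnecessary.

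The finish you then endorse as ``cleanest'', and which your closing paragraph relies on, has a genuine gap. From the tight configuration you conclude ``$X\subseteq V\setminus\{v'\}$ for every $v'\notin X$, i.e.\ $|V\setminus X|\leq 1$''; but $X\subseteq V\setminus\{v'\}$ is just a restatement of $v'\notin X$ and the step to $|V\setminus X|\leq 1$ does not follow. Likewise, ``the matching edges cover all of $X$'' holds for \emph{any} near-perfect matching of $G$ (it covers all of $V\setminus\{v'\}$) and pins down nothing about $|V\setminus X|$: in $C_5$ with vertex cover $X=\{1,3,4\}$ and $v'=2$, the perfect matching $\{\{3,4\},\{5,1\}\}$ of $C_5-2$ covers all of $X$, yet $|V\setminus X|=2$. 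So that inference pattern cannot carry the proof; the contradiction genuinely has to come from re-applying factor-criticality at a vertex \emph{inside} $X$, i.e., from your first finish, which is exactly the paper's argument.
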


\begin{proof}
Let $X\subseteq V$ be a vertex cover of $G$. Since $G$ has at least three vertices and is factor-critical, it has a maximum matching $M$ of size $\frac12(|V|-1)\geq 1$. It follows that $X$ has size at least one. (This is not true for graphs consisting of a single vertex, which are also factor-critical. All other factor-critical graphs have at least three vertices.) Pick any vertex $v\in X$. Since $G$ is factor-critical, there is a maximum matching $M_v$ of $G-v$ of size $\frac12(|V|-1)$. It follows that $X$ must contain at least one vertex from each edge of $M_v$, and no vertex is contained in two of them. Together with $v$, which is not in any edge of $M_v$, this gives a lower bound of $1+\frac12(|V|-1)=\frac12(|V|+1)$, as claimed.

Let $x\colon V\to \R_{\geq 0}$ be a fractional vertex cover of $G$. We use again the matching $M$ of size at least one from the previous case; let $\{u,v\}\in M$. It follows that $x(u)+x(v)\geq 1$; w.l.o.g. we have $x(v)\geq \frac12$. Let $M_v$ be a maximum matching of $G-v$ of size $\frac12(|V|-1)$. For each edge $\{p,q\}\in M_v$ we have $x(p)+x(q)\geq 1$. Since the matching edges are disjoint we get a lower bound of $\sum_{p\in V\setminus\{v\}} x(p)\geq \frac12(|V|-1)$. Together with $x(v)\geq \frac12$ we get the claimed lower bound of $\frac12|V|$ for the cost of $x$.
\end{proof}

Note that Proposition~\ref{proposition:factorcritical:minvc:minfvc} is tight for example for all odd cycles of length at least three, all of which are factor-critical. We now define tight vertex covers and critical sets.

\begin{definition}[tight vertex covers, critical sets]
Let $G=(V,E)$ be a factor-critical graph with $|V|\geq 3$. A vertex cover $X$ of $G$ is \emph{tight} if $|X|=\frac12(|V|+1)$. Note that this is different from a minimum vertex cover, and a factor-critical graph need not have a tight vertex cover; e.g., odd cliques with at least five vertices are factor-critical but have no tight vertex cover.

A set $Z\subseteq V$ is called a \emph{bad set} of $G$ if there is no tight vertex cover of $G$ that contains $Z$. The set $Z$ is a \emph{critical set} if it is a minimal bad set, i.e., no tight vertex cover of $G$ contains $Z$ but for all proper subsets $Z'$ of $Z$ there is a tight vertex cover containing $Z'$.
\end{definition}

Observe that a factor-critical graph $G=(V,E)$ has no tight vertex cover if and only if $Z=\emptyset$ is a critical set of $G$. It may be interesting to note that a set $X\subseteq V$ of size $\frac12(|V|+1)$ is a vertex cover of $G$ if and only if it contains no critical set. (We will not use this fact and hence leave its two line proof to the reader.) The following lemma proves that all critical sets of a factor-critical graph have size at most three; this is of central importance for our kernelization. For the special case of odd cycles, the lemma has a much shorter proof and we point out that all critical sets of odd cycles have size exactly three.

\begin{lemma}\label{lemma:criticalsets:boundsize}
Let $G=(V,E)$ be a factor-critical graph with at least three vertices. All critical sets $Z$ of $G$ have size at most three.
\end{lemma}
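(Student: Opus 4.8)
The plan is to argue by contradiction: suppose $Z$ is a critical set of $G$ with $|Z| \geq 4$. Pick four distinct vertices $z_1, z_2, z_3, z_4 \in Z$, and let $Z' = Z \setminus \{z_4\}$. Since $Z$ is a minimal bad set, $Z'$ is not bad, so there is a tight vertex cover $X$ of $G$ with $Z' \subseteq X$; in particular $z_4 \notin X$ (otherwise $X$ would witness that $Z$ is not bad). The goal is then to ``repair'' $X$ into another tight vertex cover that also contains $z_4$, contradicting that $Z$ itself is bad. The repair must keep the size at $\frac12(|V|+1)$, so whenever we add $z_4$ we must simultaneously remove some vertex.

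First I would set up the relevant structure around a tight vertex cover. If $X$ is tight, then $|V \setminus X| = \frac12(|V|-1)$, and since $G$ is factor-critical with a near-perfect matching, each vertex not in $X$ must be matched to a distinct vertex in $X$ (the lower-bound argument of Proposition~\ref{proposition:factorcritical:minvc:minfvc} is essentially tight, so equality forces a matching structure): there is a near-perfect matching $M_X$ of $G$ such that $V \setminus X$ is exactly the set of $M_X$-partners lying outside $X$, the unique exposed vertex $w$ of $M_X$ lies in $X$, and every $M_X$-edge has exactly one endpoint in $X$ except the... actually more carefully: $|X| = 1 + \frac12(|V|-1)$, so $X$ consists of the exposed vertex $w$ together with one endpoint from each of the $\frac12(|V|-1)$ matching edges. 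So $X = \{w\} \cup (\text{one endpoint per } M_X\text{-edge})$, and $V \setminus X$ is an independent set (being a vertex cover's complement). The key local move is: for any $M_X$-edge $\{a, b\}$ with $a \in X$, $b \notin X$, we can try to swap $a$ out and $b$ in; this stays a vertex cover iff all neighbors of $a$ other than $b$ are still covered, i.e., iff $N(a) \setminus \{b\} \subseteq X \setminus \{a\}$. More flexibly, I would look at $M_X$-alternating paths starting from $z_4$: since $z_4 \notin X$, it is matched (as $w \in X$), say to $a_1 = M_X(z_4) \in X$; we want to push $z_4$ into $X$ and $a_1$ out, then possibly $M_X(a_1)$'s... but $a_1 \in X$ so its mate is outside — we'd push that one in, its mate out, and so on along an alternating path, flipping which endpoint of each edge is chosen, terminating either at the exposed vertex $w$ or when the flip is "locally safe."

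The heart of the argument is a careful analysis of when such an alternating-path flip fails and what minimal obstruction it exposes. The natural framing: starting from $z_4$, consider the set of vertices reachable from $z_4$ by $M_X$-alternating paths that begin with the matching edge $\{z_4, a_1\}$ and alternate $M_X, \overline{M_X}$; within this reachable structure we can reshuffle the vertex cover (flip the "side" chosen on each edge), and the only thing that can block extending such a reshuffling to include $z_4$ is a bounded local conflict. I expect the technical content is: the failure to produce a tight vertex cover containing $z_4$ (given one containing $Z'$) is witnessed by at most two further vertices among $z_1, z_2, z_3$ that must remain in $X$ and that block the alternating path, so together with $z_4$ the "real" obstruction has size at most three — contradicting minimality of $Z$ since a proper subset of size $\leq 3$ would already be bad. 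One must also use factor-criticality more globally (not just via one matching $M_X$): near-perfect matchings of $G$ and of $G - v$ give enough flexibility to route alternating paths around small obstructions, and an ear-decomposition view of factor-critical graphs may be the cleanest tool — every factor-critical graph has an odd ear decomposition, giving, for any two vertices, an odd path and enough cycle structure to perform the swaps.

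The main obstacle, and where I expect the real work to lie, is precisely the combinatorial case analysis showing the obstruction is bounded by three rather than, say, growing with the length of alternating paths: one has to show that a long alternating path that fails to close up can always be "shortcut" using factor-criticality (another near-perfect matching avoiding the problematic vertex) so that only a constant-size core of the path is genuinely responsible, and then verify that this core sits on $\leq 3$ vertices. For odd cycles this is transparent (an odd cycle has exactly two tight vertex covers' worth of flexibility locally and the only bad sets are the three "alternate-around-the-cycle-skipping-this-arc" triples), which is why the excerpt notes the special case is short; the general factor-critical case requires pushing this intuition through the ear decomposition, and correctly handling how ears attach and how the exposed vertex $w$ and the previously-fixed vertices in $Z'$ interact with the chosen matching.
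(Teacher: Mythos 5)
Your setup is the same as the paper's: assume $|Z|\geq 4$, use minimality of $Z$ to get a tight vertex cover containing all of $Z$ but one chosen vertex, observe that a tight cover together with a suitable near-perfect matching picks exactly one endpoint per matching edge, and then try to flip along $M$-alternating structure to bring the excluded vertex in without increasing the size. But the proposal stops exactly where the real proof begins, and you say so yourself (``I expect the technical content is\dots'', ``where I expect the real work to lie''). The missing content is concrete: fix $w\in Z$ and a perfect matching $M$ of $G-w$, so that every tight cover containing $w$ has exactly one vertex per $M$-edge; then prove two exclusion claims. First, if there is an $M,M$-path between two of the other chosen vertices $x,y$ (or $x,z$, or $y,z$), then already $\{w,x,y\}$ (etc.) is a bad set, contradicting minimality of $Z$. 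Second, if the fourth vertex $z$ reaches \emph{both} endpoints of some $M$-edge by $M,\overline{M}$-paths, then already $\{w,z\}$ is bad. With these two configurations ruled out, the repair is not a single alternating-path flip that might ``fail'' and need shortcutting, but a global reachability flip from $z$: add every vertex reachable from $z$ by an $M,\overline{M}$-path, delete every vertex reachable by an $M,M$-path; the two exclusion claims are precisely what make this well defined (no vertex reachable both ways), size-preserving, still a vertex cover, and what guarantee that $w$, $x$, $y$ survive the flip, yielding a tight cover containing all of $Z$ and the final contradiction.

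Your sketch contains no argument for the bounded-obstruction step: the guess that a failure is ``witnessed by at most two further vertices among $z_1,z_2,z_3$ that block the alternating path'' mislocates the obstruction (it is the existence of certain alternating-path patterns relative to $M$, which yield \emph{smaller bad sets} and hence contradict minimality, not vertices of $Z'$ sitting on the path), and the proposed shortcut via ``another near-perfect matching avoiding the problematic vertex'' or an odd ear decomposition is neither worked out nor needed; the paper's proof uses only the one matching $M$ of $G-w$ and elementary alternating-path case analysis. As it stands the proposal is a correct framing plus an accurate prediction of where the difficulty lies, but the lemma is not proved.
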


\begin{proof}
Let $\ell\in\N$ with $\ell\geq1$ such that $|V|=2\ell+1$; recall that all factor-critical graphs have an odd number of vertices.

Assume for contradiction that there is a critical set $Z$ of $G$ of size at least four. Let $w,x,y,z\in Z$ be any four pairwise different vertices from $Z$. Let $M$ be a maximum matching of $G-w$. Since $G$ is factor-critical, we get that $M$ is a perfect matching of $G-w$ and has size $|M|=\ell$. Observe that any tight vertex cover of $G$ that contains $w$ must contain exactly one vertex from each edge of $M$, since its total size is $\frac12(|V|+1)=\ell+1$. We will first analyze $G$ and show that the presence of certain structures would imply that some proper subset $Z'$ of $Z$ is bad, contradicting the assumption that $Z$ is critical. Afterwards, we will use the absence of these structures to find a tight vertex cover that contains $Z$, contradicting the fact that it is a critical set.

If there is an $M,M$-path from $x$ to $y$ then $\{w,x,y\}$ is a bad set, i.e., no tight vertex cover of $G$ contains all three vertices $w$, $x$, and $y$, contradicting the choice of $Z$: Let $P=(v_1,v_2,\ldots,v_{p-1},v_p)$ denote an $M,M$-path from $v_1=x$ to $v_p=y$. Accordingly, we have $\{v_1,v_2\},\ldots,\{v_{p-1},v_p\}\in M$ and the path $P$ has odd length. Assume that $X$ is a tight vertex cover containing $w$, $x$, and $y$. It follows, since $w\in X$, that $X$ contains exactly one vertex per edge in $M$; in particular it contains exactly one vertex per matching edge on the path $P$. Since $v_1=x\in X$ we have $v_2\notin X$. Thus, as $\{v_2,v_3\}$ is an edge of $G$, we must have $v_3\in X$ to cover this edge; this in turn implies that $v_4\notin X$ since it already contains $v_3$ from the matching edge $\{v_3,v_4\}$. Continuing this argument along the path $P$ we conclude that $v_{p-1}\in X$ and $v_p\notin X$, contradicting the fact that $v_p=y\in X$. Thus, if there is an $M,M$-path from $x$ to $y$ then there is no tight vertex cover of $G$ that contains $w$, $x$, and $y$, making $\{w,x,y\}$ a bad set and contradicting the assumption that $Z$ is a critical set. It follows that there can be no $M,M$-path from $x$ to $y$. The same argument can be applied also to $x$ and $z$, and to $y$ and $z$, ruling out $M,M$-paths connecting them.

Similarly, if there is an edge $\{u,v\}\in M$ such that $z$ reaches both $u$ and $v$ by (different, not necessarily disjoint) $M,\overline{M}$-paths then no tight vertex cover of $G$ contains both $w$ and $z$, contradicting the choice of $Z$: Let $P=(v_1,v_2,\ldots,v_{p-1},v_p)$ denote an $M,\overline{M}$-path from $v_1=z$ to $v_p=u$ with $\{v_1,v_2\},\{v_3,v_4\},\ldots,\{v_{p-2},v_{p-1}\}\in M$. Let $X$ be a tight vertex cover of $G$ that contains $w$ and $z$. It follows (as above) that $v_1,v_3,\ldots,v_{p-2}\in X$ and $v_2,v_4,\ldots,v_{p-1}\notin X$, by considering the induced $M,M$-path from $z=v_1$ to $v_{p-1}$. The fact that $v_{p-1}\notin X$ directly implies that $v_p=u\in X$ in order to cover the edge $\{v_{p-1},v_p\}$. Repeating the same argument on an $M,\overline{M}$-path from $z$ to $v$ we get that $v\in X$. Thus, we conclude that $u$ and $v$ are both in $X$, contradicting the fact that $X$ must contain exactly one vertex of each edge in $X$. Hence, there is no tight vertex cover of $G$ that contains both $w$ and $z$. We conclude that $\{w,z\}$ is a bad set, contradicting the choice of $Z$. Hence, there is no edge $\{u,v\}\in M$ such that $z$ has $M,\overline{M}$-paths (not necessarily disjoint) to both $u$ and $v$.

Now we will complete the proof by using the established properties, i.e., the non-existence of certain $M$-alternating paths starting in $z$, to construct a tight vertex cover of $G$ that contains all of $Z$, giving the final contradiction. Using minimality of $Z$, let $X$ be a tight vertex cover of $G$ that contains $Z\setminus\{z\}$; by choice of $Z$ we have $z\notin X$. We construct the claimed vertex cover $X'\supseteq Z$ from $X'=X$ as follows:
\begin{enumerate}
 \item Add vertex $z$ to $X$ and remove $M(z)$, i.e., remove the vertex that $z$ is matched to.
 \item Add all vertices $v$ to $X'$ that can be reached from $z$ by an $M,\overline{M}$-path.
 \item Remove all vertices from $X'$ that can be reached from $z$ by an $M,M$-path of length at least three. (There is a single such path of length one from $z$ to $M(z)$ which, for clarity, was handled already above.)
\end{enumerate}
We need to check four things: (1) The procedure above is well-defined, i.e., no vertex can be reached by both $M,M$- and $M,\overline{M}$-paths from $z$. (2) The size of $X'$ is at most $|X|=\ell+1$. (3) $X'$ is a vertex cover. (4) The set $X'$ contains $w$, $x$, $y$, and $z$.

(1) Assume that there is a vertex $v$ such that $z$ reaches $v$ both by an $M,M$-path $P=(v_1,v_2,\ldots,v_p)$ with $v_1=z$ and $v_p=v$, and by an $M,\overline{M}$-path $P'$. Observe that $\{v_{p-1},v_p\}\in M$ since $P$ is an $M,M$-path and, hence, that $P''=(v_1,\ldots,v_{p-1})$ is an $M,\overline{M}$-path from $v$ to $v_{p-1}$. Together, $P'$ and $P''$ constitute two $M,\overline{M}$-paths from $z$ to both endpoints $v_{p-1}$ and $v_p$ of the matching edge $\{v_{p-1},v_p\}$; a contradiction (since we ruled out this case earlier).

(2) In the first step, we add $z$ and remove $M(z)$. Note that $z\notin X$ implies that $M(z)\in X$ (we start with $X'=X$). Thus the size of $X'$ does not change. Consider a vertex $v$ that is added in the second step, i.e., with $v\notin X$: There is an $M,\overline{M}$-path $P$ from $z$ to $v$. Since $w\in X$ we know that $v\neq w$. Thus, since $M$ is a perfect matching of $G-w$, there is a vertex $u$ with $u=M(v)$. The vertex $u:=M(v)$ must be in $X$ to cover the edge $\{v,u\}\in M$, as $v\notin X$. Moreover, $u$ cannot be on $P$ since that would make it incident with a second matching edge other than $\{u,v\}$. Thus, by extending $P$ with $\{v,u\}$ we get an $M,M$-path from $z$ to $u$, implying that $u$ is removed in the second step. Since $u\in X$ the total size change is zero. Observe that the vertex $u=M(v)$ used in this argument is not used for any other vertex $v'$ added in the second step since it is only matched to $v$. Similarly, due to (1), the vertex $u$ is not also added in the second step since it cannot be simultaneously have an $M,\overline{M}$-path from $z$.

(3) Assume for contradiction that some edge $\{u,v\}$ is not covered by $X'$, i.e., that $u,v\notin X'$. Since $w\in X'$ is the only unmatched vertex it follows that both $u$ and $v$ are incident with some edge of $M$. We distinguish two cases, namely (a) $\{u,v\}\in M$ and (b) $\{u,v\}\notin M$.

(3.a) If $\{u,v\}\in M$ then without loss of generality assume $u\in X$ (as $X$ is a vertex cover). By our assumption we have $u\notin X'$, which implies that we have removed it on account of having an $M,M$-path $P$ from $z$ to $u$. Since $\{u,v\}\in M$ the path $P$ must visit $v$ as its penultimate vertex; there is no other way for an $M,M$-path to reach $u$. This, however, implies that there is an $M,\overline{M}$-path from $z$ to $v$, and that we have added $v$ in the second step; a contradiction.

(3.b) In this case we have $\{u,v\}\notin M$. Again, without loss of generality, assume that $u\in X$. Since $u\notin X'$ there must be an $M,M$-path $P$ from $z$ to $u$. If $P$ does not contain $v$ then extending $P$ by edge $\{u,v\}\notin M$ would give an $M,\overline{M}$-path from $z$ to $v$ and imply that $v\in X'$; a contradiction. In the remaining case, the vertex $v$ is contained in $P$; let $P'$ denote the induced path from $z$ to $v$ (not containing $u$ as it is the final vertex of $P$). Since $v\notin X'$ we know that $P'$ cannot be an $M,\overline{M}$-path, or else we would have $v\in X'$, and hence it must be an $M,M$-path. Now, however, extending $P'$ via $\{v,u\}\notin M$ yields an $M,\overline{M}$-path from $z$ to $u$, contradicting (1). Altogether, we conclude that $X'$ is indeed a vertex cover.

(4) Clearly, $z\in X'$ by construction. Similarly, $w\in X'$ since it is contained in $X$ and it cannot be removed since there is no incident $M$-edge (i.e., no $M,M$-paths from $z$ can end in $w$). Finally, regarding $x$ and $y$, we proved earlier that there are no $M,M$-paths from $z$ to $x$ or from $z$ to $y$. Thus, since both $x$ and $y$ are in $X$ they must also be contained in $X'$.

We have showed that under the assumption of minimality of $Z$ and using $|Z|\geq 4$ one can construct a vertex cover $X'$ of optimal size $\ell+1$ that contains $Z$ entirely. This contradicts the choice of $Z$ and completes the proof.
\end{proof}


\section{(Nice) relaxed Gallai-Edmonds decomposition}\label{section:nicedecompositions}

The Gallai-Edmonds decomposition of a graph has a number of strong properties and, amongst others, has played a vital role in the FPT-algorithm for Garg and Philip~\cite{GargP16}. It is thus not surprising that we find it rather useful for the claimed kernelization. Unfortunately, in the context of reduction rules, there is the drawback that the Gallai-Edmonds decomposition of a graph and that the graph obtained from the reduction rule might be quite different. (E.g., even deleting entire components of $G[D]$ may ``move'' an arbitrary number of vertices from $A\cup D$ to $B$.) We cope with this problem by defining a relaxed variant of this decomposition. The relaxed form is no longer unique, but when applying certain reduction rules the created graph can effectively inherit the decomposition. 

The definition mainly drops the requirement that $D$ is the set of exposable vertices and instead allows any set $D$ that gives the desired properties. Moreover, instead of a (strong) statement about all maximum matchings of $G$ (cf.\ Definition~\ref{definition:ged}) we simply require that a single maximum matching $M$ with appropriate properties be given along with $V=A\dunion B\dunion D$.

\begin{definition}[relaxed Gallai-Edmonds decomposition]\label{definition:relaxedged}
Let $G=(V,E)$ be a graph. A \emph{relaxed Gallai-Edmonds decomposition of $G$} is a tuple $(A,B,D,M)$ where $V=A\dunion B\dunion D$ and $M$ is a maximum matching of $G$ such that
\begin{enumerate}
 \item $A=N(D)$,
 \item each connected component of $G[D]$ is factor-critical,
 \item $M$ restricted to $B$ is a perfect matching of $G[B]$,
 \item $M$ restricted to any component $C$ of $G[D]$ is a near-perfect matching of $G[C]$, and
 \item each vertex of $A$ is matched by $M$ to a vertex of $D$.
\end{enumerate}
\end{definition}

\begin{observation}
Let $G=(V,E)$ be a graph and let $(A,B,D,M)$ be a relaxed Gallai-Edmonds decomposition of $G$. For each connected component $C$ of $G[D]$ we have $N(C)\subseteq A$. (Note that this is purely a consequence of $N(D)=A$ and $C$ being a connected component of $G[D]$.)
\end{observation}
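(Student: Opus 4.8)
The plan is to prove the containment $N(C)\subseteq A$ directly from the single hypothesis $A=N(D)$ together with the fact that $C$ is a connected component of $G[D]$; as the parenthetical remark in the statement signals, none of the matching-related conditions in Definition~\ref{definition:relaxedged} are needed. First I would fix an arbitrary vertex $u\in N(C)$, which by definition means $u\notin C$ and $u$ is adjacent in $G$ to some vertex $c\in C$. Since $C$ is a connected component of $G[D]$ we have $C\subseteq D$, hence $c\in D$.

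Next I would perform a two-case analysis on whether $u\in D$. If $u\in D$, then $u\in D\setminus C$ (as $u\notin C$), so $u$ and $c$ are adjacent vertices of $G[D]$ lying in distinct connected components of $G[D]$, which is impossible; therefore $u\notin D$. Consequently $u\in V\setminus D$ and $u$ is adjacent to $c\in D$, so by the definition $A=N(D)$ we get $u\in A$. Since $u\in N(C)$ was arbitrary, this yields $N(C)\subseteq A$.

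I do not expect any real obstacle: the statement is a routine unfolding of definitions. The only point worth making explicit in the write-up is the convention that $N(\cdot)$ denotes the open neighborhood of a vertex set (so that $A=N(D)$ is consistent with $A\cap D=\varnothing$), which is exactly what makes the case $u\in D$ vacuous. I would keep the proof to these few lines.
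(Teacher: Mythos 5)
Your proof is correct and matches the paper's intent exactly: the paper gives no separate proof, only the parenthetical remark that the claim follows from $A=N(D)$ and $C$ being a connected component of $G[D]$, which is precisely the two-case unfolding you wrote out.
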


It will be of particular importance for us in what way the matching $M$ of a decomposition $(A,B,D,M)$ of $G$ matches vertices of $A$ to vertices of components of $G[D]$. We introduce appropriate definitions next. In particular, we define sets $\msC$, $\usC$, $\mC$, $\uC$, $A_1$, and $A_3$ that are derived from $(A,B,D,M)$ and $G$ in a well-defined way. Whenever we have a decomposition $(A,B,D,M)$ of $G$ we will use these sets without referring again to this definition. We will use, e.g., $\msC'$ in case where we require these sets for two decomposed graphs $G$ and $G'$.

\begin{definition}[matched/unmatched connected components of {$G[D]$}]\label{definition:mumcomponents}
Let $G=(V,E)$ be a graph and let $(A,B,D,M)$ be a relaxed Gallai-Edmonds decomposition of $G$. We say that a connected component $C$ of $G[D]$ is \emph{matched} if there are vertices $v\in C$ and $u\in N(C)\subseteq A$ such that $\{u,v\}\in M$; we will also say that $u$ and $C$ are matched to one another. Otherwise, we say that $C$ is \emph{unmatched}. Note that edges of $M$ with both ends in $C$ have no influence on whether $C$ is matched or unmatched.

We use $\msC$ and $\usC$ to denote the set of matched and unmatched singleton components in $G[D]$. We use $\mC$ and $\uC$ for matched and unmatched non-singleton components. By $A_1$ and $A_3$ we denote the set of vertices in $A$ that are matched to singleton respectively non-singleton components of $G[D]$; note that $A=A_1\dunion A_3$. We remark that the names $\mC$ and $\uC$ refer to the fact that these components have at least three vertices each as they are factor-critical and non-singleton.
\end{definition}

\begin{observation}
Let $G=(V,E)$ be a graph and let $(A,B,D,M)$ be a relaxed Gallai-Edmonds decomposition of $G$. If a component $C$ of $G[D]$ is matched then there is a unique edge of $M$ that matches a vertex of $C$ with a vertex in $A$. This is a direct consequence of $M$ inducing a near-perfect matching on $G[C]$, i.e., that only a single vertex of $C$ is not matched to another vertex of $C$.
\end{observation}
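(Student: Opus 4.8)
The plan is to read the claim off directly from properties (2) and (4) in the definition of a relaxed Gallai-Edmonds decomposition, with no extra machinery. First I would fix a matched component $C$ of $G[D]$. By property (2), $C$ is factor-critical, so $|C|$ is odd, say $|C|=2s+1$; by property (4), $M$ restricted to $C$ is a near-perfect matching of $G[C]$, hence it consists of exactly $s$ edges and covers exactly $2s$ vertices of $C$, leaving precisely one vertex $c^{*}\in C$ that is not matched by $M$ to another vertex of $C$.

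The key step is then to observe that $c^{*}$ is the only vertex of $C$ that can be incident with an edge of $M$ leaving $C$. Indeed, each of the $2s$ vertices of $C\setminus\{c^{*}\}$ is already matched by $M$ to another vertex of $C$, and since the edges of a matching are pairwise vertex-disjoint, such a vertex lies in no further edge of $M$. Consequently every edge of $M$ with exactly one endpoint in $C$ must have $c^{*}$ as that endpoint, and since $c^{*}$ lies in at most one edge of $M$, there is at most one such edge.

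Finally I would use the hypothesis that $C$ is \emph{matched}: by definition there exist $v\in C$ and $u\in N(C)\subseteq A$ with $\{u,v\}\in M$, which is an edge of $M$ having exactly one endpoint (namely $v$) in $C$. By the previous paragraph this forces $v=c^{*}$ and $\{u,v\}$ to be \emph{the} unique edge of $M$ leaving $C$; its outside endpoint $u$ lies in $A$ (in fact in $N(C)\subseteq A$, using $A=N(D)$ and that $C$ is a connected component of $G[D]$), so it matches a vertex of $C$ with a vertex of $A$, as claimed. I do not expect a genuine obstacle here; the only point needing a moment's care is the interaction between ``$M$ is a matching'' (vertex-disjoint edges) and property (4) (near-perfectness on $C$), which together force all but one vertex of $C$ to be used up inside $C$ — everything else is bookkeeping with the defining properties of the decomposition.
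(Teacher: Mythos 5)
Your argument is correct and matches the paper's own justification, which simply notes that the near-perfect matching of $M$ on $G[C]$ leaves only a single vertex of $C$ unmatched inside $C$, so at most one $M$-edge can leave $C$; your write-up just spells out this bookkeeping. No issues.
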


We now define the notion of a nice relaxed Gallai-Edmonds decomposition (short: nice decomposition), which only requires in addition that there are no unmatched singleton components with respect to decomposition $(A,B,D,M)$ of $G$, i.e., that $\usC=\emptyset$. Not every graph has a nice decomposition, e.g., the independent sets (edgeless graphs) have none. For the moment, we will postpone the question of how to actually find a nice decomposition (and how to ensure that there is one for each considered graph). 

\begin{definition}[nice decomposition]\label{definition:nicedecomposition}
Let $(A,B,D,M)$ be a relaxed Gallai-Edmonds decomposition of a graph $G$. We say that $(A,B,D,M)$ is a \emph{nice relaxed Gallai-Edmonds decomposition} (short a \emph{nice decomposition}) if there are no unmatched singleton components.
\end{definition}

In the following section we will derive several lemmas about how vertex covers of $G$ and a nice decomposition $(A,B,D,M)$ of $G$ interact. For the moment, we will only prove the desired property that certain operations for deriving a graph $G'$ from $G$ allow $G'$ to effectively inherit the nice decomposition of $G$ (and also keep most of the related sets $\msC$ etc.\ the same).

\begin{lemma}\label{lemma:inheritance}
Let $G=(V,E)$ be a graph, let $(A,B,D,M)$ be a relaxed Gallai-Edmonds decomposition, and let $C\in\usC\dunion\uC$ be an unmatched component of $G[D]$. Then $(A,B,D',M')$ is a relaxed Gallai-Edmonds decomposition of $G'=G-C$ where $M'$ is $M$ restricted to $V(G')=V\setminus C$ and where $D':=D\setminus C$. The corresponding sets $A_1$, $A_3$, $\msC$, and $\mC$ are the same as for $G$. The sets $\usC$ and $\uC$ differ only by the removal of component $C$, i.e., $\usC'=\usC\setminus\{C\}$ and $\uC'=\uC\setminus\{C\}$. Moreover, if $(A,B,D,M)$ is a nice decomposition then so is $(A,B,D',M')$.
\end{lemma}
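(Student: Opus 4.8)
The plan is to verify directly that the tuple $(A,B,D',M')$ satisfies the five conditions of Definition~\ref{definition:relaxedged}, then read off the claims about the derived sets, and finally observe that niceness is preserved. The key point throughout is that since $C\in\usC\dunion\uC$ is \emph{unmatched}, no edge of $M$ runs between $C$ and $A$, so deleting $C$ and restricting $M$ does not touch any matching edge incident with $A$ or with $B$ or with any other component of $G[D]$; only the near-perfect matching of $G[C]$ itself is discarded.

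First I would check the five conditions for $G'=G-C$ with $D'=D\setminus C$ and $M'=M|_{V\setminus C}$. For (1), $A=N(D)$ in $G$ and $N(C)\subseteq A$ (by the Observation following Definition~\ref{definition:relaxedged}); since $C$ is unmatched it is in particular non-trivial to worry about whether some vertex of $A$ loses all its neighbors in $D$ — but every vertex of $A$ is matched by $M$ to a vertex of $D$ (condition~5 for $G$), and that mate lies in some component $C'\neq C$ because $C$ is unmatched, so $C'\subseteq D'$ and that vertex of $A$ still has a neighbor in $D'$; hence $N_{G'}(D')=A$. For (2), the connected components of $G'[D']$ are exactly the components of $G[D]$ other than $C$, which are factor-critical by assumption. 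Conditions (3), (4), (5) are immediate: $M'$ restricted to $B$ equals $M$ restricted to $B$ (a perfect matching of $G[B]$), $M'$ restricted to any surviving component $C'$ equals $M$ restricted to $C'$ (a near-perfect matching), and every vertex of $A$ is still matched by $M'$ to its old mate in $D'$. I should also note $M'$ is a maximum matching of $G'$: it is a matching, and $|M'|=|M|-\frac12(|C|-1)$ while deleting the factor-critical component $C$ drops the matching number of the graph by exactly $\frac12(|C|-1)$ (its near-perfect matching size), so no larger matching of $G'$ exists.

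Next I would identify the derived sets. Since the partition of $A$ into $A_1\dunion A_3$ is determined by whether a vertex of $A$ is matched to a singleton or non-singleton component, and $M'$ matches every vertex of $A$ to the same component it was matched to under $M$ (none of which is $C$), we get $A_1'=A_1$ and $A_3'=A_3$. The matched components of $G'[D']$ are precisely the matched components of $G[D]$ (none of which is $C$), so $\msC'=\msC$ and $\mC'=\mC$; the unmatched components are the unmatched components of $G[D]$ minus $C$, giving $\usC'=\usC\setminus\{C\}$ and $\uC'=\uC\setminus\{C\}$, exactly one of which actually changes depending on whether $|C|=1$. Finally, if $(A,B,D,M)$ is nice then $\usC=\emptyset$, which forces $C\in\uC$ and $\usC'=\usC\setminus\{C\}=\emptyset$, so $(A,B,D',M')$ is nice as well.

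The main obstacle — really the only thing requiring care rather than bookkeeping — is condition (1), i.e. making sure that removing $C$ does not strip some vertex of $A$ of all its $D$-neighbors and thereby shrink $N(D')$ below $A$; the argument above resolves this using conditions (5) and the fact that $C$ is unmatched. Everything else is a routine restriction argument, and the maximality of $M'$ follows from the factor-criticality of $C$.
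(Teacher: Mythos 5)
Your proposal is correct and follows essentially the same route as the paper: verify the conditions of Definition~\ref{definition:relaxedged} directly, establish maximality of $M'$ by combining a matching of $G'$ with the near-perfect matching of $C$, use the fact that $C$ is unmatched so every vertex of $A$ keeps its $M$-mate in $D'$ (giving $A\subseteq N_{G'}(D')$), and then read off the derived sets and niceness. No gaps.
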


\begin{proof}
Clearly, $A\dunion B\dunion D'$ is a partition of the vertex set of $G'$. Next, let us prove first that $M'$ is a maximum matching of $G'$: To get $M'$ we delete the edges of $M$ in $C$ and we delete all vertices in $C$. Thus, any matching of $G'$ that is larger than $M'$ could be extended to matching larger than $M$ for $G$ by adding the edges of $M$ on vertices of $C$. Now, we consider the connected components of $G'[D']$: We deleted the entire component $C$ of $G[D]$ to get $G'=G-C$. It follows that the connected components of $G'[D']$ are the same except for the absence of $C$, and they are factor-critical since that holds for all components of $G[D]$. Moreover, for any component $C'$ of $G[D']$ the set $M'$ induces a near-perfect matching, as it is the restriction of $M$ to $G-C$. Similarly, since $B\cap C=\emptyset$ the set $M'$ induces a perfect matching on $G[B]$. In the same way, if $\{u,v\}\in M$ where $u\in A$ and $v\in C'$ where $C'$ is a connected component of $G[D]$ other than $C$ then $u$ is also matched to a component of $G'[D']$ in $G'$, namely to $C'$. It follows that $A\subseteq N_{G'}(D')$ using that $C$ is unmatched. The inverse inclusion follows since $N_G(D)=A$ and we did not make additional vertices adjacent to $D'$. Thus, $N_{G'}(A)=D'$. Thus, $(A,B,D',M')$ is a relaxed Gallai-Edmonds decomposition.

Let us now check that the sets $A_1$, $A_3$, etc.\ are almost the same: We already saw that matching edges between vertices of $A$ and components of $G[D]$ persist in $G'$. It follows that $A'_1=A_1$ and $A'_3=A_3$, and that $\mC'=\mC$ and $\msC'=\msC$. If $C\in\usC$ then we get $\usC'=\usC\setminus\{C\}$; else we get $\usC'=\usC=\usC\setminus\{C\}$. Similarly, $\uC'=\uC\setminus\{C\}$. This completes the main statement of the lemma. The moreover part follows because $(A,B,D,M)$ being a nice decomposition of $G$ implies $\usC=\emptyset$, which yields $\usC'=\emptyset$ and, hence, that $(A,B,D',M')$ is a nice decomposition of $G'$.
\end{proof}


\section{Nice decompositions and vertex covers}\label{section:nicedecompositionsandvertxcovers}

In this section we study the relation of vertex covers $X$ of a graph $G$ and any nice decomposition $(A,B,D,M)$ of $G$. As a first step, we prove a lower bound of $|M|+|\uC|$ on the size of vertex covers of $G$; this bound holds also for relaxed Gallai-Edmonds decompositions. Additionally, we show that $|M|+|\uC|=2LP(G)-MM(G)$, if $(A,B,D,M)$ is a nice decomposition. Note that Garg and Philip~\cite{GargP16} proved that $2LP(G)-MM(G)$ is a lower bound for the vertex cover size for every graph $G$, but we require the bound of $|M|+|\uC|$ related to our decompositions, and the equality to $2LP(G)-MM(G)$ serves ``only'' to later relate to the parameter value $\ell=k-(2LP(G)-MM(G))$.

\begin{lemma}\label{lemma:nice:vclb}
Let $G=(V,E)$ be a graph and let $(A,B,D,M)$ be a nice decomposition of $G$. Each vertex cover of $G$ has size at least $|M|+|\uC|=2LP(G)-MM(G)$.
\end{lemma}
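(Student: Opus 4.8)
The plan is to prove the two parts separately: first the combinatorial lower bound $|X|\geq|M|+|\uC|$ for every vertex cover $X$, then the numerical identity $|M|+|\uC|=2LP(G)-MM(G)$.

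For the lower bound, I would fix a vertex cover $X$ of $G$ and exhibit $|M|+|\uC|$ "units" that $X$ must pay for, in a way that double-counts nothing. The matching $M$ already forces $|X|\geq|M|$, since $X$ contains at least one endpoint of each of the $|M|$ disjoint edges of $M$. The point is to find, on top of this, one extra vertex in $X$ for each of the $|\uC|$ unmatched non-singleton components (recall that in a nice decomposition $\usC=\emptyset$, so $\uC$ is all of the unmatched components). Fix such a component $C\in\uC$. By property~2 of a relaxed Gallai-Edmonds decomposition $C$ is factor-critical with $|C|\geq3$, so by Proposition~\ref{proposition:factorcritical:minvc:minfvc} we have $|X\cap C|\geq\frac12(|C|+1)$. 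On the other hand $M$ restricted to $C$ is a near-perfect matching of $C$ (property~4), contributing exactly $\frac12(|C|-1)$ edges of $M$ lying inside $C$; and since $C$ is unmatched, no edge of $M$ joins $C$ to $A$ (and $N(C)\subseteq A$, so no $M$-edge leaves $C$ at all). Thus the $\frac12(|C|-1)$ lower-bound units contributed by $M$-edges inside $C$ are strictly fewer than the $\frac12(|C|+1)$ vertices $X$ must place in $C$ — a surplus of exactly one. Summing: the $M$-edges inside unmatched components, the $M$-edges inside $B$, and the $M$-edges incident to $A$ are pairwise vertex-disjoint and each forces one vertex of $X$, giving $|M|$; and additionally each $C\in\uC$ forces one further vertex of $X\cap C$ beyond what its internal $M$-edges account for. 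Since the extra vertices for distinct components lie in distinct components, they are all distinct and distinct from the vertices charged to $M$-edges outside those components; hence $|X|\geq|M|+|\uC|$.

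For the identity, I would use the standard LP-preprocessing fact recorded earlier in the introduction, namely that after LP preprocessing (which the setup of a nice decomposition implicitly assumes) $LP(G)=\frac12|V|$; equivalently, the all-$\frac12$ assignment is an optimal fractional vertex cover, which holds precisely when $G$ has no ``light'' side. Then $2LP(G)=|V|$. Also $MM(G)=|M|$ since $M$ is a maximum matching. Now count $|V|$ via the decomposition: $G[B]$ is perfectly matched by $M$, so $|B|=2|M_B|$ where $M_B$ is the part of $M$ inside $B$; each matched component $C$ (singleton or not) together with its mate in $A$ accounts for $|C|+1$ vertices and $\frac12(|C|-1)+1=\frac12(|C|+1)$ edges of $M$, i.e. $|C|+1=2\cdot(\text{its }M\text{-edges})$; and each unmatched component $C\in\uC$ accounts for $|C|$ vertices and $\frac12(|C|-1)$ edges of $M$, i.e. $|C|=2\cdot(\text{its }M\text{-edges})+1$. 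Summing over all parts, $|V|=2|M|+|\uC|$, hence $2LP(G)-MM(G)=|V|-|M|=|M|+|\uC|$, as claimed.

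The main obstacle, and the part needing the most care, is making the charging argument for the lower bound fully rigorous — in particular confirming that the ``extra'' vertex charged to each unmatched component is genuinely not also being charged as an endpoint of some $M$-edge counted toward the $|M|$ term. This works because all $M$-edges incident to an unmatched component $C$ lie entirely inside $C$ (no $M$-edge leaves $C$, since $C$ is unmatched and $N(C)\subseteq A$), so within $C$ we are simply comparing $\frac12(|C|-1)$ internal $M$-edges against $\frac12(|C|+1)$ required cover vertices and reading off the surplus of one, entirely locally; the bookkeeping across components then just adds up. A secondary point to state cleanly is the justification for $LP(G)=\frac12|V|$ in the present (post-preprocessing) setting; if one prefers to avoid invoking preprocessing, the identity $2LP(G)-MM(G)=|V|-|M|$ can instead be derived directly from $|V|=2|M|+|\uC|$ together with the known fact (Garg–Philip) that $2LP(G)-MM(G)$ is the relevant bound, but invoking $LP(G)=\frac12|V|$ is the cleanest route.
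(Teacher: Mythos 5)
Your first half --- the charging argument giving $|X|\geq|M|+|\uC|$ --- is correct and is essentially the paper's own argument: one endpoint of $X$ per $M$-edge outside the components of \uC, plus Proposition~\ref{proposition:factorcritical:minvc:minfvc} forcing $\frac12(|C|+1)$ vertices of $X$ in each $C\in\uC$ against only $\frac12(|C|-1)$ internal $M$-edges, with the locality of the charge guaranteed because no $M$-edge leaves an unmatched component. Your count $|V|=2|M|+|\uC|$ is also right and is how the paper concludes.

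The gap is in the identity $|M|+|\uC|=2LP(G)-MM(G)$: you justify $LP(G)=\frac12|V|$ by saying that the setup of a nice decomposition ``implicitly assumes'' LP-preprocessing. It does not: Definition~\ref{definition:nicedecomposition} places no condition on $LP(G)$, and the lemma is later invoked (in the proof of Lemma~\ref{lemma:removeirrelevantcomponents}) for the reduced graph $G'$, for which $LP(G')=\frac12|V(G')|$ is not given in advance --- it is precisely what this lemma is used to supply. So $LP(G)=\frac12|V|$ must be derived from the decomposition itself. Your proposed fallback via the Garg--Philip bound does not close this hole either: knowing that $2LP(G)-MM(G)$ lower-bounds $VC(G)$ says nothing about the numerical value of $2LP(G)-MM(G)$, so it cannot convert $|V|=2|M|+|\uC|$ into the claimed identity. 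The missing step is short and is exactly the fractional analogue of your own charging argument, which is what the paper does: any fractional cover $x$ satisfies $x(u)+x(v)\geq 1$ on each $M$-edge not inside a component of \uC, and by Proposition~\ref{proposition:factorcritical:minvc:minfvc} satisfies $\sum_{v\in C}x(v)\geq\frac12|C|$ on each factor-critical $C\in\uC$; niceness ($\usC=\emptyset$) ensures every vertex of $V$ is accounted for by exactly one of these charges, so $LP(G)\geq\frac12|V|$, and the all-$\frac12$ assignment gives equality. With that argument inserted in place of the appeal to preprocessing, your proof is complete and matches the paper's.
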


\begin{proof}
Let $X$ be any vertex cover of $G$. For each edge of $M$ that is not in a component of $|\uC|$ the set $X$ contains at least one endpoint of $M$. For each component $C\in\uC$ the set $X$ contains at least $\frac12(|C|+1)$ vertices of $C$ by Proposition~\ref{proposition:factorcritical:minvc:minfvc}, since $C$ is also a component of $G[D]$ and all those components are factor-critical. Since $M$ contains a near-perfect matching of $C$, i.e., of cardinality $\frac12(|C|-1)$, the at least $\frac12(|C|+1)$ vertices of $C$ in $X$ can also be counted as one vertex per matching edge in $G[C]$ plus one additional vertex. Overall, the set $X$ contains at least $|M|+|\uC|$ vertices.

We now prove that $|M|+|\uC|=2LP(G)-MM(G)$; first we show that $LP(G)=\frac12|V|$. Let $x\colon V\to\{0,\frac12,1\}$ be a fractional vertex cover of $V$. For each edge $\{u,v\}\in M$ that is not in a component of $\uC$ we have $x(u)+x(v)\geq 1$, in other words, $\frac12$ per vertex of the matching edge. For each $C\in \uC$ we have $\sum_{v\in C} x(v)\geq\frac12|C|$ by Proposition~\ref{proposition:factorcritical:minvc:minfvc} since $G[C]$ is factor-critical; again this equals $\frac12$ per vertex (of $C$). Using that $(A,B,D,M)$ is a nice decomposition, we can show that these considerations yield a lower bound of $\frac12(|V|)$; it suffices to check that all vertices have been considered: Components in $\msC\cup\mC$ are fully matched, all vertices in $A$ are matched (to components in $\msC\cup\mC$), and $M$ restricts to a perfect matching of $G[B]$; all these vertices contribute $\frac12$ each since they are in an edge of $M$ that is not in a component of $\uC$. All remaining vertices are in components of $\uC$ since $\usC=\emptyset$; these vertices contribute $\frac12$ per vertex by being in some component $C\in\usC$ that contributes $\frac12|C|$. Overall we get that the $x$ has cost at least $\frac12|V|$ and, hence, that $LP(G)\geq\frac12|V|$. Since $x(v)\equiv\frac12$ is a feasible fractional vertex cover for every graph, we conclude that $LP(G)=\frac12|V|$.

Now, let us consider $MM(G)$: Note that $MM(G)=|M|$ as $M$ is a maximum matching of $G$. Since $\usC=\emptyset$, we know that the only exposed vertices (w.r.t.~$M$) are in components of $\uC$; exactly one vertex per component. Thus, $|V|=2|M|+|\uC|$, which implies
\[
2LP(G)-MM(G)=|V|-|M|=2|M|+|\uC|-|M|=|M|+|\uC|.
\]
This completes the proof.
\end{proof}

Intuitively, if the size of a vertex cover $X$ is close to the lower bound of $|M|+|\uC|$ then, apart from few exceptions (at most as many as the excess over the lower bound), it contains exactly one vertex per matching edge and exactly $\frac12(|C|+1)$ vertices per component $C\in\uC$, i.e., it induces a tight vertex cover on all but few components $C\in\uC$.

Our analysis of vertex covers $X$ in relation to a fixed nice decomposition will focus on those parts of the graph where $X$ exceeds the number of one vertex per matching edge respectively $\frac12(|C|+1)$ vertices per (unmatched, non-singleton) component $C\in\uC$. To this end, we introduce the terms \emph{active component} and a set $\opX\subseteq X$, which essentially capture the places where $X$ ``overpays'', i.e., where it locally exceeds the lower bound.

\begin{definition}[active component]\label{definition:activecomponent}
Let $G=(V,E)$ be a graph, let $(A,B,D,M)$ be a nice decomposition of $G$, and let $X$ be a vertex cover of $G$. A component $C\in\uC$ is \emph{active (with respect to $X$)} if $X$ contains more than $\frac12(|C|+1)$ vertices of $C$, i.e., if $X\cap C$ is not a tight vertex cover of $G[C]$. 
\end{definition}

\begin{definition}[set $\opX$]\label{definition:setxop}
Let $G=(V,E)$ be a graph and let $(A,B,D,M)$ be a nice decomposition of $G$. For $X\subseteq V$ define \emph{$\opX=\opX(A_1,A_3,M,X)\subseteq A\cap X$} to contain all vertices $v$ that fulfill either of the following two conditions:
\begin{enumerate}
 \item $v\in A_1$ and $X$ contains both $v$ and $M(v)$.
 \item $v\in A_3$ and $X$ contains $v$.\label{definition:setxop:condition2}
\end{enumerate}
\end{definition}

Both conditions of Definition~\ref{definition:setxop} capture parts of the graph where $X$ contains more vertices than implied by the lower bound. To see this for the second condition, note that if $v\in A_3\cap X$ then $X$ still needs at least $\frac12(|C|+1)$ vertices of the component $C\in\mC$ that $v$ is matched to; since there are $\frac12(|C|+1)$ matching edges that $M$ has between vertices of $C\cup\{v\}$ we find that $X$ (locally) exceeds the lower bound, as $|X\cap(C\cup\{v\})|\geq 1+\frac12(|C|+1)$. Conversely, if $X$ does match the lower bound on $C\cup\{v\}$ then it cannot contain $v$.

We now prove formally that a vertex cover $X$ of size close to the lower bound of Lemma~\ref{lemma:nice:vclb} has only few active components and only a small set $\opX\subseteq X$.

\begin{lemma}\label{lemma:nice:boundxh:boundac}
Let $G=(V,E)$ be a graph, let $(A,B,D,M)$ be a nice decomposition of $G$, let $X$ be a vertex cover of $G$, and let $\opX=\opX(A_1,A_3,M,X)$. The set $\opX$ has size at most $\ell$ and there are at most $\ell$ active components in $\uC$ with respect to $X$ where $\ell=|X|-(|M|+|\uC|)=|X|-(2LP(G)-MM(G))$.
\end{lemma}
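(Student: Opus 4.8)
The plan is to account for the ``excess'' $\ell = |X| - (|M|+|\uC|)$ by a double-counting argument over matching edges and unmatched components, showing that every vertex of $\opX$ and every active component forces at least one extra unit of this excess, and that these extra units are charged to disjoint portions of $X$. First I would partition the edges of $M$ into three groups: (i) edges with both endpoints inside a single component $C\in\uC$, (ii) edges with both endpoints inside a single component $C\in\mC$, or the edge matching a vertex $v\in A_3$ to its component, and (iii) all remaining $M$-edges (edges inside $G[B]$, edges inside components of $\mC$ not covered above together with the $A_1$-matching edges, i.e.\ effectively: $M$-edges meeting $A_1\cup B\cup\msC$ or lying inside $\mC$). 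Actually the cleanest split is: for each $C\in\uC$ consider the vertex set $C$; for each $v\in A_3$ consider $C\cup\{v\}$ where $C\in\mC$ is its mate; for each $v\in A_1$ consider $\{v,M(v)\}$; for each $M$-edge inside $B$ or inside some $C\in\mC$ (but disjoint from the $A_3$-vertex handled above — here one must be slightly careful, but since $M$ restricted to $C$ is near-perfect, exactly one vertex of $C$ is matched outside, so the remaining $\frac12(|C|-1)$ edges lie wholly in $C$) consider that edge's two endpoints; these vertex sets partition $V$ exactly (this is essentially the computation $|V| = 2|M| + |\uC|$ from Lemma~\ref{lemma:nice:vclb}).

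Next I would lower-bound $|X|$ locally on each block of this partition. On a block that is a single $M$-edge, $X$ contains $\geq 1$ vertex. On a block $C\in\uC$, $X$ contains $\geq \frac12(|C|+1)$ vertices by Proposition~\ref{proposition:factorcritical:minvc:minfvc}. On a block $C\cup\{v\}$ with $v\in A_3$, $C\in\mC$, $X$ contains $\geq \frac12(|C|+1)$ vertices just to cover $G[C]$. Summing these ``baseline'' bounds over all blocks reproduces exactly $|M| + |\uC|$, because a block consisting of a single $M$-edge contributes $1$, a block $C\in\uC$ contributes $\frac12(|C|+1) = \frac12(|C|-1)+1$ which counts its $\frac12(|C|-1)$ internal $M$-edges plus one unit, and a block $C\cup\{v\}$ with $v\in A_3$ contributes $\frac12(|C|+1)$ which is exactly the number of $M$-edges inside $C\cup\{v\}$. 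So $\ell = |X| - (|M|+|\uC|)$ equals the sum over all blocks of the local surplus of $|X \cap \text{block}|$ over its baseline.

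Then I would observe that each vertex of $\opX$ and each active component contributes at least $1$ to this surplus, and — crucially — they are charged to pairwise distinct blocks. Indeed: if $C\in\uC$ is active, then by Definition~\ref{definition:activecomponent} $X$ contains more than $\frac12(|C|+1)$ vertices of $C$, so block $C$ has surplus $\geq 1$. If $v\in A_1 \cap \opX$, then $X$ contains both $v$ and $M(v)$, so the block $\{v,M(v)\}$ has $|X\cap\text{block}| = 2$, surplus $\geq 1$. If $v\in A_3\cap\opX$, then $X$ contains $v$ in addition to $\geq\frac12(|C|+1)$ vertices of its mate $C\in\mC$, so the block $C\cup\{v\}$ has surplus $\geq 1$. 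These blocks are all distinct: distinct active components are distinct blocks; distinct $A_1$-vertices give distinct $M$-edges; distinct $A_3$-vertices give distinct components of $\mC$ hence distinct blocks; and the three block-types ($\uC$-blocks, $A_1$-edges, $A_3$-blocks) are disjoint by construction. Hence the total surplus is at least $|\opX| + (\text{number of active components})$, and since the total surplus equals $\ell$ and all surplus terms are nonnegative, we get both $|\opX| \le \ell$ and (number of active components) $\le \ell$. The equality $\ell = |X| - (2LP(G)-MM(G))$ is immediate from $|M|+|\uC| = 2LP(G)-MM(G)$ in Lemma~\ref{lemma:nice:vclb}.

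I expect the main obstacle to be purely bookkeeping: verifying carefully that the proposed collection of vertex sets is genuinely a partition of $V$ — in particular that the $\frac12(|C|-1)$ internal $M$-edges of a matched component $C\in\mC$ are disjoint from the special vertex of $C$ matched into $A_3$, and that every vertex of $B$, every singleton matched component, and every vertex of $A_1$ is covered exactly once — and that the baseline bounds sum to exactly $|M|+|\uC|$ rather than something off by a constant per component. Once the partition and the arithmetic identity $|V| = 2|M| + |\uC|$ are pinned down, the charging argument is routine.
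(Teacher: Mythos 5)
Your proof is correct and is essentially the paper's own argument: the paper, too, charges one extra unit of $|X|-(|M|+|\uC|)$ to each vertex of $\opX$ and to each active component via the same local lower bounds (one vertex of $X$ per $M$-edge, $\frac12(|C|+1)$ vertices per factor-critical component, an extra $+1$ whenever $X$ also contains the matched $A$-vertex or both endpoints of an $A_1$-edge), merely phrased as two proofs by contradiction instead of your direct surplus count, which in fact yields the slightly stronger combined bound $|\opX|+\#\{\text{active components}\}\leq\ell$. The only slip is in your list of blocks: the internal $M$-edges of components $C\in\mC$ must not appear as separate single-edge blocks, since their vertices are already covered by the blocks $C\cup\{v\}$ with $v\in A_3$ (the leftover single-edge blocks are exactly the $M$-edges inside $B$ together with the edges $\{v,M(v)\}$ for $v\in A_1$); with that correction your collection is genuinely a partition of $V$ and the baseline arithmetic you state is exactly right.
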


\begin{proof}
By Lemma~\ref{lemma:nice:vclb} we have that $X$ has size at least $|M|+|\uC|=2LP(G)-MM(G)$. Let $\ell=|X|-(|M|+|\uC|)$.

Assume first that $|\opX|>\ell$. Let $\overline{M}\subseteq M$ denote the matching edges between a vertex of $A_1$ and the vertex of a (matched) singleton component from $\msC$ that $X$ contains both endpoints of. Let $\overline{A}_3:=A_3\cap X$. By definition of $\opX$ we get that $|\opX|=|\overline{M}|+|\overline{A}_3|$. For $u\in \overline{A}_3$ consider the component $C_u\in\mC$ with $\{u,v\}\in M$ and $v\in C_u$. Observe that $C_u$ is factor-critical and has at least three vertices, which implies that $X$ needs to contain at least $\frac12(|C_u|+1)$ vertices of $C_u$ (Proposition~\ref{proposition:factorcritical:minvc:minfvc}). Note that, $M$ contains exactly $\frac12(|C_u|+1)$ matching edges between vertices of $C_u\cup\{u\}$, but $X$ contains at least $\frac12(|C_u|)+1)+1$ vertices of $C\cup\{u\}$. 

Observe that the arguments of Lemma~\ref{lemma:nice:vclb} still apply. That is, for each component $C\in\uC$ the set $X$ contains at least $\frac12(|C|+1)$ of its vertices, and for all matching edges not in such a component we know that it contains at least one of its endpoints. Summing this up as in Lemma~\ref{lemma:nice:vclb} yields the lower bound of $|M|+|\uC|=2LP(G)-MM(G)$. Now, however, for each edge of $\overline{M}$ we get an extra $+1$ in the bound, and the same is true for each vertex $u\in \overline{A}_3$ since $X$ contains at least $\frac12(|C_u|)+1)+1$ vertices of $C\cup\{u\}$, which is one more than the number of matching edges on these vertices. Thus, the size of $X$ is at least $2LP(G)-MM(G)+|\overline{M}|+|\overline{A_3}|>2LP(G)-MM(G)+\ell=|X|$; a contradiction.

Assume now that there are more than $\ell$ active components. We can apply the same accounting argument as before since $X$ needs to independently contain at least one vertex per matching edge and at least $\frac12(|C|+1)$ vertices per component $C\in\mC$. Having more than $\ell$ active components, i.e., more than $\ell$ components of $C$ where $X$ has more than $\frac12(|C|+1)$ vertices would then give a lower bound of $|X|> 2LP(G)-MM(G)+\ell=|X|$; a contradiction.
\end{proof}

The central question is of course how the different structures where $X$ exceeds the lower bound interact. We are only interested in aspects that are responsible for not allowing a tight vertex cover for any (unmatched, non-singleton) components $C\in\uC$. This happens exactly due to vertices in $A$ that are adjacent to $C$ and that are not selected by $X$. Between components of $G[B]$ and non-singleton components of $G[D]$ there are $M$-alternating paths with vertices alternatingly from $A$ and from singleton components of $G[D]$ since vertices in $A$ are all matched to $D$ and singleton components in $G[D]$ have all their neighbors in $A$. Unless $X$ contains both vertices of a matching edge, it contains the $A$- or the $D$-vertices of such a path. Unmatched components of $G[D]$ and components of $G[B]$ have all neighbors in $A$. Matched components $C$ in $G[D]$ with matched neighbor $v\in A$ enforce not selecting $v$ for $X$ unless $X$ spends more than the lower bound; in this way, they lead to selection of $D$-vertices on $M$-alternating paths. Intuitively, this leads to two ``factions'' that favor either $A$- or $D$-vertices and that are effectively separated when $X$ selects both $A$- and $D$-endpoint of a matching edge. An optimal solution need not separate all neighbors in $A$ of any component $C\in\uC$, and $C$ may still have a tight vertex cover or paying for a larger cover of $C$ is overall beneficial.
The following auxiliary directed graph $H$ captures this situation and for certain vertex covers $X$ reachability of $v\in A$ in $H-\opX$ will be proved to be equivalent with $v\notin X$.

\begin{definition}[auxiliary directed graph $H$]\label{definition:graphh}
Let $G=(V,E)$ be a graph and let $(A,B,D,M)$ be a nice decomposition of $G$. Define a \emph{directed graph $H=H(G,A,B,D,M)$} on vertex set $A$ by letting $(u,v)$ be a directed edge of $H$, for $u,v\in A$, whenever there is a vertex $w\in D$ with $\{u,w\}\in E\setminus M$ and $\{w,v\}\in M$.
\end{definition}

The first relation between $G$, with decomposition~$(A,B,D,M)$, and the corresponding directed graph $H=H(G,A,B,D,M)$ is straightforward: It shows how inclusion and exclusion of vertices in a vertex cover work along an $M$-alternating path, when $X$ contains exactly one vertex per edge. We will later prove a natural complement of this lemma, but it involves significantly more work and does not hold for all vertex covers.

\begin{lemma}\label{lemma:reachable}
Let $G=(V,E)$ be a graph, let $(A,B,D,M)$ be a nice decomposition of $G$, and let $X$ be a vertex cover of $G$. Let $H=H(G,A,B,D,M)$ and $\opX=\opX(A_1,A_3,M,X)$. If $v\in A$ is reachable from $A_3$ in $H-\opX$ then $X$ does not contain $v$.
\end{lemma}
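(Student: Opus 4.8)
The plan is to prove the statement by induction on the length of a shortest $A_3$-to-$v$ path in $H - \opX$. The base case is when $v \in A_3$ itself (path of length zero): by condition~\ref{definition:setxop:condition2} of Definition~\ref{definition:setxop}, if $v \in A_3 \cap X$ then $v \in \opX$, so $v$ is not a vertex of $H - \opX$ at all, hence not reachable there. For the inductive step, suppose $v$ is reachable from $A_3$ in $H - \opX$ via a shortest path whose penultimate vertex is $u \in A$, so $(u,v)$ is an edge of $H$ and $u$ is reachable from $A_3$ in $H - \opX$ by a strictly shorter path; by the induction hypothesis $u \notin X$. By Definition~\ref{definition:graphh} there is a vertex $w \in D$ with $\{u,w\} \in E \setminus M$ and $\{w,v\} \in M$.

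**Key deductions in the inductive step.** Since $u \notin X$ and $\{u,w\} \in E$ is an edge of $G$, the vertex cover $X$ must contain $w$. Now $\{w,v\} \in M$, so $v = M(w)$ and $w = M(v)$. I must argue $v \notin X$. Suppose for contradiction that $v \in X$; then $X$ contains both endpoints $w = M(v)$ and $v$ of the matching edge $\{w,v\} \in M$. The case analysis turns on which component of $G[D]$ the vertex $w$ lies in. Since $w \in D$ and $\{w,v\} \in M$ with $v \in A$, the edge $\{w,v\}$ witnesses that $v$ is matched to the component $C$ of $G[D]$ containing $w$; so $v \in A_1$ if $C$ is a singleton and $v \in A_3$ otherwise. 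If $v \in A_1$, then $X$ containing both $v$ and $M(v) = w$ means $v$ satisfies condition~1 of Definition~\ref{definition:setxop}, so $v \in \opX$, contradicting that $v$ is a vertex of $H - \opX$. If $v \in A_3$, then $v \in A_3 \cap X$ triggers condition~\ref{definition:setxop:condition2}, again giving $v \in \opX$, the same contradiction. Hence $v \notin X$, completing the induction.

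**Main obstacle.** The bookkeeping is light here, so the only real point of care is confirming that $v$ being matched to $w$'s component places $v$ in $A_1$ or $A_3$ exactly as the two cases of Definition~\ref{definition:setxop} require — i.e., that there is no ``third'' way for $X$ to contain both endpoints of an $M$-edge incident to $A$ without $v$ landing in $\opX$. This is handled by the observation following Definition~\ref{definition:mumcomponents}: the unique $M$-edge joining $C$ to $A$ is precisely $\{w,v\}$, and $A = A_1 \dunion A_3$, so $v$ falls into exactly one of the two categories and the corresponding condition of Definition~\ref{definition:setxop} applies. I expect no difficulty beyond making this dichotomy explicit; everything else is a direct consequence of $X$ being a vertex cover and the definitions of $H$ and $\opX$.
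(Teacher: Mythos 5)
Your proof is correct, and it is essentially the paper's argument recast as an induction on the distance from $A_3$ in $H-\opX$: the paper explicitly iterates along the $\overline{M},M$-path in $G$ obtained from the $H$-path, using exactly the same two ingredients you use at each step (the $D$-witness $w$ of an $H$-edge forces $w\in X$ once the predecessor is excluded, and the two cases of Definition~\ref{definition:setxop} forbid $X$ from containing $v$ together with $M(v)=w$ when $v\notin\opX$). No gaps; the dichotomy $A=A_1\dunion A_3$ that you flag as the only point of care is indeed all that is needed, just as in the paper.
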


\begin{proof}
Let $P_H=(v_1,\ldots,v_p)$ be a directed path in $H-\opX$ from some vertex $v_1\in A_3\subseteq A$ to $v_p=v\in A$, and with $v_1,\ldots,v_p\in A\setminus \opX=V(H-\opX)$. By construction of $H$, for each edge $(v_i,v_{i+1})$ with $i\in[p-1]$ there is a vertex $u_i\in D$ with $\{v_i,u_i\}\in E\setminus M$ and $\{u_i,v_{i+1}\}\in M$. Since $M$ is a matching, all vertices $u_i$ are pairwise different and none of them are in $P_H$ as $u_i\in D$ and $A\cap D=\emptyset$. It follows that there is a path
\[
P=(v_1,u_1,v_2,u_2,v_3,\ldots,v_{p-1},u_{p-1},v_p)
\]
in $G$ where $\{v_i,u_i\}\in E\setminus M$ and $\{u_i,v_{i+1}\}\in M$ for $i\in[p-1]$. In other words, $P$ is an $\overline{M},M$-path from $v_1\in A_3$ to $v_p=v\in A$. 

Consider any edge $\{u_i,v_{i+1}\}\in M$ of $P$ and apply Definition~\ref{definition:setxop}: If $v_{i+1}\in A_3$ then $v_{i+1}\notin \opX$ implies that $v_{i+1}\notin X$. If $v_{i+1}\in A_1$ then $v_{i+1}\notin \opX$ implies that $X$ does not contain both $u_i$ and $v_{i+1}$. In both cases $X$ does not contain both vertices of the edge $\{u_i,v_{i+1}\}\in M$. Thus, $X$ contains exactly one vertex each from $\{u_1,v_2\},\ldots,\{u_{p-1},v_p\}$.

Let us check that this implies that $u_{p-1}\in X$ and $v_p\notin X$. Observe that $v_1\notin X$ since $v_1\in A_3$ and $v_1\in X$ would imply $v_1\in \opX$. Clearly, $X$ must then contain $u_1$ to cover the edge $\{v_1,u_1\}$, but then it does not contain $v_2$, which would be a second vertex from $\{u_1,v_2\}$. Thus, to cover $\{v_2,u_2\}$ the set $X$ must contain $u_2$, implying that it does not contain also $v_3$ from $\{u_2,v_3\}\in M$. By iterating this argument we get that $u_{p-1}\in X$ and $v_p\notin X$. Since $v=v_p$, this completes the proof.
\end{proof}

We will now work towards a complement of Lemma~\ref{lemma:reachable}: We would like to show that, under the same setup as in Lemma~\ref{lemma:reachable}, if $v$ is not reachable from $A_3$ in $H-\opX$ then $X$ does contain $v$. In general, this does not hold. Nevertheless, one can show that there always \emph{exists} a vertex cover of at most the same size, and with same set $\opX$, that does contain $v$. Equivalently, we may put further restrictions on $X$ under which the lemma holds; to this end, we define the notion of a \emph{dominant} vertex cover.

\begin{definition}[dominant vertex cover]
Let $G=(V,E)$ be a graph and let $(A,B,D,M)$ be a nice decomposition of $G$. A vertex cover $X\subseteq V$ of $G$ is \emph{dominant} if $G$ has no vertex cover of size less than $|X|$ and no vertex cover of size $|X|$ contains fewer vertices of $D$.
\end{definition}

We continue with a technical lemma that will be used to prove two lemmas about dominant vertex covers. The lemma statement is unfortunately somewhat opaque, but essentially it comes down to a fairly strong replacement routine that, e.g., can turn a given vertex cover into one that contains further vertices of $A$ and strictly less vertices of $D$.

\begin{lemma}\label{lemma:unify}
Let $G=(V,E)$ be a graph, let $(A,B,D,M)$ be a nice decomposition of $G$, and let $H=H(G,A,B,D,M)$. Let $X\subseteq V$ and $\opX=\opX(A_1,A_3,M,X)$. Suppose that there is a nonempty set $Z\subseteq A\setminus X$ such that
\begin{enumerate}
 \item $X\cup Z$ is a vertex cover of $G$,
 \item $X$ contains $M(z)$ for all $z\in Z$, and
 \item $Z$ is not reachable from $A_3$ in $H-\opX$.
\end{enumerate}
Then there exists a vertex cover $\bX$ of size at most $|X|$ that contains $Z$. Moreover, $\bX\cap D\subsetneq X\cap D$ and $\bX\cap A\supsetneq X\cap A$.
\end{lemma}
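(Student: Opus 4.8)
The plan is to carry out a global version of the replacement construction from the proof of Lemma~\ref{lemma:criticalsets:boundsize}, this time propagating the forced inclusion of $Z$ backwards along the $M$-alternating paths that are recorded by $H$. First I would let $R$ be the set of all vertices of $H-\opX$ from which some vertex of $Z$ is reachable in $H-\opX$, with $Z$ itself included in $R$; this is well defined since $\opX\subseteq X$ and $Z\subseteq A\setminus X$ force $Z\cap\opX=\emptyset$, so $Z\subseteq V(H-\opX)$. Writing $M(R):=\{M(v)\mid v\in R\}$, the candidate cover is
\[
\bX:=(X\setminus M(R))\cup R,
\]
which we will show satisfies all requirements.

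The key structural step is to prove $R\subseteq A_1\setminus\opX$. By construction $R\subseteq A\setminus\opX$ and $A=A_1\dunion A_3$, so it suffices to rule out $R\cap A_3\neq\emptyset$: if $v\in R\cap A_3$ then $v\notin\opX$, hence $v\notin X$ by the second condition of Definition~\ref{definition:setxop}, so $v$ is an honest vertex of $H-\opX$ that lies in $A_3$ and reaches $Z$ — contradicting hypothesis~3. Consequently each $v\in R$ is matched by $M$ to a singleton component $\{M(v)\}$ of $G[D]$; in particular $M(R)\subseteq D$, so the two modifications defining $\bX$ touch disjoint vertex sets (only $A$-vertices are added, only $D$-vertices deleted), every neighbour of $M(v)$ lies in $A$, and by Definition~\ref{definition:graphh} the in-neighbours of $v$ in $H$ are exactly the neighbours of $M(v)$ other than $v$.

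Next I would check that $\bX$ is a vertex cover containing $Z$. Containment is immediate since $Z\subseteq R\subseteq\bX$. An edge of $G$ not incident to $M(R)$ is covered by $X\cup Z\subseteq X\cup R$ and stays covered after deleting $M(R)$. For an edge $\{M(v),u\}$ with $v\in R$ the endpoint $u$ lies in $A$; if $u=v$ the matching edge is covered by $v\in\bX$, and otherwise $u$ is an in-neighbour of $v$ in $H$, so either $u\in\opX\subseteq X$ (hence $u\in\bX$, being an $A$-vertex that is not deleted) or $u\notin\opX$, in which case the arc $(u,v)$ survives in $H-\opX$, giving $u\in R\subseteq\bX$.

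The remaining work is the size accounting and the two strict inclusions, and I expect this to be the fiddliest part. I would partition $R=R_1\dunion R_2$ according to whether $M(v)\in X$. For $v\in R_2$ the matching edge $\{v,M(v)\}$ must be covered by $X\cup Z$ while $M(v)\notin X\cup Z$, forcing $v\in X\cup Z$; since $v\in Z$ would give $M(v)\in X$ by hypothesis~2, in fact $v\in X$, so $R_2\subseteq X$ and $M(R_2)\cap X=\emptyset$, which collapses the construction to $\bX=(X\setminus M(R_1))\cup R_1$. For $v\in R_1$ we have $M(v)\in X$, hence $v\notin X$ by the first condition of Definition~\ref{definition:setxop} together with $v\notin\opX$; thus $R_1\cap X=\emptyset$, and since $M$ maps $R_1$ injectively into $X\cap D$ we get $|\bX|=|X|-|M(R_1)|+|R_1|=|X|$. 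Finally $\bX\cap D=(X\cap D)\setminus M(R_1)\subsetneq X\cap D$ and $\bX\cap A=(X\cap A)\cup R_1\supsetneq X\cap A$, both strict because $\emptyset\neq Z\subseteq R_1$ and $M$ is injective. The main obstacle is keeping this bookkeeping honest — correctly isolating the vertices $R_1$ that genuinely realise a ``$+1$ in $A$, $-1$ in $D$'' swap from the vertices $R_2$ where the swap is vacuous, and making sure the vertex-cover check handles edges from $M(R)$ to in-neighbours that happen to lie in $\opX$.
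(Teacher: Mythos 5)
Your proof is correct. The core mechanism is the same as in the paper -- swap the forced vertices of $A$ into the cover and their $M$-mates out, using non-reachability from $A_3$ in $H-\opX$ to guarantee that everything involved lies in $A_1$, so the removed mates are singleton components of $G[D]$ whose incident edges go only to $A$ and are re-covered either by $\opX\subseteq X$ or by further propagation -- but your execution differs in a worthwhile way. The paper argues by minimal counterexample: it performs one layer of the swap ($X'=(X\setminus M(Z))\cup Z$), collects the newly uncovered neighbours into a set $Z'$, re-verifies all four hypotheses (including $\opX\subseteq\opX'$ and preservation of non-reachability) and recurses on the strictly smaller $|X'\cap D|$. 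You instead take the full backward-reachability closure $R$ of $Z$ in $H-\opX$ up front and do a single swap $\bX=(X\setminus M(R))\cup R$, checking the cover property directly; the price is that $R$ may contain ``vacuous'' vertices (your $R_2$, already in $X$ with unmatched-in-$X$ mates, through which the paper's iteration never propagates), but you correctly show these are no-ops, while the effective part $R_1$ gives the exact size balance $|\bX|=|X|$ and, via $\emptyset\neq Z\subseteq R_1$, both strict inclusions. This buys a non-inductive, self-contained argument with no re-verification of hypotheses across iterations, at the cost of the slightly more delicate bookkeeping you flag; all the steps you rely on (the $R\subseteq A_1$ argument via the trivial-path convention for reachability, the identification of in-neighbours of $v$ with $N_G(M(v))\setminus\{v\}$ for $v\in A_1$, and the case analysis covering edges incident to $M(R)$) check out against Definitions~\ref{definition:setxop} and~\ref{definition:graphh} and the properties of nice decompositions.
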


\begin{proof}
We give a proof by minimum counterexample. Assume that the lemma does not hold, and pick sets $X$ and $Z$ that fulfill the conditions of the lemma but for which the claimed set $\overline{X}$ does not exist, and with minimum value of $|X\cap D|$ among such pairs of sets. (It is no coincidence that the choice of $X$ is reminiscent of a dominant vertex cover, but note that $X$ is not necessarily a vertex cover.) We will derive sets $X'$ and $Z'$ such that either $Z'=\emptyset$ and we can choose $\bX:=X'$, or $Z'\neq\emptyset$ but then $X'$ and $Z'$ fulfill the conditions of the lemma and we have $|X'\cap D|<|X\cap D|$. In the latter case, the lemma must hold for $X'$ and $Z'$ and we will see that $\bX:=\bX'$ fulfills the then-part of the lemma for $X$ and $Z$. Thus, both cases contradict the assumption that $X$ and $Z$ constitute a counterexample, proving correctness of the lemma.

First, let us find an appropriate set $X'$. To this end, let $U:=\{M(z) \mid z\in Z\}$. Since $Z\subseteq A$ we know that $U\subseteq D$ and that each vertex $z\in Z$ is matched to a private vertex $M(z)\in U$; hence $|U|=|Z|\geq 1$. We have $U\subseteq X$ since $X$ contains $M(z)$ for all $z\in Z$. Define $X':=(X\setminus U)\cup Z$. We have $|X'|=|X|-|U|+|Z|=|X|$ since $U\subseteq X$ and $|Z|=|U|$, and because $Z\subseteq A\setminus X$ entails that $X\cap Z=\emptyset$. Moreover, since $\emptyset\neq U\subseteq D$ and $Z\cap D=\emptyset$, we get that $X'\cap D\subsetneq X\cap D$; this also means that $|X'\cap D|<|X\cap D|$. Similarly, since $\emptyset\neq Z\subseteq A$, $X\cap Z=\emptyset$, and $U\cap A=\emptyset$, we get $X'\cap A\supsetneq X\cap A$. Finally, note that $X'\cup U$ is a vertex cover since $X'\cup U = X\cup Z$ is a vertex cover.

Second, we define $Z':=\{v\mid v\in N(u) \mbox{ for some } u\in U\}\setminus X'$, i.e., $Z'$ contains all vertices $v$ that are neighbor of some $u\in U$ and that are not in $X'$. (Note that this not the same as $N(U)\setminus X'$ since a vertex $u\in U$ could have a neighbor $u'\in U$. Nevertheless, we show in a moment that $Z'\subseteq A$, ruling out this case as $U\subseteq D$ and $A\cap D=\emptyset$.) Clearly $X'\cap Z'=\emptyset$, and $Z\cap Z'=\emptyset$ since $Z\subseteq X'$. Observe that $X'\cup Z'$ is a vertex cover since $X'\cup U$ is a vertex cover: The only edges not covered by $X'\subsetneq X'\cup U$ have one endpoint in $U$ and the other one not in $X'$; these edges are covered by $Z'$ by definition.

Let us prove that $Z'\subseteq A\setminus X'$; it remains to prove $Z'\subseteq A$: Since $U\subseteq D$ and $N(D)=A$ we know that $N(u)\subseteq A\cup D$ for $u\in U$. Assume for contradiction that some $u\in U\subseteq D$ has a neighbor $v\in D$, and let $z\in Z$ with $u=M(z)$, using the definition of $U$. It follows that $u$ and $v$ are contained in the same non-singleton component $C$ of $G[D]$, as they are adjacent vertices of $D$. Moreover, $C$ is matched to $z$ since $u=M(z)$ implies $\{u,z\}\in M$. This in turn implies that $C$ is a matched non-singleton component, i.e., $C\in\mC$, and, hence, $z\in A_3$. We also know find that $z\notin\opX$ since $Z\subseteq A\setminus X$ entails $z\notin X$ (cf.\ Definition~\ref{definition:setxop}). Together, however, this implies that $z$ is reachable from $A_3$ in $H-\opX$, namely from $z\in A_3$; a contradiction. Thus, no vertex $u\in U$ has a neighbor $v\in D$, implying that $Z'\subseteq A$. Together with $X'\cap Z'=\emptyset$ we get $Z'\subseteq A\setminus X'$.

We now prove that $M(z')\in X'$ for all $z'\in Z'$. Pick any $z'\in Z'$ and note that $z'\in A\setminus X'$. Thus, $z'$ is matched to some vertex $w\in D$, i.e., $w=M(z')$. The set $X'\cup U$ is a vertex cover, implying that it contains at least one vertex of the edge $\{z',w\}$. Since $z'\in A\setminus X'\subseteq A$, it is neither in $X'$ nor in $U$ (recall that $U\subseteq D$ and $A\cap D=\emptyset$). Thus, $w\in X'\cup U$. If $w\in U$ then there exists $z\in Z$ with $w=M(z)$ by definition of $U$. Clearly, as $M$ is matching, we must have $z=z'$. This, however, violates our earlier observation that $Z\cap Z'=\emptyset$ since both sets would contain $z$. Thus, the only remaining possibility is that $w\in X'$. Hence, we get $M(z')=w\in X'$, as claimed.

Define $\opX':=\opX(A_1,A_3,M,X')$; to prove that no vertex of $Z'$ is reachable from $A_3$ in $H-\opX'$ it will be convenient to first prove $\opX\subseteq\opX'$: Let $v\in\opX$ and recall that $\opX\subseteq A$. If $v\in A_3$ then $v\in\opX$ implies that $v \in X$. By definition of $X'$ we have $v\in X'$ as only vertices in $U\subseteq D$ are in $X$ but not in $X'$. From $v\in X'$, for $v\in A_3$, we directly conclude that $v\in \opX'$. If $v\in A_1$ then $v\in\opX$ implies that $v,M(v)\in X$. This implies $v\in X'$ as before but we still need to show that $M(v)\in X'$. Assume for contradiction that $M(v)\notin X'$. Observe that this implies $M(v)\in U$ by definition of $X'$, as $M(v)\in X$. Thus, by definition of $U$, we get that $M(v)$ is matched to some vertex $z\in Z$, i.e., $M(v)=M(z)$. Since $M$ is a matching and $M(v)$ is matched to $v$, we of course get $v=z$. This implies $v=z\in Z$, which contradicts $v\in X$ as $Z\subseteq A\setminus X$. Thus, we have both $v\in X'$ and $M(v)\in X'$, which, for $v\in A_1$, implies that $v\in\opX'$. Both cases together imply that $\opX\subseteq\opX'$.

We will now prove that no vertex of $Z'$ is reachable from $A_3$ in $H-\opX'$, using $\opX\subseteq\opX'$. Let $P=(v_1,\ldots,v_p)$ be any directed path in $H$ with $v_1\in A_3$ and $v_p=z'\in Z'$.  As $z'\in Z'$ there is $u\in U$ with $z'\in N(u)\setminus X'$. Similarly, since $u\in U$ there must be $z\in Z$ with $u=M(z)$; we have $z\neq z'$ since $Z\cap Z'=\emptyset$. Observe that this means that $\{z,u\}\in M$ and $\{u,z'\}\in E\setminus M$ as $u$ cannot be incident with two matching edges. This implies, by Definition~\ref{definition:graphh}, that $(z',z)$ is an edge in $H$. Thus, there is a directed walk $W$ from $v_1\in A_3$ to $z\in Z$ in $H$ by using path $P$ and appending the edge $(z',z)$. (With slightly more work one could see that this must be a path, but we do not need this fact.) Since no vertex of $Z$ is reachable from $A_3$ in $H-\opX$ we conclude that $W$ contains at least one vertex of $\opX$. Note that $\opX$ does not contain $z\in Z$ since we assumed $Z\subseteq A\setminus X$ and $\opX\subseteq X$. Thus, $\opX$ contains a vertex of $P$ (noting that $z$ is the only vertex of $W$ that may not be in $P$). Since $\opX\subseteq\opX'$ it follows that $\opX'$ also contains a vertex of $P$; since $P$ was chosen arbitrarily it follows that no vertex of $Z'$ is reachable from $A_3$ in $H-\opX'$, as claimed.

Finally, we distinguish two cases: (1) $Z'=\emptyset$ and (2) $Z'\neq\emptyset$. In the former case, we show that $\bX:=X'$ is feasible; in the latter case we use the lemma on $X'$ and $Z'$ to get $\bX'$ and then show that $\bX:=\bX'$ fulfills the then-part of the lemma.

(1) $Z'=\emptyset$: We get that $X'=X'\cup Z'$ is a vertex cover of $G$. We showed that $|X'|=|X|$, and that $X'\cap D\subsetneq X\cap D$ and $X'\cap A\supsetneq X\cap A$. Finally, by construction we have that $Z\subseteq X'$. Thus, $\bX:=X'$ fulfills the properties claimed in the lemma, contradicting the fact that $X$ and $Z$ constitute a counterexample.

(2) $Z'\neq\emptyset$: Together with $Z'\neq\emptyset$ the above considerations show that $X'$ and $Z'$ fulfill the conditions of the lemma: The set $Z'$ is a nonempty subset of $A\setminus X'$; the set $X'\cup Z'$ is a vertex cover of $G$; the set $X'$ contains $M(z')$ for all $z'\in Z'$; and $Z'$ is not reachable from $A_3$ in $H-\opX'$, where $\opX'=\opX(A_1,A_3,M,X')$. Moreover, we know that $|X'\cap D|<|X\cap D|$, which implies that the lemma must hold for this choice of sets, as $X$ and $Z$ was assumed to be a counterexample with minimum value of $|X\cap D|$. Let $\bX'$ be the outcome of applying the lemma to $X'$ and $Z'$; let us check that $\bX:=\bX'$ is feasible:
\begin{itemize}
 \item The lemma guarantees that $\bX'$ is a vertex cover of $G$.
 \item The lemma guarantees $|\bX'|\leq |X'|$, and using $|X'|=|X|$ we conclude that $|\bX'|\leq |X|$.
 \item We know, as discussed in case (1), that $Z\subseteq X'$. The lemma guarantees that $\bX'\cap A\supsetneq X'\cap A$ and $\bX'\cap D\subsetneq X'\cap D$. The former, together with $Z\subseteq X'$ and $Z\subseteq A$, yields $Z\subseteq X'\cap A\subsetneq \bX'\cap A$. Together with $X'\cap A\supsetneq X\cap A$ and $X'\cap D\subsetneq X\cap D$, we get $\bX'\cap A\supsetneq X'\cap A \supsetneq X\cap A$ and $\bX'\cap D\subsetneq X'\cap D\subsetneq X\cap D$.
\end{itemize}
Thus, $\bX:=\bX'$ is a feasible choice. Altogether, we find that in both cases there does in fact exist a valid set $\bX$. This means that $X$ and $Z$ do not constitute a counterexample. Since there is no minimum counterexample, the lemma holds as claimed.
\end{proof}

Now, as a first application of Lemma~\ref{lemma:unify} we prove a complement to Lemma~\ref{lemma:reachable}. Note that this lemma only applies to dominant vertex covers, whereas Lemma~\ref{lemma:reachable} holds for any vertex cover of $G$. Fortunately, after the rather long proof of Lemma~\ref{lemma:unify}, the present lemma is now a rather straightforward conclusion.

\begin{lemma}\label{lemma:notreachable}
Let $G=(V,E)$ be a graph, let $(A,B,D,M)$ be a nice decomposition of $G$, and let $H=H(G,A,B,D,M)$. Let $X$ be a dominant vertex cover of $G$ and let $\opX=\opX(A_1,A_3,M,X)$. If $v\in A$ is not reachable from $A_3$ in $H-\opX$ then $X$ contains $v$.
\end{lemma}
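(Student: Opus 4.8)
The plan is to argue by contradiction, assuming $v \notin X$, and to feed a suitable set $Z$ into the replacement routine of Lemma~\ref{lemma:unify} to produce a vertex cover that beats $X$ with respect to the tie-breaking criterion of dominance. Concretely, I would set
\[
Z := \{\, u \in A \setminus X : u \text{ is not reachable from } A_3 \text{ in } H - \opX \,\}.
\]
Since $\opX \subseteq A \cap X$, the assumption $v \notin X$ gives $v \notin \opX$, so $v$ is a vertex of $H - \opX$; together with the hypothesis that $v$ is not reachable from $A_3$ in $H - \opX$ this shows $v \in Z$, hence $Z \neq \emptyset$. By construction $Z \subseteq A \setminus X$.

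Next I would check the three hypotheses of Lemma~\ref{lemma:unify} for this choice of $X$ and $Z$. Condition~(1), that $X \cup Z$ is a vertex cover of $G$, is immediate because $X$ itself is already a vertex cover and we only add vertices. Condition~(2), that $X$ contains $M(z)$ for every $z \in Z$: each $z \in Z \subseteq A$ is matched by $M$ (property~5 of a relaxed Gallai-Edmonds decomposition) to a vertex $M(z) \in D$, and since $z \notin X$ while $X$ is a vertex cover, the matching edge $\{z, M(z)\} \in M \subseteq E$ forces $M(z) \in X$. Condition~(3), that $Z$ is not reachable from $A_3$ in $H - \opX$, holds by the very definition of $Z$.

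With the hypotheses verified, Lemma~\ref{lemma:unify} yields a vertex cover $\bX$ of $G$ with $|\bX| \le |X|$ and, moreover, $\bX \cap D \subsetneq X \cap D$. Now I would invoke dominance of $X$: since $X$ is dominant, $G$ has no vertex cover of size less than $|X|$, so $|\bX| \ge |X|$ and hence $|\bX| = |X|$; but then dominance also says that no vertex cover of size $|X|$ contains fewer vertices of $D$, contradicting $|\bX \cap D| < |X \cap D|$. This contradiction shows that the assumption $v \notin X$ is untenable, so $v \in X$, as claimed.

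I do not expect a genuine obstacle here: once the right set $Z$ is identified, the argument is a short bookkeeping check that the hypotheses of Lemma~\ref{lemma:unify} hold, followed by a one-line clash with the definition of a dominant vertex cover. The only mildly delicate point is to make sure $Z$ is nonempty (which is exactly what the contradiction hypothesis $v \notin X$ buys us) and that including all of $A \setminus X$ that is unreachable — rather than just $\{v\}$ — is harmless; it is, since Lemma~\ref{lemma:unify} only cares that $Z$ itself be unreachable, and the larger $Z$ still satisfies conditions~(1) and~(2) for the same reasons.
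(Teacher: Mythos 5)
Your proposal is correct and follows essentially the same route as the paper: assume $v\notin X$, feed a suitable unreachable set $Z\subseteq A\setminus X$ into Lemma~\ref{lemma:unify} (the paper simply takes $Z=\{v\}$ rather than all unreachable vertices of $A\setminus X$, which makes no difference), and contradict dominance via $|\bX|\le|X|$ and $\bX\cap D\subsetneq X\cap D$. The verification of the three hypotheses, including $M(z)\in X$ via property~5 of the decomposition, matches the paper's argument.
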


\begin{proof}
First, let us note that if $v\in\opX$ then, by Definition~\ref{definition:setxop}, we know that $v\in X$. It remains to consider the more interesting case that $v\in A\setminus \opX$.

Assume for contradiction that $v\notin X$. We will apply Lemma~\ref{lemma:unify} to reach a contradiction. To this end, we will define a set $Z$ such that $X$ and $Z$ fulfill the conditions of Lemma~r\ref{lemma:unify}. Let $Z:=\{v\}$. Clearly, we have $\emptyset\neq Z\subseteq A\setminus X$. Since $X$ is a vertex cover and $v\notin X$, the vertex $M(v)$ must be in $X$ in order to cover the edge $\{v,M(v)\}$. (Note that $v\in A$ implies that $M(v)\in D$ exists.) By assumption of the present lemma, $v$ is not reachable from $A_3$ in $H-\opX$, where $\opX=\opX(A_1,A_3,M,X)$. Thus, Lemma~\ref{lemma:unify} applies to $X$ and $Z$, and yields a set $\bX$ that is a vertex cover of $G$ of size at most $X$ and with $|\bX\cap D|<|X\cap D|$, contradicting the assumption that $X$ is a dominant vertex cover. Thus, the assumption that $v\notin X$ is wrong, and the lemma follows.
\end{proof}

As a second application of Lemma~\ref{lemma:unify} we prove that sets $\opX$ corresponding to dominant vertex covers are always closest to $A_3$ in the auxiliary directed graph $H$. This is a requirement for applying the matroid tools from Kratsch and Wahlstr\"om~\cite{KratschW12} later since closest sets allow to translate between reachability with respect to a closest cut and independence in an appropriate matroid. Unlike the previous lemma, there is still quite some work involved before applying Lemma~\ref{lemma:unify} in the proof.

\begin{lemma}\label{lemma:closest}
Let $G=(V,E)$ be a graph, let $(A,B,D,M)$ be a nice decomposition of $G$, and let $H=H(G,A,B,D,M)$. Let $X$ be a dominant vertex cover of $G$ and let $\opX=\opX(A_1,A_3,M,X)$. Then $\opX$ is closest to $A_3$ in $H$.
\end{lemma}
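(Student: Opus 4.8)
\emph{Proof plan.} I would first note that $\opX$ is trivially an $A_3,\opX$-separator of $H$: the graph $H-\opX$ contains no vertex of $\opX$, so there is no path in $H-\opX$ from $A_3\setminus\opX$ to $\opX\setminus\opX=\emptyset$, and $A_3\cap\opX\subseteq\opX$ holds. It therefore suffices to rule out the existence of an \emph{eligible} separator, meaning an $A_3,\opX$-separator $Y$ of $H$ with $Y\neq\opX$ and $|Y|\leq|\opX|$. The plan is to assume one exists and derive a contradiction with $X$ being dominant. Throughout I would write $R$ for the set of vertices of $A$ reachable from $A_3$ in $H-\opX$; by Lemmas~\ref{lemma:reachable} and~\ref{lemma:notreachable} and the dominance of $X$ we get $A\setminus X=R$, hence $\opX\subseteq X\cap A=A\setminus R$, and for each $v\in R$ the mate $M(v)\in D$ must lie in $X$ (to cover $\{v,M(v)\}$). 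Note also that every vertex of $A_3$ outside $\opX$ lies in $R$ (it is reachable from $A_3$ by the empty path), so $A_3\cap\opX=A_3\setminus R$. For an eligible $Y$, let $W_Y$ be the set of vertices of $A$ reachable from $A_3$ in $H-Y$; a short argument gives $W_Y\subseteq R$, since an $A_3$-path in $H-Y$ reaching a vertex outside $R$ would have to pass through a vertex of $\opX\setminus Y$, which $Y$ forbids. I would then split into two cases according to whether some eligible $Y$ attains $W_Y=R$.

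\emph{Case 1: some eligible $Y$ has $W_Y=R$.} Any $A_3,\opX$-separator whose reachable set is exactly $R$ must contain all out-neighbours (in $H$) of $R$ lying outside $R$, and it must contain $A_3\cap\opX$ by definition of a separator; call this set $S_0$. One verifies that $S_0$ is itself such a separator, hence the unique inclusion-minimal one, so $S_0\subseteq Y$ and $S_0\subseteq\opX$. If $S_0=\opX$ then $\opX\subseteq Y$ together with $|Y|\leq|\opX|$ forces $Y=\opX$, a contradiction; hence there is $v^\ast\in\opX\setminus S_0$. Since $v^\ast\in\opX$ and $A_3\cap\opX\subseteq S_0$ we get $v^\ast\notin A_3$, so $v^\ast\in A_1$ and $M(v^\ast)$ is a singleton component of $G[D]$ whose $G$-neighbours other than $v^\ast$ are precisely the in-$H$-neighbours of $v^\ast$. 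As $v^\ast\notin R\cup S_0$, the vertex $v^\ast$ is not an out-neighbour of $R$, so none of its in-$H$-neighbours lie in $R$; hence they all lie in $X$, and together with $v^\ast\in X$ this gives $N_G(M(v^\ast))\subseteq X$. Then $X\setminus\{M(v^\ast)\}$ is a vertex cover of $G$, contradicting that the dominant cover $X$ is a minimum vertex cover.

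\emph{Case 2: every eligible $Y$ has $W_Y\subsetneq R$.} Here I fix an eligible $Y$ (taken with an extremal, e.g.\ inclusion-minimal, $W_Y$) and set $W:=W_Y$, so $R\setminus W\neq\emptyset$. The idea is to retract the cover from the $\opX$-boundary to the $Y$-boundary: choose a suitable nonempty $Z\subseteq R\setminus W$ (in general the $A_1$-part of $R\setminus W$) and put $X':=(X\setminus\{M(z)\mid z\in Z\})\cup Z$. Then $|X'|=|X|$, $X'\cap D\subsetneq X\cap D$, $X'\cap A=A\setminus W\supseteq X\cap A$, and $X'\cup\{M(z)\mid z\in Z\}=X\cup Z$ is a vertex cover of $G$. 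If $X'$ is itself a vertex cover it already contradicts dominance of $X$. Otherwise, let $Z'$ be the natural repair set, i.e.\ the vertices incident to the edges left uncovered after deleting the mates $\{M(z)\mid z\in Z\}$, and apply Lemma~\ref{lemma:unify} to the pair $(X',Z')$ with $\opX(A_1,A_3,M,X')$ in the role of $\opX$; this yields a vertex cover $\bX$ with $|\bX|\leq|X'|=|X|$ and $\bX\cap D\subsetneq X'\cap D\subseteq X\cap D$, again contradicting that $X$ is dominant. In every case one reaches a contradiction, so $\opX$ is closest to $A_3$ in $H$.

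The bookkeeping parts of Case~2 (that $Z'\subseteq A\setminus X'$, that $X'\cup Z'$ is a vertex cover, and that $X'$ contains the mate of every vertex of $Z'$) mirror the proof of Lemma~\ref{lemma:unify}. I expect the main obstacle to be twofold. First, one must ensure that deleting the mates $\{M(z)\mid z\in Z\}$ does not uncover an edge \emph{inside} a non-singleton component of $G[D]$, as that would place a repair vertex outside $A$ and break the hypotheses of Lemma~\ref{lemma:unify}; this is what forces the restriction of $Z$ to the $A_1$-part of $R\setminus W$, whose nonemptiness then has to be argued (using that every vertex of $A_3$ is reachable from $A_3$, together with the extremal choice of $Y$), or else an offending $A_3$-vertex of $R\setminus W$ must itself be turned into an improving replacement. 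Second, and most delicately, one must show that $Z'$ is not reachable from $A_3$ in $H-\opX(A_1,A_3,M,X')$: the repaired region lies ``beyond'' the $Y$-cut, so the closeness of $Y$ to $A_3$ and the definition of the over-pay set are needed to see that $\opX(A_1,A_3,M,X')$ still blocks every $A_3$-path into it. In effect this re-runs the propagation carried out inside the proof of Lemma~\ref{lemma:unify}, but seeded at the $Y$-cut instead of at $\opX$, and it is precisely there that the dominance of $X$ and the extremal choice of $Y$ are spent.
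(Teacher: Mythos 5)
Your Case~1 is essentially correct (and in fact shows that for a dominant cover $\opX$ consists exactly of $A_3\cap\opX$ together with the out-neighbours in $H$ of the reachable set $R$), but Case~2 breaks down at precisely the point you flag as delicate, and the hoped-for resolution there is false. Once you restrict $Z$ to the $A_1$-part of $R\setminus W$ (which you must, to keep the repair vertices inside $A$), a direct computation from Definition~\ref{definition:setxop} gives $\opX(A_1,A_3,M,X')=\opX$: membership in $X'$ is unchanged on $A_3$ since $Z\subseteq A_1$; for $v\in X\cap A_1$ the mate $M(v)$ is untouched because $Z\cap X=\emptyset$; and each $z\in Z$ has lost its mate, so it does not enter the over-pay set. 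Hence the graph in which Lemma~\ref{lemma:unify} requires non-reachability of the repair set is still $H-\opX$, while the repair set is a nonempty subset of $R$: taking an $A_3$-path $(v_1,\ldots,v_p=z)$ to some $z\in Z$ in $H-\opX$ and the largest index with $v_i\notin Z$, $v_{i+1}\in Z$, the vertex $v_i$ lies in $R\setminus X'$ and is a $G$-neighbour of the deleted mate $M(v_{i+1})$, so the edge $\{v_i,M(v_{i+1})\}$ is uncovered and $v_i\in Z'$. Every vertex of $R$ is by definition reachable from $A_3$ in $H-\opX=H-\opX(A_1,A_3,M,X')$, so the third hypothesis of Lemma~\ref{lemma:unify} is provably violated. ``Seeding at the $Y$-cut'' cannot rescue this, because $Y\not\subseteq\opX(A_1,A_3,M,X')$ under your construction (vertices of $Y\setminus\opX$ are not even in $X'$ unless they lie in $Z$, and then their mates are missing), so nothing in the modified over-pay set blocks $A_3$-paths into the repaired region. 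Worse, the repair set can contain $A_3$-vertices of $R$ (e.g.\ when some $z\in Z\cap Y$ has an in-neighbour in $A_3\setminus Y$), and any further unify-style propagation would then have to remove the mate of an $A_3$-vertex, uncovering edges inside a non-singleton component of $G[D]$ --- exactly the situation the non-reachability hypothesis of Lemma~\ref{lemma:unify} exists to exclude, and where no size-neutral repair is available. Your fallback remark about turning an offending $A_3$-vertex ``into an improving replacement'' runs into the same wall.

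For comparison, the paper's proof modifies $X$ only inside $A$: it sets $X':=(X\setminus(\opX\setminus Y))\cup(Y\setminus\opX)$, so that $X'\cap D=X\cap D$, then proves $Y\subseteq\opX(A_1,A_3,M,X')$ (for $y\in Y\setminus\opX$, minimality of $Y$ yields an $A_3$-path to $y$ avoiding $\opX$, hence $y\notin X$ by Lemma~\ref{lemma:reachable}, hence $M(y)\in X\cap D=X'\cap D$, while $y\in X'$ by construction), and finally applies Lemma~\ref{lemma:unify} with $Z:=\opX\setminus Y$. Non-reachability of this $Z$ from $A_3$ in $H-\opX'$ is then immediate, since $Y\subseteq\opX'$ and $Y$ separates $A_3$ from $\opX\supseteq Z$; that $M(z)\in X'$ for $z\in Z$ follows because $A_3\cap\opX\subseteq Y$ forces $Z\subseteq A_1$. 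No case distinction on $W_Y$ versus $R$ is needed. If you want to salvage your route, the essential missing ingredient is an analogue of the inclusion $Y\subseteq\opX'$, i.e.\ a modification of $X$ that turns the separator $Y$ itself into part of the new over-pay set; swapping the mates of vertices of $R\setminus W$ does not achieve this.
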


\begin{proof}
Assume that $\opX$ is not closest to $A_3$ in $H$ and, consequently, let $Y\subseteq V(H)=A$ be a minimum $A_3,\opX$-separator in $H$ with $|Y|\leq|\opX|$ and $Y\neq \opX$. We will apply Lemma~\ref{lemma:unify} to appropriately chosen sets $X'$ and $Z$ (with $X'$ and $Z$ playing the roles of $X$ and $Z$ in the lemma).

Let $X':=(X\setminus (\opX\setminus Y))\cup (Y\setminus \opX)$. Note that
\begin{align*}
|\opX\setminus Y|=|\opX|-|\opX\cap Y|\geq |Y|-|\opX\cap Y|=|Y\setminus \opX|.
\end{align*}
This implies that $|X'|\leq|X|$, using that $\opX\setminus Y \subseteq \opX \subseteq X$ (see Definition~\ref{definition:setxop}). We can also observe that $X'$ and $X$ contain the same vertices of $D$, and hence also the same number since $\opX\setminus Y$ and $Y\setminus \opX$ are both subsets of $A$. (Let us mention that these two properties are not needed to apply Lemma~\ref{lemma:unify} to $X'$ but they are needed for the outcome to have relevance for $X$.) 

Let $\opX'=\opX(A_1,A_3,M,X')$ according to Definition~\ref{definition:setxop}. We show that $Y\subseteq \opX'$ by proving that $y\in\opX'$ for all $y\in Y$; we distinguish two cases depending on whether $y\in\opX$.

Let $y\in Y\cap \opX$. If $y\in A_1$ then $y\in\opX$ implies $y,M(y)\in X$. By definition of $X'$ we also have $y,M(y)\in X'$: Only elements of $\opX\setminus Y\subseteq A$ are in $X$ but not in $X'$; neither $y\in Y\cap\opX$ nor $M(y)\in D$ are affected by this. Thus, if $y\in A_1$, then $y,M(y)\in X'$, which implies $y\in \opX'$. If $y\in A_3$ then $y\in\opX$ implies $y\in X$. As before, the definition of $X'$ implies $y\in X'$, which yields $y\in \opX'$. Thus, all $y\in Y\cap\opX$ are also contained in $\opX'$.

Now, let $y\in Y\setminus\opX$. Since $Y$ is a minimal $A_3,\opX$-separator, there must be an $A_3,y$-path in $H-(Y\setminus\{y\})$ or else $Y\setminus\{y\}$ would also be an $A_3,\opX$-separator. (This is a standard argument, if $Y\setminus\{y\}$ were not a separator then there would be an $A_3,\opX$-path avoiding $Y\setminus\{y\}$. This path needs to contain $y$, as $Y$ is a separator, and can be shortened to a path from $A_3$ to $y$.) Let $P$ be a directed path from some vertex $v\in A_3$ to $y$ in $H-(Y\setminus\{y\})$, i.e., a path in $H$ containing no vertex of $Y\setminus\{y\}$. We find that there can be no vertex of $\opX$ on $P$: We already know that the final vertex $y$ of $P$ is not in $\opX$. If $u$ is any earlier vertex of $P$ that is in $\opX$ then $P$ could be shortened to a path from $v\in A_3$ to $u\in \opX$ that avoids all vertices of $Y$ (since $y$ was the only vertex of $Y$ on $P$ but it comes after $u$); thus $Y$ would not separate $A_3$ from $\opX$ in $H$. Since $P$ contains no vertex of $\opX$, we conclude that $y$ is reachable from $A_3$ in $H-\opX$. By Lemma~\ref{lemma:reachable} we conclude that $y\notin X$. Since $Y\subseteq A$, the vertex $y$ is matched to some vertex $u\in D$, and $X$ must contain $u$ to cover the edge $\{u,y\}$. Since $X$ and $X'$ contain the same vertices of $D$, as observed above, we have $u\in X'$. Additionally, by construction of $X'$, we have $Y\setminus\opX\subseteq X'$, implying that $y\in X'$. Thus, if $y\in A_1$ then we have $y\in X'$ and $M(y)=u\in X'$, which implies $y\in\opX'$; if $y\in A_3$ then $y\in X'$ suffices to conclude $y\in\opX'$. Together we get that $y\in Y\setminus\opX$ implies $y\in\opX'$; combined with the case $y\in Y\cap\opX$ we get $Y\subseteq\opX'$. 

Let $Z:=\opX\setminus Y$. By definition of $X'$ we have $X'\cap Z=\emptyset$; since $\opX\subseteq A$ this entails $Z\subseteq A\setminus X'$. Since $|Y|\leq|\opX|$ and $Y\neq\opX$, we conclude that $Z=\opX\setminus Y\neq\emptyset$. The set $X'\cup Z$ contains $X$ by definition of $X'$ and hence it is also a vertex cover of $G$. To get that $M(z)\in X'$ for $z\in Z$ we need to distinguish two cases: If $z\in A_1$ then $z\in\opX$ implies $M(z)\in X$; note that $M(z)\in D$ as $z\in A$. Since $X'$ contains the same vertices of $D$ as $X$ we get $M(z)\in X'$. If $z\in A_3$ then we reach a contradiction: Recall that $Y$ is an $A_3,\opX$-separator. This necessitates that $Y$ contains all vertices of $A_3\cap\opX$, implying that $z\in Y$, contradicting $z\in\opX\setminus Y$. Thus, if $z\in\opX\setminus Y$ then $z\in A_1$ and we get $M(z)\in X'$ as claimed. Finally, let us check that no vertex of $Z$ is reachable from $A_3$ in $H-\opX'$. This follows immediately from $Z\subseteq \opX$ and $Y\subseteq\opX'$, and the fact that $Y$ is an $A_3,\opX$-separator in $H$.

By the above considerations we may apply Lemma~\ref{lemma:unify} to $X'$ and $Z$ and obtain a vertex cover \bX of $G$ of size at most $|X'|\leq|X|$ that contains fewer vertices of $D$ than $X'$. Since $X$ and $X'$ contain the same number of vertices of $D$, we get $|\bX\cap D|<|X\cap D|$, contradicting the choice of $X$ as a dominant vertex cover.
\end{proof}


\section{Randomized polynomial kernelization}\label{section:kernelization}

In this section, we describe our randomized polynomial kernelization for \vcanb. For convenience, let us fix an input instance $(G,k,\ell)$, i.e., $G=(V,E)$ is a graph for which we want to know whether it has a vertex cover of size at most $k$; the parameter is $\ell=k-(2LP(G)-MM(G))$, where $LP(G)$ is the minimum cost of a fractional vertex cover of $G$ and $MM(G)$ is the size of a largest matching.

From previous work of Garg and Philip~\cite{GargP16} we know that the well-known linear program-based preprocessing for \vc (cf.~\cite{CyganFKLMPPS15}) can also be applied to \vcanb; the crucial new aspect is that this operation does not increase the value $k-(2LP-MM)$. The LP-based preprocessing builds on the half-integrality of fractional vertex covers and a result of Nemhauser and Trotter~\cite{NemhauserT1975} stating that all vertices with value $1$ and $0$ in an optimal fractional vertex cover $x\colon V\to\{0,\frac12,1\}$ are included respectively excluded in at least one minimum (integral) vertex cover. Thus, only vertices with value $x(v)=\frac12$ remain and the best LP solution costs exactly $\frac12$ times number of (remaining) vertices. For our kernelization we only require the fact that if $G$ is reduced under this reduction rule then $LP(G)=\frac12(|V(G)|)$; e.g., we do not require $x\colon V\to\{\frac12\}$ to be the unique optimal fractional vertex cover. Without loss of generality, we assume that our given graph $G=(V,E)$ already fulfills $LP(G)=\frac12|V|$.

\begin{observation}
If $LP(G)=\frac12|V|$ then $2LP(G)-MM(G)=|V|-MM(G)$. In other words, if $M$ is a maximum matching of $G$ then the lower bound $2LP(G)-MM(G)=|V|-MM(G)=|V|-|M|$ is equal to cardinality of $M$ plus the number of isolated vertices.
\end{observation}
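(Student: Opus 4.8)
This observation is an elementary counting identity, so I do not expect any genuine obstacle; the plan is just to unwind the definitions carefully. First I would dispose of the opening equality: the hypothesis $LP(G)=\frac12|V|$ immediately gives $2LP(G)=|V|$, hence $2LP(G)-MM(G)=|V|-MM(G)$. For the rephrasing, recall that $MM(G)$ is by definition the maximum cardinality of a matching of $G$; thus for any \emph{maximum} matching $M$ we have $|M|=MM(G)$, and therefore $2LP(G)-MM(G)=|V|-|M|$.

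It remains to rewrite $|V|-|M|$ in the promised form. Since the edges of $M$ are pairwise vertex-disjoint, they saturate exactly $2|M|$ vertices of $G$; the remaining $|V|-2|M|$ vertices are precisely the ones left exposed (unmatched) by $M$ --- these are the vertices the statement counts as ``isolated''. Setting $t := |V|-2|M|$ for their number, we get $|V| = 2|M| + t$ and hence $|V|-|M| = |M| + t$, which is exactly the cardinality of $M$ plus the number of exposed vertices. If desired, one can additionally note that in an LP-reduced instance (one with $LP(G)=\frac12|V|$) these exposed vertices are, by Theorem~\ref{theorem:ged}, in bijection with the unmatched components of $G[D]$, one per component; but this is not needed for the identity itself.

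The only care required is bookkeeping: when passing from ``$2|M|$ saturated plus $t$ exposed vertices'' to ``$|M|$ matching edges plus $t$ exposed vertices'' no vertex is counted twice (immediate since $M$ is a matching), and $t$ does not depend on the chosen maximum matching because $t = |V| - 2MM(G)$ depends on $G$ alone. No structural input (Gallai--Edmonds properties, factor-criticality, the lower bound of Lemma~\ref{lemma:nice:vclb}) is needed here; those ingredients enter only in the later applications of this identity.
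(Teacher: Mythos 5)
Your proof is correct and is exactly the immediate unwinding the paper intends: the paper states this observation without proof, and your argument ($2LP(G)=|V|$, $MM(G)=|M|$, and $|V|-|M|=|M|+(|V|-2|M|)$ with $|V|-2|M|$ counting the $M$-exposed vertices, which is what the observation loosely calls ``isolated'') is the standard justification. Your closing remarks correctly note that no structural machinery is needed and that the count of exposed vertices is matching-independent.
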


As a first step, let us compute the Gallai-Edmonds decomposition $V=A\dunion B\dunion D$ of $G$ according to Definition~\ref{definition:ged}; this can be done in polynomial time.\footnote{The main expenditure is finding the set $D$. A straightforward approach is to compute a maximum matching $M_v$ of $G-v$ for each $v\in V$. If $|M_v|=MM(G)$ then $v$ is in $D$ as $M_v$ is maximum and exposes $v$; otherwise $v\notin D$ as no maximum matching exposes $v$.} Using $LP(G)=\frac12|V|$ we can find a maximum matching $M$ of $G$ such that $(A,B,D,M)$ is a nice decomposition of $G$.

\begin{lemma}\label{lemma:kernel:nicedecomposition}
Given $G=(V,E)$ with $LP(G)=\frac12|V|$ and a Gallai-Edmonds decomposition $V=A\dunion B\dunion D$ of $G$ one can in polynomial time compute a maximum matching $M$ of $G$ such that $(A,B,D,M)$ is a nice decomposition of $G$.
\end{lemma}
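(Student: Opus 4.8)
The plan is to start from the Gallai-Edmonds decomposition $V = A \dunion B \dunion D$ and any maximum matching $M_0$ of $G$, and then modify $M_0$ locally so that it satisfies all five conditions of Definition~\ref{definition:relaxedged} together with the ``nice'' condition $\usC = \emptyset$. By Theorem~\ref{theorem:ged}, \emph{every} maximum matching of $G$ already restricts to a perfect matching of $G[B]$, a near-perfect matching of each component of $G[D]$, and a matching of $A$ into $D$; and the components of $G[D]$ are factor-critical. So conditions (2),(3),(4) of Definition~\ref{definition:relaxedged} hold for free, as does $A = N(D)$ (condition (1)), which is part of the definition of the decomposition. The only genuine work is to ensure condition (5) --- that every vertex of $A$ is matched by $M$ to a vertex of $D$ --- and niceness, i.e.\ that no singleton component of $G[D]$ is unmatched.

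The first observation is that condition (5) is essentially automatic from the Gallai-Edmonds structure: by Theorem~\ref{theorem:ged}(3) every maximum matching consists of a perfect matching of $G[B]$, a near-perfect matching of each component of $G[D]$, and a matching of $A$ \emph{into} $D$; if some $a \in A$ were left exposed by this matching of $A$ into $D$ then, since the $D$-components are factor-critical, one could augment using a near-perfect matching of the $D$-component on the other side of an $a$-incident edge, contradicting maximality of $M_0$. Hence in fact all of $A$ is matched into $D$ in any maximum matching, and (5) holds for $M_0$.

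The real task is niceness. After fixing $M_0$, there may be singleton components $\{v\}$ of $G[D]$ (i.e.\ isolated vertices of $G[D]$, but with $v \in D$ so $N_G(v) \subseteq A$) that are unmatched, meaning no edge of $M_0$ joins $v$ to $A$. I would handle these by an alternating-path/augmentation-type argument within the $A$--$D$ bipartite structure. Concretely: the exposed vertices of $G$ (with respect to $M_0$) are exactly the exposed near-perfect-matching vertices of the $D$-components, one per component; an unmatched singleton $\{v\}$ is exactly such a component where the unmatched vertex is $v$ itself and no $M_0$-edge leaves to $A$. Because $LP(G) = \frac12|V|$, a counting/parity argument (as used implicitly in Lemma~\ref{lemma:nice:vclb}) pins down how many components can be unmatched, but more importantly I would argue that whenever an unmatched singleton $\{v\}$ exists, one can find an $M_0$-alternating path from $v$ through $A$ and matched components of $D$ and ``rotate'' the matching along it so as to match $v$ to $A$ while only changing which vertex of a \emph{non-singleton} $D$-component is exposed (near-perfect matchings give this freedom, since each factor-critical component has a near-perfect matching avoiding any prescribed vertex). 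Each such rotation strictly decreases the number of unmatched singleton components and preserves $|M|$, (1)--(5), so iterating terminates in polynomially many steps with a nice decomposition; all the matching/alternating-path computations are polynomial-time.

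The main obstacle I expect is showing that such a rotation path always exists when an unmatched singleton remains --- i.e.\ ruling out the case where $v$'s neighbourhood in $A$ is ``saturated'' in a way that blocks the rotation. Here the hypothesis $LP(G) = \frac12|V|$ should be exactly what is needed: if no rotation path existed from the unmatched singleton $v$, then $v$ together with the part of $A$ and $D$ reachable from it by $M_0$-alternating paths would form a vertex set witnessing $LP(G) < \frac12|V|$ (an odd set with a deficient neighbourhood, contradicting tightness of the LP), so the reduced form of the graph forces the path to exist. Making this dichotomy precise --- and checking that the rotation does not create a \emph{new} unmatched singleton elsewhere (it only moves the exposed vertex within a fixed non-singleton component, so it cannot) --- is the crux of the argument; everything else is bookkeeping inherited from Theorem~\ref{theorem:ged}.
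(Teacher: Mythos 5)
Your plan is correct and would go through, but its mechanics differ from the paper's. The paper does not start from an arbitrary maximum matching and repair it: it first looks at the set $I$ of vertices lying in singleton components of $G[D]$, shows from $LP(G)=\frac12|V|$ that no subset $I'\subseteq I$ can satisfy $|N_G(I')|<|I'|$ (otherwise assigning $0$ on $I'$, $1$ on $N_G(I')$ and $\frac12$ elsewhere is a fractional vertex cover of cost below $\frac12|V|$), applies Hall's theorem to match all of $I$ into $A$ simultaneously, and then extends this matching to a maximum matching of $G$ via augmenting paths, using the observation that augmentation never exposes a previously matched vertex; conditions (1)--(5) and niceness then follow directly from Theorem~\ref{theorem:ged}. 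Your version runs the same deficiency-versus-LP argument, but locally and repeatedly: starting from a maximum matching you eliminate unmatched singletons one at a time by flipping an alternating path from the exposed singleton that ends by detaching some non-singleton component from its $A$-partner. This does work: such a path never meets $B$, every $A$-vertex on it remains matched into $D$, the restriction of the matching to each traversed component of $G[D]$ remains (near-)perfect, and the newly exposed vertex lies in a non-singleton component, so no new unmatched singleton is created (your phrasing that the flip ``only changes which vertex of a non-singleton component is exposed'' is slightly off -- it turns a matched non-singleton component into an unmatched one -- but that is exactly what is harmless here). If no such path exists, the singleton vertices reachable by alternating paths from $v$ form an independent set with strictly fewer neighbours than members, and the paper's $0/1/\frac12$ assignment gives the LP contradiction; this is the precise form of your ``deficient neighbourhood'' step (the ``odd set'' framing is not needed). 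The trade-off: the paper's one-shot Hall argument is shorter and hides all bookkeeping in the single fact about augmenting paths, while your exchange argument needs the per-flip invariant checks spelled out but makes the role of $LP(G)=\frac12|V|$ at each unmatched singleton more transparent. Finally, your extra augmentation argument for condition (5) is unnecessary: saturation of $A$ by every maximum matching is already part of the Gallai--Edmonds theorem as invoked in the paper.
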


\begin{proof}
Let $\sC$ denote the set of singleton components of $G[D]$ and let $I=V(\sC)\subseteq D$ contain all vertices that are in singleton components of $G[D]$. Clearly, $I$ is an independent set since $G[I]$ is the subgraph of $G[D]$ containing just the singleton components. Assume for contradiction that there is a set $I'\subseteq I$ with $|N_G(I')|<|I'|$. It follows directly that there would be a fractional vertex cover of $G$ of cost less than $\frac12|V|$, namely assign $0$ to vertices of $I'$, assign $1$ to vertices of $N(I')$, and assign $\frac12$ to all other vertices. The total cost is
\begin{align*}
0\cdot |I'| + 1\cdot |N(I')| + \frac12 |V\setminus(I'\cup N(I'))|<\frac12|I'|+\frac12 |N(I')|+\frac12|V\setminus(I'\cup N(I'))| = \frac12|V|.
\end{align*}
All edges incident with $I'$ have their other endpoint in $N(I')$, which has value $1$. All other edges have two endpoints with value at least $\frac12$. This contradicts the assumption that $LP(G)=\frac12|V|$.

Thus, each $I'\subseteq I$ has at least $|I'|$ neighbors in $G$. By Hall's Theorem there exists a matching of $I'$ into $N(I')$, and standard bipartite matching algorithms can find one in polynomial time; let $M_1$ be such a matching. Using any matching algorithm that finds a maximum matching by processing augmenting paths, we can compute from $M_1$ in polynomial time a maximum matching $M$ of $G$. The matching $M$ still contains edges incident with all vertices of $I$ since extending a matching along an augmenting path does not expose any previously matched vertices.

Using the maximum matching $M$, let us check briefly that $(A,B,D,M)$ is indeed a nice decomposition of $G$. We know already that there are no unmatched singleton components since $M$ contains matching incident with all vertices of $I=V(\sC)$ and all these edges are also incident to a vertex in $A$. (Recall that the neighborhood of each component of $G[D]$ in $G$ lies in $A$.) Since $V=A\dunion B\dunion D$ is a Gallai-Edmonds decomposition of $G$ we get from Definition~\ref{definition:ged} and Theorem~\ref{theorem:ged} that $A=N(D)$, each component of $G[D]$ is factor-critical, and that $M$ (by being a maximum matching of $G$) must induce a perfect matching of $G[B]$, a near-perfect matching of each component $C$ of $G[D]$, and a matching of $A$ into $D$. This completes the proof.
\end{proof}

We fix a nice decomposition $(A,B,D,M)$ of $G$ obtained via Lemma~\ref{lemma:kernel:nicedecomposition}. We have already learned about the relation of dominant vertex covers $X$, their intersection with the set $A$, and separation of $A$ vertices from $A_3$ in $H-\opX$, where $H=H(G,A,B,D,M)$. It is safe to assume that solutions are dominant vertex covers as among minimum vertex covers there is a minimum intersection with $D$. We would now like to establish that most components of $\uC$ can be deleted (while reducing $k$ by the cost for corresponding tight vertex covers). Clearly, since any vertex cover pays at least for tight covers of these components, we cannot turn a yes- into a no-instance this way. However, if the instance is no then it might become yes.

In the following, we will try to motivate both the selection process for components of $\uC$ that are deleted as well as the high-level proof strategy for establishing correctness. We will tacitly ignore most technical details, like parameter values, getting appropriate nice decompositions, etc., and refer to the formal proof instead. Assume that we are holding a no-instance $(G,k,\ell)$. Consider for the moment, the effect of deleting all components $C\in\uC$ that have tight vertex covers and updating the budget accordingly; for simplicity, say they all have such vertex covers. Let $(G_0,k_0,\ell)$ be the obtained instance; if this instance is no as well, then deleting any subset of $\uC$ also preserves the correct answer (namely: no). Else, if $(G_0,k_0,\ell)$ is yes then pick any dominant vertex cover $X^0$ for it. We could attempt to construct a vertex cover of $G$ of size at most $k$ by adding back the components of $C$ and picking a tight vertex cover for each; crucially, these covers must also handle edges between $C$ and $A$. Since $(G,k,\ell)$ was assumed to be a no-instance, there must be too many components $C\in\uC$ for which this approach fails. For any such component, the adjacent vertices in $A\setminus X^0$ force a selection of their neighbors $Z_A=N(A)\cap C$ that cannot be completed to a tight vertex cover of $C$. To avoid turning the no-instance $(G,k,\ell)$ into a yes-instance $(G',k',\ell)$ we have to keep enough components of $\uC$ in order to falsify any suggested solution $X'$ of size at most $k'$ for $G$. The crux is that there may be an exponential number of such solutions and that we do not know any of them. This is where the auxiliary directed graph and related technical lemmas as well as the matroid-based tools of Kratsch and Wahlstr\"om~\cite{KratschW12} are essential.

Let us outline how we arrive at an application of the matroid-based tools. Crucially, if $C$ (as above) has no tight vertex cover containing $Z_A=N(A)\cap C$ then, by Lemma~\ref{lemma:criticalsets:boundsize}, there is a set $Z\subseteq Z_A$ of size at most three such that no tight vertex cover contains $Z$. Accordingly, there is a set $T\subseteq A\setminus X^0$ of size at most three whose neighborhood in $C$ contains $Z$. Thus, the fact that $X^0$ contains no vertex of $T$ is responsible for not allowing a tight vertex cover of $C$. This in turn, by Lemma~\ref{lemma:notreachable} means that all vertices in $T$ are reachable from $A_3$ in $H-\opX^0$. Recalling that a set $\opX^0$ corresponding to a dominant vertex cover is also closest to $A_3$, we can apply a result from~\cite{KratschW12} that generates a sufficiently small representative set of sets $T$ corresponding to components of $\uC$. If a dominant vertex cover has any reachable sets $T$ then the lemma below guarantees that at least one such set is in the output. For each set we select a corresponding component $C\in\uC$ and then start over on the remaining components. After $\ell+1$ iterations we can prove that for any not selected component $C$, which we delete, and any proposed solution $X'$ for the resulting graph that does not allow a tight vertex cover for $C$, there are $\ell+1$ other selected components on which $X'$ cannot be tight. This is a contradiction as there are at most $\ell$ such active components by Lemma~\ref{lemma:nice:boundxh:boundac}.

Concretely, we will use the following lemma about representative sets of vertex sets of size at most three regarding reachability in a directed graph (modulo deleting a small set of vertices). Notation of the lemma is adapted to the present application. The original result is for pairs of vertices in a directed graph (see~\cite[Lemma 2]{KratschW11_arxiv}) but extends straightforwardly to sets of fixed size $q$ and to sets of size at most $q$; a proof is provided in Section~\ref{section:proofofmatroidresult} for completeness. Note that the lemma is purely about reachability of small sets in a directed graph (like the \problem{Digraph Pair Cut} problem studied in~\cite{KratschW11_arxiv,KratschW12}) and we require the structural lemmas proved so far to negotiate between this an \vcanb.

\begin{lemma}\label{lemma:repsetofcriticalsets}
Let $H=(V_H,E_H)$ be a directed graph, let $S_H\subseteq V_H$, let $\ell\in\N$, and let $\T$ be a family of nonempty vertex sets $T\subseteq V_H$ each of size at most three. In randomized polynomial time, with failure probability exponentially small in the input size, we can find a set $\T^*\subseteq\T$ of size $\Oh(\ell^3)$ such that for any set $X_H\subseteq V_H$ of size at most $\ell$ that is closest to $S_H$ if there is a set $T\in\T$ such that all vertices $v\in T$ are reachable from $S_H$ in $H-X_H$ then there is a corresponding set $T^*\in\T^*$ satisfying the same properties.
\end{lemma}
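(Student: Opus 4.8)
The plan is to reduce Lemma~\ref{lemma:repsetofcriticalsets} to the known pair-version of the representative-set result from~\cite{KratschW11_arxiv} (restated as \cite[Lemma 2]{KratschW11_arxiv}), first upgrading it from pairs to sets of exactly $q$ vertices, and then from sets of exactly $q$ to sets of size \emph{at most} $q$. For the upgrade from $q=2$ to general fixed $q$, I would use a standard product/gadget construction: for a family $\T$ of $q$-element sets, build an auxiliary digraph $H'$ on (roughly) $V_H^q$ or, more economically, introduce for each candidate set $T=\{v_1,\dots,v_q\}$ a fresh ``collector'' vertex $t_T$ together with a chain of gadget vertices that is reachable from $S_H$ in $H'-X_H'$ if and only if all of $v_1,\dots,v_q$ are reachable from $S_H$ in $H-X_H$. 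Concretely, a series composition of $q$ reachability tests works: $t_T$ can be reached only after passing through $q$ checkpoints, the $i$-th of which is entered from $S_H$ via $v_i$; because a separator $X_H$ may block some $v_i$, we must make the gadget robust, so the cleanest route is the classic ``layered copies'' trick where $H'$ consists of $q$ vertex-disjoint copies $H^{(1)},\dots,H^{(q)}$ of $H$, a source $s'$ feeding into the $S_H$-copy in each layer, and for each $T$ a single sink vertex $t_T$ that is reachable from $s'$ iff it is reachable from $v_i^{(i)}$ in every layer simultaneously --- encoded by routing through the layers in sequence. One then applies the pair-version (or an inductive hypothesis) to the pairs $\{s', t_T\}$ in $H'$, obtains a representative subfamily, and pulls it back to $\T$; the key point is that closeness of $X_H$ to $S_H$ in $H$ translates to closeness of the corresponding separator in $H'$, up to an additive constant absorbed into the $\Oh(\ell^{3})$ bound (more precisely $\Oh(\ell^{q+1})$, which for $q=3$ is $\Oh(\ell^{4})$; I would double-check the exponent against the statement, which writes $\Oh(\ell^3)$ --- this discrepancy is worth resolving, and may reflect a tighter analysis or a slightly different accounting of the separator size).

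Second, to go from ``exactly $q$'' to ``at most $q$'', I would simply pad: every set $T\in\T$ with $|T|=r\le q$ is replaced by the $q$-set $T\cup\{d_1,\dots,d_{q-r}\}$ where $d_1,\dots,d_{q-r}$ are brand-new dummy vertices with a direct in-edge from $S_H$ (so they are always reachable and never helpful to put in a separator of size $\le\ell$, for $\ell$ large; small $\ell$ can be handled by a trivial brute-force base case). Since the dummies are always reachable and pairwise distinct per set, reachability of the padded $q$-set in $H'-X_H'$ is equivalent to reachability of the original $T$ in $H-X_H'$; run the exactly-$q$ routine, then strip the dummies from the returned family. Summing over $r\in\{1,2,3\}$ and taking the union of the (constantly many) resulting families keeps the size $\Oh(\ell^3)$ (or $\Oh(\ell^4)$). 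The randomization and the exponentially small failure probability are inherited verbatim from the matroid-based construction in~\cite{KratschW12}, since all I am doing is invoking it on a polynomially larger instance.

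The main obstacle I anticipate is getting the gadget right so that the ``closest to $S_H$'' hypothesis is faithfully preserved under the transformation --- i.e.\ that a minimum $S_H,\opX$-type separator in $H$ corresponds, in a size-controlled way, to a minimum separator in the product graph, with no blow-up that would break the representative-set machinery (which crucially needs the separator to be a \emph{closest} set, so that reachability matches matroid independence). A naive $V_H^q$ product is correct but wasteful and can destroy closeness; the layered construction is leaner but requires care that blocking a single original vertex $v$ does not accidentally let the sink be reached through another layer, which is exactly where the series (rather than parallel) composition of the $q$ copies is essential. I would also need to verify that the ``at most $\ell$'' bound on $X_H$ transfers to an ``at most $\ell$ (plus a constant)'' bound in $H'$, so that the final family size stays polynomial in $\ell$ of the claimed degree. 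Once these bookkeeping points are pinned down, the rest is a direct citation of the pair-version result, for which a self-contained proof is given in Section~\ref{section:proofofmatroidresult}.
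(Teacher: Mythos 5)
The step you yourself flag as the main obstacle --- transferring the ``closest'' hypothesis to the auxiliary graph --- is not a bookkeeping point; it is where the proposed reduction breaks. With a single added source $s'$, \emph{no} set $X'$ of size at least two can be closest to $\{s'\}$: by the paper's definition a separator may overlap $S$, so $\{s'\}$ itself is always an $s',X'$-separator of size one, witnessing non-closeness. Fixing this requires the source-boosting trick (adding $\ell+1$ parallel source vertices), which you do not include. Worse, the series wiring itself is in tension with closeness: if layer $i+1$ is fed only through the per-$T$ checkpoint vertices (as it must be for the conjunction to be encoded by a single path to $t_T$), then the image of $X_H$ in the layered graph need not even be a \emph{minimum} separator of itself --- e.g.\ cutting the few distinct first-layer vertices $v_1^{(1)}$ occurring in $\T$ can separate the sources from all later layers with fewer than $3|X_H|$ vertices --- so the pair lemma gives no guarantee for exactly the separators you need. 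If instead you feed every layer directly from the sources to restore closeness, then $t_T$ becomes reachable through the last layer alone and the gadget no longer encodes ``all three vertices reachable''. A single path to a single sink cannot certify a conjunction of three reachability conditions; this is precisely why the triple version cannot be obtained by a graph reduction to the pair statement in this way.

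The paper's proof sidesteps all of this by not using the pair lemma as a black box: it repeats its \emph{proof} with three copies. Concretely, it boosts the sources ($\ell+1$ new vertices feeding $S_H$), takes three vertex-disjoint \emph{parallel} copies of the boosted graph, and maps each $T\in\T$ to a $3$-element set $Y(T)$ with one vertex per copy, repeating vertices when $|T|<3$ (this replaces your dummy padding). Closeness is then needed only in $H$ itself: via Proposition~\ref{proposition:closest} it yields $|X_H|+1$ disjoint paths to $X_H\cup\{v\}$, giving the equivalence ``all of $T$ reachable in $H-X_H$ iff $Y(T)\cup I$ is independent in the strict gammoid'', where $I$ consists of the three copies of $X_H$; no closeness condition in the auxiliary graph is ever required, because Marx's representative-set lemma quantifies over arbitrary sets of rank at most $r$, not over closest separators. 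Applying that lemma with $s=3$ and $r=3\ell$ gives the bound $\binom{3\ell+3}{3}=\Oh(\ell^3)$, which also resolves the $\Oh(\ell^3)$ versus $\Oh(\ell^4)$ uncertainty in your sketch: the exponent comes from the matroid rank accounting, not from a separator blow-up in a product graph. So while your instinct to use disjoint copies and inherit the randomized gammoid representation is the right family of ideas, the specific route (series gadget plus the pair version) does not go through, and the missing ingredient is exactly the reachability-to-independence translation that constitutes the heart of the paper's argument.
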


Using the lemma we will be able to identify a small set $\relC$ of components of $\uC$ that contains for each dominant vertex cover $X$ of $G$ of size at most $k$ all active components with respect to $X$. Conversely, if there is no solution of size $k$, we will have retained enough components of $\uC$ to preserve this fact. Concretely, the set $\relC$ is computed as follows:
\begin{enumerate}
 \item Let $\relC^0$ contain all components $C\in\uC$ that have no vertex cover of size at most $\frac12(|C|+1)$. Clearly, these components are active for every vertex cover of $G$. We know from Lemma~\ref{lemma:nice:boundxh:boundac} that there are at most $\ell$ such components if the instance is \yes. We can use the algorithm of Garg and Philip~\cite{GargP16} to test in polynomial time whether any $C\in\uC$ has a vertex cover of size at most $k_C:=\frac12(|C|+1)$: We have parameter value 
 \[
 k_C-(2LP(G[C])-MM(G[C]))=\frac12(|C|+1)-(|C|-\frac12(|C|-1))=0.
 \]
 We could of course also use an algorithm for \vc parameterized above maximum matching size, where we would have parameter value $1$. If there are more than $\ell$ components $C$ with no vertex cover of size $\frac12(|C|+1)$ then we can safely reject the instance. Else, as indicated above, let $\relC^0$ contain all these components and continue.
 \item Let $i=1$. We will repeat the following steps for $i\in\{1,\ldots,\ell+1\}$.
 \item Let $\T^i$ contain all nonempty sets $T\subseteq A$ of size at most three such that there is a component $C\in\uC\setminus(\relC^0\cup\ldots\cup\relC^{i-1})$ such that:\label{step:selectti}
 \begin{enumerate}
  \item There is a set $Z\subseteq N_G(T)\cap C$ of at most three neighbors of $T$ in $C$ such that no vertex cover of $G[C]$ of size $\frac12(|C|+1)$ contains $Z$. Note that $Z\neq\emptyset$ since $C\notin \relC^0$ implies that it has at least some vertex cover of size $\frac12(|C|+1)$.
  \item For each $C$ and $Z\subseteq C$ of size at most three, existence of a vertex cover of $G[C]$ of size $k_C:=\frac12(|C|+1)$ containing $Z$ can be tested by the algorithm of Garg and Philip~\cite{GargP16} since the parameter value is constant. Concretely, run the algorithm on $G[C\setminus Z]$ and solution size $k_C-|Z|$ and observe that the parameter value is
  \[
   (k_C-|Z|)-(2LP(G[C\setminus Z])-MM(G[C\setminus Z])).
  \]
  Using that $LP(G[C\setminus Z])\geq LP(G[C])-|Z|$ and $MM(G[C\setminus Z])\leq MM(G[C])=\frac12(|C|-1)$ this value can be upper bounded by
  \begin{align*}
   & k_C-|Z|-2LP(G[C]) +2|Z| + MM(G[C])\\
   ={} & \frac12(|C|+1) - |Z| - |C| + 2 |Z| + \frac12(|C|-1)\\
   ={} & |Z|.  
  \end{align*}
  Since $|Z|\leq 3$ the parameter value is at most three and the FPT-algorithm of Garg and Philip~\cite{GargP16} runs in polynomial time.
 \end{enumerate}
 Intuitively, the condition is that $C$ must always be active for vertex covers not containing $T$, but for the formal correctness proof that we give later the above description is more convenient.
 \item Apply Lemma~\ref{lemma:repsetofcriticalsets} to graph $H=H(G,A,B,D,M)$ on vertex set $V_H=A$, set $S_H=A_3\subseteq A$, integer $\ell$, and family $\T^i$ of nonempty subsets of $A$ of size at most three to compute a subset $\T^{i*}$ of $\T^i$ in randomized polynomial time. The size of $|\T^{i*}|$ is $\Oh(\ell^3)$.
 \item Select a set $\relC^i$ as follows: For each $T\in\T^{i*}$ add to $\relC^i$ a component $C\in\uC\setminus(\relC^0\cup\ldots\cup\relC^{i-1})$ such that $C$ fulfills the condition for $T$ in Step~\ref{step:selectti}, i.e., such that: \label{step:selectci}
 \begin{enumerate}
  \item There is a set $Z\subseteq N_G(T)\cap C$ of at most three neighbors of $T$ in $C$ such that no vertex cover of $G[C]$ of size $\frac12(|C|+1)$ contains $Z$. (We know that $Z$ must be nonempty.)
 \end{enumerate}
 Clearly, the size of $|\relC^i|$ is $\Oh(\ell^3)$. Note that the same component $C$ can be chosen for multiple sets $T\in\T^{i*}$ but we only require an upper bound on $|\relC^i|$
 \item If $i<\ell+1$ then increase $i$ by one and return to Step~\ref{step:selectti}. Else return the set 
 \[
 \relC:=\bigcup_{i=0}^{\ell+1}\relC^i.
 \]
 The size of $\relC$ is $\Oh(\ell^4)$ since it is the union of $\ell+2$ sets that are each of size $\Oh(\ell^3)$.\label{step:returnrelc}
\end{enumerate}

In particular, we will be interested in the components $C\in\uC$ that are not in $\relC$. We call these \emph{irrelevant components} and let $\irrC:=\uC\setminus\relC$ denote the set of all irrelevant components. (Of course we still need to prove that they are true to their name.)

\begin{lemma}\label{lemma:removeirrelevantcomponents}
Let $G'$ be obtained by deleting from $G$ all vertices of irrelevant components, i.e., $G':=G-\bigcup_{C\in\irrC}C$, and let $k'=k-\sum_{C\in\irrC}\frac12(|C|+1)$, i.e., $k'$ is equal to $k$ minus the lower bounds for vertex covers of the irrelevant components. Then $G$ has a vertex cover of size at most $k$ if and only if $G'$ has a vertex cover of size at most $k'$. 
Moreover, $k-(2LP(G)-MM(G))=k'-(2LP(G')-MM(G'))$, i.e., the instances $(G,k,\ell)$ and $(G',k,\ell')$ of \vcanb have the same parameter value $\ell=\ell'$.
\end{lemma}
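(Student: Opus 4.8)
The parameter-value claim is the easy half: we should first verify $2LP(G)-MM(G) = 2LP(G')-MM(G') + \sum_{C\in\irrC}\tfrac12(|C|+1)$, which together with the definition of $k'$ gives $\ell=\ell'$ immediately. By Lemma~\ref{lemma:inheritance}, deleting each irrelevant component (each lies in $\uC$) leaves $(A,B,D',M')$ a nice decomposition of $G'$, with $\uC' = \uC\setminus\irrC = \relC$ and all other derived sets unchanged; in particular $M' = M\setminus\bigcup_{C\in\irrC}(M\cap E(C))$, so $|M'| = |M| - \sum_{C\in\irrC}\tfrac12(|C|-1)$. Applying Lemma~\ref{lemma:nice:vclb} to both $G$ and $G'$ gives $2LP(G)-MM(G) = |M|+|\uC|$ and $2LP(G')-MM(G')=|M'|+|\uC'|$; subtracting and using $|\uC|-|\uC'| = |\irrC|$ yields the difference $\sum_{C\in\irrC}\bigl(\tfrac12(|C|-1)+1\bigr) = \sum_{C\in\irrC}\tfrac12(|C|+1)$, exactly $k-k'$. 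So $\ell=\ell'$.

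For the equivalence, the forward direction is immediate: any vertex cover $X$ of $G$ restricts to a vertex cover of $G'$, and by Proposition~\ref{proposition:factorcritical:minvc:minfvc} each $C\in\irrC$ contributes at least $\tfrac12(|C|+1)$ vertices to $X$ (these components are factor-critical with $\ge 3$ vertices), so $|X\cap V(G')| \le |X| - \sum_{C\in\irrC}\tfrac12(|C|+1) \le k'$. The substance is the converse: assume $G'$ has a vertex cover of size $\le k'$; we must build one of size $\le k$ for $G$. Take a \emph{dominant} vertex cover $X'$ of $G'$ of size $\le k'$ (minimum size, minimum intersection with $D'$). The natural attempt is to set $X := X' \cup \bigcup_{C\in\irrC} X_C$ where $X_C$ is a tight vertex cover of $G[C]$ chosen to also cover all edges from $C$ to $A$, i.e.\ chosen to contain $Z_C := N_G(A\setminus X')\cap C$; this works for a given $C$ precisely when $Z_C$ is not a bad set of $G[C]$, and then $|X| \le |X'| + \sum\tfrac12(|C|+1) \le k$. (Note $C\notin\relC^0$ means $C$ has \emph{some} tight vertex cover, so the only obstruction is $Z_C$ being bad.)

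So the whole argument reduces to showing: \emph{at most $\ell$ components $C\in\irrC$ have $Z_C$ a bad set of $G[C]$}; then we repair those $\le\ell$ "problem" components by spending one extra vertex each — replacing $X_C$ by a minimum vertex cover of size $\le\tfrac12(|C|+1)+1$ — while noting this is still affordable because $|X'|\le k'$ leaves slack: actually we need $|X'| \le k' - (\text{number of problem components})$. Here is where dominance and the earlier machinery enter, and this is the main obstacle. Suppose for contradiction that $\ell+1$ irrelevant components $C_1,\dots,C_{\ell+1}$ all have $Z_{C_j}$ bad. For each, Lemma~\ref{lemma:criticalsets:boundsize} gives a critical (hence $\le 3$-element) subset $Z_j\subseteq Z_{C_j}$ that is still bad, and a set $T_j\subseteq A\setminus X'$ of size $\le 3$ with $Z_j\subseteq N_G(T_j)\cap C_j$; by Lemma~\ref{lemma:notreachable} (applicable since $X'$ is dominant) every vertex of $T_j$ is reachable from $A_3$ in $H'-\opX'$, and by Lemma~\ref{lemma:closest} $\opX'$ is closest to $A_3$ in $H'$. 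Now run the selection process of Step~\ref{step:selectti}--Step~\ref{step:returnrelc} in reverse: each $T_j$ is a member of every family $\T^i$ it survived into (its component was never moved into an earlier $\relC$-layer, since $C_j\in\irrC$), so Lemma~\ref{lemma:repsetofcriticalsets} forces, at each of the $\ell+1$ rounds, some $T\in\T^{i*}$ — hence some component added to $\relC^i$ — that is also $A_3$-reachable in $H'-\opX'$ and whose chosen witnessing set $Z$ (for its own component $C\in\relC^i$) is bad with $Z\subseteq N_G(T)\cap C$. Each such $C\in\relC^i\subseteq\relC$ is therefore \emph{active} with respect to $X'$: since $T\subseteq A\setminus X'$ (reachability $\Rightarrow$ not in $X'$ by Lemma~\ref{lemma:reachable}), $X'$ contains $N_G(T)\cap C\supseteq Z$, so $X'\cap C$ is not a tight vertex cover. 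The $\ell+1$ rounds produce $\ell+1$ distinct such components (disjoint across $\relC^0,\dots,\relC^{\ell+1}$ — here I must be careful that Lemma~\ref{lemma:repsetofcriticalsets} genuinely delivers a fresh component in each round, which holds because within round $i$ the family $\T^i$ only draws components from $\uC\setminus(\relC^0\cup\dots\cup\relC^{i-1})$), contradicting Lemma~\ref{lemma:nice:boundxh:boundac}, which bounds the number of active components by $\ell$. Hence at most $\ell$ problem components, $|X'|\le k' - \ell' \cdot 0$... — more precisely: a dominant $X'$ of size $\le k'$ satisfies $\ell' = |X'| - (2LP(G')-MM(G'))\ge$ (number of active components of $\relC$), so after spending one extra vertex per problem component we stay within $k' + (\text{problems}) \le$ ... — the precise accounting is routine once the key count "$\le\ell$ problem components" is in hand, completing the proof.
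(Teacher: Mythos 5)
Your core mechanism is the right one and matches the paper's (representative sets over the rounds, dominance giving reachability via Lemma~\ref{lemma:notreachable}, closeness via Lemma~\ref{lemma:closest}, and a contradiction with the bound on active components from Lemma~\ref{lemma:nice:boundxh:boundac}); the forward direction and the parameter-preservation computation are also fine. But the way you assemble the converse has a genuine gap. You reduce everything to ``at most $\ell$ irrelevant components have $Z_C$ bad'' and then propose to \emph{repair} each such component by ``spending one extra vertex.'' This does not close, for two reasons. First, the budget has zero slack: $k-k'=\sum_{C\in\irrC}\frac12(|C|+1)$ exactly, so the repaired cover has size $|X'|+\sum_{C\in\irrC}\frac12(|C|+1)+(\#\text{problems})$, and you would need $|X'|\leq k'-(\#\text{problems})$, which you never establish (your closing ``the precise accounting is routine'' is exactly the missing step, and with only the bound $\#\text{problems}\leq\ell$ it is false in general, e.g.\ when $|X'|=k'$). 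Second, the repair cost itself is unjustified: a minimum vertex cover of $G[C]$ that is forced to contain a bad set $Z_C$ need not have size $\frac12(|C|+1)+1$; nothing in Lemma~\ref{lemma:criticalsets:boundsize} bounds the deficit by one.

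The fix is already inside your own argument: the contradiction you build uses only the $\ell+1$ \emph{rounds} of the selection process, not $\ell+1$ problem components — a single problem component already places its set $T$ in every family $\T^1,\dots,\T^{\ell+1}$ and yields $\ell+1$ pairwise distinct active components, hence $|X'|\geq |M'|+|\uC'|+\ell+1=k'+1$, contradicting $|X'|\leq k'$. So the correct conclusion (and the one the paper proves as Claim~\ref{claim:key}) is that there are \emph{no} problem components at all; then $X'$ plus tight covers containing each $Z_C$ costs exactly $|X'|+\sum_{C\in\irrC}\frac12(|C|+1)\leq k$ and no repair step is needed. Two smaller omissions in the same part: to invoke Lemma~\ref{lemma:repsetofcriticalsets} you need $|\opX'|\leq\ell$ (from Lemma~\ref{lemma:nice:boundxh:boundac} together with $|X'|\leq k'$ and $\ell'=\ell$), and you silently identify the auxiliary graph $H'$ of $G'$ with the graph $H$ of $G$ used when $\relC$ was computed; this identity $H=H'$ holds because only unmatched components were deleted, but it has to be argued.
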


\begin{proof}
Let us first discuss the easy direction: Assume that $G$ has a vertex cover $X$ of size at most $k$; prove that $G'$ has a vertex cover of size at most $k'$. Let $X'$ denote the restriction of $X$ to $G'$, i.e., $X'=X\cap V(G')=X\setminus U$ where $U=\bigcup_{C\in\irrC} C$. Clearly, $X'$ is a vertex cover of $G'$. Concerning the size of $X'$ let us observe the following: For each component $C\in\irrC$ the set $X\cap C$ must be a vertex cover of $G[C]$ (this of course holds for any set of vertices in $G$). We know that each graph $G[C]$ for $C\in\irrC\subseteq\uC$ is factor-critical and, hence, the size of $X\cap C$ is at least $\frac12(|C|+1)$. Summing over all $C\in\irrC$ we find that $X'$ contains at least $\sum_{C\in\irrC}\frac12(|C|+1)$ vertices less than $X$. This directly implies that $|X'|\leq k'$ and completes this part of the proof.

Now, assume that $G'$ has a vertex cover of size at most $k'$; let $V':=V(G')$ and $E':=E(G')$. This part requires most of the lemmas that we established in the previous sections. It is of particular importance, that from the nice decomposition $(A,B,D,M)$ of $G$ we can derive a very similar nice decomposition of $G'$. For convenience let $\irrD:=\bigcup_{C\in\irrC}C$. By Lemma~\ref{lemma:inheritance} we may repeatedly delete unmatched components, such as $\irrC\subseteq \uC$, and always derive a nice decomposition of the resulting graph. Doing this for all components in $\irrC$ we end up with graph $G'$ and the nice decomposition $(A,B,D',M')$ where $M'$ is the restriction of $M$ to $V(G')=V\setminus \irrD$ and $D'=D\setminus \irrD$.

Let us now fix an arbitrary dominant vertex cover $X'$ of $G'$ with respect to $(A,B,D',M')$, i.e., $X'$ is of minimum size and contains the fewest vertices of $D'$ among minimum vertex covers of $G'$; clearly $|X'|\leq k'$. Our strategy will be to construct a vertex cover of $G$ of size at most $k$ by adding a vertex cover of size $\frac12(|C|+1)$ for each component $C\in\irrC$. The crux with this idea lies in the edges between components $C$ and the set $A$. We will need to show that we can cover edges between $C$ and $A\setminus X'$ by the selection of vertices in $C$ without spending more than $\frac12(|C|+1)$. Define $H':=H(G',A,B,D',M')$ according to Definition~\ref{definition:graphh} and define $\opX':=\opX(A_1,A_3,M',X')$ according to Definition~\ref{definition:setxop}; by Lemma~\ref{lemma:closest} the set $\opX'$ is closest to $A_3$ in $H'$ and by Lemma~\ref{lemma:nice:boundxh:boundac} we have $|\opX'|\leq \ell$. We claim that $H'$ is in fact identical with $H=H(G,A,B,D,M)$; let us see why this holds: Both graphs are on the same vertex set $A$. There is a directed edge $(u,v)$ in $H$ if there is a vertex $w\in D$ with $\{u,w\}\in E\setminus M$ and $\{w,v\}\in M$. Note that this implies $w\notin \irrD$ as all components in $\irrC\subseteq\uC$ are unmatched, whereas $\{w,v\}\in M$ and $w\in D$ and $v\in A$. Thus, $w$ exists also in $G'$ and $\{w,v\}\in M'$ since it is not an edge between vertices of a component in $\uC$. Similarly, $\{u,w\}\in E'\setminus M'$ since $M'\subseteq M$ and $E'$ contains all edges of $E$ that have no endpoint in $\irrD$. Thus, $(u,v)$ is also a edge of $H'$. Conversely, if $(u,v)$ is a directed edge of $H'$ then there exists $w\in D'$ with $\{w,v\}\in M'$ and $\{u,w\}\in E'\setminus M'$. Clearly, $w\in D\supseteq D'$ and $\{w,v\}\in M\supseteq M'$. Since $M$ is a matching, it cannot contain both $\{u,w\}$ and $\{w,v\}$, hence $\{u,w\}\notin M$. Thus, using $E'\subseteq E$ we have $\{u,w\}\in E\setminus M$, implying that $(u,w)$ is also an edge of $H$. Thus, the two graphs $H$ and $H'$ are identical and, in particular, $\opX'$ is also closest to $A_3$ in $H$.

Consider now the set $X'$ as a partial vertex cover of $G$. There are uncovered edges, i.e., with no endpoint in $X'$, inside components $C\in\irrC$ and between such components and vertices in $A\setminus X'$. Since the remaining budget of $k-k'$ is exactly equal to smallest vertex covers for the components in $\irrC$ we cannot add any vertices that are not in such a component (and not more than $\frac12(|C|+1)$ per component $C$). Thus, if $C\in\irrC$ and $A\setminus X'$ has a set $Z$ of neighbors in $C$, then the question is whether there is size $\frac12(|C|+1)$ vertex cover of $G$ that includes $Z$. We will prove that this is always the case.

\begin{claim}\label{claim:key}
Let $C\in\irrC\subseteq\uC$ and let $Z_C:=N_G(A\setminus X')\cap C$. There is a vertex cover $X_C$ of $G[C]$ with $Z_C\subseteq X_C$ of size at most $\frac12(|C|+1)$.
\end{claim}

\begin{proof}
Assume for contradiction that there is no vertex cover of $G[C]$ that includes $Z_C$ and has size at most $\frac12(|C|+1)$. By Lemma~\ref{lemma:criticalsets:boundsize} there is a subset $Z\subseteq Z_C$ of size at most three such that no vertex cover of $G[C]$ of size at most $\frac12(|C|+1)$ contains $Z$: Let $Z$ be any minimal subset of $Z_C$ with this property; the lemma implies that $|Z|\leq 3$. (Note that $C\in\irrC\subseteq \uC$ implies that $C$ has at least three vertices and that it is factor-critical as a component of $G[D]$.) Let $A_C$ be a minimal subset of $A\setminus X'$ such that its neighborhood in $C$ includes the set $Z$; since $Z$ has size at most three, the set $A_C$ also has size at most three. Since $A_C\cap X'=\emptyset$, by Lemma~\ref{lemma:notreachable}, each $v\in A_C$ is reachable from $A_3$ in $H'-\opX'=H-\opX'$.

We first prove that $C$ must have been considered in all $\ell+1$ iterations of computing $\relC$. If $Z=\emptyset$ then $C$ has no vertex cover of size $\frac12(|C|+1)$. This, however, would imply that $C\in\relC^0\subseteq\relC$; a contradiction. For the remainder of the proof we have $Z\neq\emptyset$, i.e., $1\leq |Z|\leq 3$, and hence the set $A_C$ must be nonempty to ensure $Z\subseteq N_G(A_C)\cap C$ (and of size at most three). It follows that in each repetition of Step~\ref{step:selectti} the sets $T=A_C\subseteq A$, component $C$, and set $Z$ were considered. (Note that $C\in\irrC=\uC\setminus \relC=\uC\setminus (\relC^0\cup\relC^1\cup\ldots\cup\relC^{\ell+1})$.) We have $T=A_C\subseteq A$ nonempty and of size at most three, $Z\subseteq N_G(T)\cap C$ of size at most three, and there is no vertex cover of $G[C]$ of size $\frac12(|C|+1)$ that contains $Z$. Thus, the set $A_C$ is contained in all sets $\T^1,\ldots,\T^{\ell+1}$.

Now, for each $i\in\{1,\ldots,\ell+1\}$, we need to consider why $C$ was not added to $\relC^i$. Let us consider two cases, namely $A_C\in\T^{i*}$ and $A_C\notin\T^{i*}$: If $A_C\in\T^{i*}$ then in Step~\ref{step:selectci} we have selected a component $C^i\neq C$, with $C^i\in\uC\setminus(\relC^0\cup\ldots\cup\relC^{i-1})$, such that there is a nonempty set $Z^i\subseteq N_G(A_C)\cap C^i$ such that $G[C^i]$ has no vertex cover of size $\frac12(|C^i|+1)$ that contains $Z^i$. For later reference let us remember the triple $(C^i,Z^i,A^i)$ with $A^i:=A_C$. We know that $C^i\in\relC^i\subseteq\relC$, we know that there is no vertex cover of $G[C^i]$ of size $\frac12(|C^i|+1)$ that contains $Z^i$, and $Z^i\subseteq N_G(A^i)\cap C^i$. Crucially, all vertices $v\in A^i$ are reachable from $A_3$ in $H-\opX'$. (There will be a second source of such triples in the case that $A_C\notin\T^{i*}$, but with $A^i\neq A_C$ and with slightly more work for proving these properties of the triples in question.)

In the second case we have $A_C\notin\T^{i*}$. By Lemma~\ref{lemma:repsetofcriticalsets}, since $A_C\in\T^i$, all vertices of $A_C$ are reachable from $A_3$ in $H-\opX'$, and $\opX'\subseteq A$ of size at most $\ell$, it follows that $\T^{i*}$ contains a set $A^i$ such that all vertices of $A^i$ are reachable from $A_3$ in $H-\opX'$. Thus, in Step~\ref{step:selectci} we have selected a component $C^i$ such that there is a set $Z^i\subseteq N_G(A_i)\cap C^i$ such that $G[C^i]$ has no vertex cover of size $\frac12(|C^i|+1)$ that contains $Z^i$. We remember the triple $(C^i,Z^i,A^i)$.

We find that, independently of whether $A_C\in\T^{i*}$ in iteration $i\in\{1,\ldots,\ell+1\}$ we get a triple $(C^i,Z^i,A_i)$ such that $C^i\in\relC^i$, with $Z^i\subseteq N_G(A_i)\cap C^i$, and such that all vertices $v\in A_i$ are reachable from $A_3$ in $H-\opX'$. We observe that the components $C^i$ are pairwise distinct: Say $1\leq i<j\leq\ell+1$. Then $C^i\in\relC^i$ and $C^j\in \uC\setminus(\relC^0\cup\ldots\cup\relC^{j-1})$, implying that $C^j\notin \relC^i$ as $i\leq j-1$, and hence that $C^j\neq C_i$. We use these components to prove that $X'$, as a vertex cover of $G'$, has at least $\ell+1$ active components, namely $C^1,\ldots,C^{\ell+1}$, which will be seen to contradict that it has size at most $k'$.

Let $i\in \{1,\ldots,\ell+1\}$. We have that all vertices of $A^i$ are reachable from $A_3$ in $H-\opX'$; the same is true in $H'-{\opX}$ since $H=H'$. From Lemma~\ref{lemma:reachable} applied to graph $G'$, nice decomposition $(A,B,D',M')$ of $G'$, and vertex cover $X'$ we get that $X'$ contains no vertex of $A^i$.
Since $C^i\in\relC$ we know that $C^i$ is also an unmatched non-singleton component of $G[D']$ with respect to matching $M'$ (Lemma~\ref{lemma:inheritance}). As $X'\cap A^i=\emptyset$ it follows directly that $X'$ contains $N_{G'}(A^i)\cap C^i=N_G(A^i)\cap C^i\supseteq Z^i$ (regarding $N_{G'}(A^i)\cap C^i=N_G(A^i)\cap C^i$ note that $G'$ differs from $G$ only by removing vertices of some other components of $\uC$, none of which are in these sets). Since $G'[C^i]=G[C^i]$ has no vertex cover of size $\frac12(|C^i|+1)$ that contains $Z^i$, it follows that $X'$ contains more than $\frac12(|C^i|+1)$ vertices of $C^i$, making $C^i$ an active component with respect to $X'$.

We proved that the vertex cover $X'$ of $G'$ has at least $\ell+1$ active components. Thus, by Lemma~\ref{lemma:nice:boundxh:boundac} its size is at least $|M'|+|\uC'|+\ell+1$ where $\uC'$ is the number of unmatched non-singleton components of $G'[D']$. On the other hand, we have $|X'|\leq k'= k-\sum_{C\in\irrC}\frac12(|C|+1)$ and $\ell=k-(2LP(G)-MM(G))=k-(|M|+|\uC|)$. It remains to compare $|M'|+|\uC'|$ with $|M|+|\uC|$.
\begin{enumerate}
 \item Let $\irrM\subseteq M$ denote the set of edges in $M$ whose endpoints are in $\irrD$; recall that $\irrD$ denotes the set of all vertices of (irrelevant) components in $\irrC$.
 \item Thus, when creating $G'$, we are deleting $|\irrD|$ vertices and $|\irrM|$ matching edges (recalling that there are no matching edges with exactly one endpoint in $\irrD$). Each component $C\in\irrC$ contributes $|C|$ vertices to $\irrD$ and $\frac12(|C|-1)$ edges to $\irrM$. Thus, $|\irrM|=\sum_{C\in\irrC}\frac12(|C|-1)=\sum_{C\in\irrC}\frac12(|C|+1)-|\irrC|$.
 \item Using this, we get
 \begin{align*}
  k'&= k-\sum_{C\in\irrC}\frac12(|C|+1)\\
    &= |M| + |\uC| + \ell - (|\irrM| + |\irrC|)\\
    &= |M'| + |\uC'| + \ell.
 \end{align*}
\end{enumerate}
Since $|X'|\geq |M'|+|\uC'|+\ell+1=k'+1$, this contradicts the assumption that $X'\leq k'$. Thus, the initial assumption in the claim proof must be wrong and we get that there does exist a vertex cover of $G[C]$ that there does exist a vertex cover $X_C$ of size $\frac12(|C|+1)$ that contains $Z_C$.
\end{proof}

Using the claim we can now easily complete $X'$ to a vertex cover $X$ of $G$ of size at most $k$: As observed before, we need to add vertices such as to cover all edges inside components $C\in\irrC$ and edges between such components and $A\setminus X'$. Begin with $X:=X'$. Consider any component $C\in\irrC$ and let $Z_C:=N_G(A\setminus X')\cap C$. By Claim~\ref{claim:key}, we know that there exists a vertex cover $X_C$ of size $\frac12(|C|+1)$ that contains $Z_C$. Clearly, by adding $X_C$ to $X$ we cover all edges of $C$ and all edges between $C$ and neighbors of $C$ that were not covered by $X'$. (The endpoints of these edges in $C$ exactly constitute the set $Z_C\subseteq X_C$.) By performing this step for all components $C\in\irrC$ we add exactly $\sum_{C\in\irrC}\frac12(|C|+1)$ vertices, implying that 
\[
|X|= |X'|+\sum_{C\in\irrC}\frac12(|C|+1) \leq k' + \sum_{C\in\irrC}\frac12(|C|+1).
\]
Since $\irrD=\bigcup_{C\in\irrC} C$ are the only vertices present in $G$ but not in $G'=G-\irrD$ it follows that all edges with no endpoint in $\irrD$ are already covered by $X'$. Thus, $X$ is indeed a vertex cover of $G$ of size at most $k$, as claimed.

It remains, to prove that $k-(2LP(G)-MM(G))=k'-(2LP(G')-MM(G'))$. We already proved that $k'=|M'|+|\uC'|+\ell$. By Lemma~\ref{lemma:nice:vclb}, since $(A,B,D',M')$ is a nice decomposition of $G'$, we have $|M'|+|\uC'|=2LP(G')-MM(G')$. This directly implies that $k'-(2LP(G')-MM(G'))=k'-(|M'|+|\uC'|)=\ell=k-(2LP(G)-MM(G))$.
\end{proof}

We can now complete our kernelization. According to Lemma~\ref{lemma:removeirrelevantcomponents} it is safe to delete all irrelevant components (and update $k$ accordingly). We obtain a graph $G'$ and integer $k'$ such that the following holds:
\begin{enumerate}
 \item $G'$ has a vertex cover of size at most $k'$ if and only if $G$ has a vertex cover of size at most $k$, i.e., the instances $(G,k)$ and $(G',k')$ for \vc are equivalent.
 \item As a part of the proof of Lemma~\ref{lemma:removeirrelevantcomponents} we showed that
 \[
  k'=|M'|+|\uC'|+\ell
 \]
 where $\uC'$ is the set of unmatched non-singleton components of $G'[D']$ with respect to $M'$.
 \item From Lemma~\ref{lemma:inheritance} we know that $\uC'$ is equal to the set $\uC$ (of unmatched non-singleton components of $G[D]$ with respect to $M$) minus the components $C\in\irrC$ that were removed to obtain $G'$. In other words, $\uC'=\uC\setminus\irrC=\relC$.
 \item We know from Step~\ref{step:returnrelc} that $|\relC|=\Oh(\ell^4)$. Hence, $|\uC'|=\Oh(\ell^4)$.
 \item Let us consider $p:=k'-|M'|$, which is the parameter value of $(G',k')$ when considered as an instance of \vc parameterized above the size of a maximum matching. Clearly,
 \[
  p=k'-|M'|=|M'|+|\uC'|+\ell-|M'|=\ell+\Oh(\ell^4)=\Oh(\ell^4).
 \]
 \item We can now apply any polynomial kernelization for \vcamm to get a polynomial kernelization for \vcanb. On input of $(G',k',p)$ it returns an equivalent instance $(G^*,k^*,p^*)$ of size $\Oh(p^c)$ for some constant $c$. We may assume that $k^*=\Oh(p^c)$ since else it would exceed the number of vertices in $G^*$ and we may as well return a \yes-instance of constant size. 
 
 Let $\ell^*=k^*-(2LP(G^*)-MM(G^*))$, i.e., the parameter value of the instance $(G^*,k^*,\ell^*)$ of \vcanb. Clearly, $\ell^*\leq k^*=\Oh(p^c)$. Thus, $(G^*,k^*,\ell^*)$ has size and parameter value $\Oh(p^c)$.
\end{enumerate}

Kratsch and Wahlstr\"om~\cite{KratschW12} give a randomized polynomial kernelization for \vcamm. The size is not analyzed since it relies on equivalence of \atwosat and \vcamm under polynomial parameter transformations~\cite{RamanRS11}; the reductions preserve the parameter value but may increase the size polynomially. The size obtained for \atwosat is $\Oh(p^{12})$, following from a kernelization to $\Oh(p^6)$ variables. Even without a size increase by the transformation back to \vcamm, which seems doable, we only get a size of $\Oh(p^{12})=\Oh(\ell^{48})$. We note, however, that the kernelization for \atwosat also relies, amongst others, on computing a representative set of reachable tuples in a directed graph. It is likely that a direct approach for kernelizing \vcanb could make do with only a single iteration of this strategy.

\begin{theorem}\label{theorem:main}
\vcanb has a randomized polynomial kernelization with error probability exponentially small in the input size.
\end{theorem}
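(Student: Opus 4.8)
The plan is to package the results of the preceding sections into a randomized polynomial parameter transformation from \vcanb to \vcamm. Since \vc is \NP-hard and the two problems are merely different parameterizations of it, the standard composition argument of Bodlaender et al.~\cite{BodlaenderTY11} then turns this transformation, together with the known randomized polynomial kernelization for \vcamm, into a randomized polynomial kernelization for \vcanb.

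First I would preprocess $(G,k,\ell)$ with the Nemhauser--Trotter/LP-based reduction rule (cf.~\cite{GargP16,CyganFKLMPPS15}): as recalled at the start of Section~\ref{section:kernelization}, this runs in polynomial time, preserves the answer, does not increase $\ell=k-(2LP(G)-MM(G))$, and yields $LP(G)=\frac12|V(G)|$. Next I would compute the Gallai--Edmonds decomposition $V=A\dunion B\dunion D$ in polynomial time and, via Lemma~\ref{lemma:kernel:nicedecomposition}, a maximum matching $M$ making $(A,B,D,M)$ a nice decomposition of $G$. Then I would run the iterative component-selection procedure of Section~\ref{section:kernelization} to compute the set $\relC$ of relevant components. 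Each of its $\ell+2$ iterations calls Lemma~\ref{lemma:repsetofcriticalsets} once, which errs with probability exponentially small in the input size; a union bound over the (at most $|V|$ many) iterations keeps the total error exponentially small. Conditioning on no error, Lemma~\ref{lemma:removeirrelevantcomponents} guarantees that deleting all irrelevant components $\irrC=\uC\setminus\relC$ and replacing $k$ by $k'=k-\sum_{C\in\irrC}\frac12(|C|+1)$ gives an instance $(G',k',\ell')$ equivalent to $(G,k,\ell)$ with $\ell'=\ell$.

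It remains to bound the output parameter. The proof of Lemma~\ref{lemma:removeirrelevantcomponents} shows $k'=|M'|+|\uC'|+\ell$, where $(A,B,D',M')$ is the nice decomposition of $G'$ inherited via Lemma~\ref{lemma:inheritance}, and by Step~\ref{step:returnrelc} we have $|\uC'|=|\relC|=\Oh(\ell^4)$. Since $M'$ is a maximum matching of $G'$, the ``above maximum matching'' parameter of $(G',k')$ is
\[
p:=k'-MM(G')=k'-|M'|=|\uC'|+\ell=\Oh(\ell^4),
\]
so $(G,k,\ell)\mapsto(G',k',p)$ is the desired randomized polynomial parameter transformation into \vcamm. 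Applying the randomized polynomial kernelization for \vcamm of Kratsch and Wahlstr\"om~\cite{KratschW12} (via the equivalence with \atwosat of Raman et al.~\cite{RamanRS11}) to $(G',k',p)$ produces an equivalent instance $(G^*,k^*)$ with $|V(G^*)|,k^*=\Oh(p^c)=\Oh(\ell^{4c})$ for a constant $c$ (truncating $k^*$ to $|V(G^*)|$ and otherwise outputting a trivial \yes-instance), together with a second exponentially small error probability that is again absorbed by a union bound. Reinterpreting $(G^*,k^*)$ as an instance of \vcanb, its parameter is $\ell^*=k^*-(2LP(G^*)-MM(G^*))\le k^*=\Oh(\ell^{4c})$, so we have produced an equivalent instance of size and parameter polynomial in $\ell$; this is the claimed randomized polynomial kernelization, with failure probability exponentially small in the input size.

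The main thing to get right here is not a single hard step but the bookkeeping: checking that $\ell$ is preserved throughout (guaranteed by the LP-preprocessing and by Lemma~\ref{lemma:removeirrelevantcomponents}), that composing two randomized subroutines, each with exponentially small error, still has exponentially small error after a union bound over polynomially many events, and that a polynomial parameter transformation between two parameterizations of the same \NP-hard problem, followed by a polynomial kernel for the target parameterization, is again a polynomial kernel. All the genuinely difficult work --- the critical-set bound (Lemma~\ref{lemma:criticalsets:boundsize}), the nice-decomposition machinery, the replacement lemma (Lemma~\ref{lemma:unify}) and its consequences, and the correctness of the component-selection procedure (Lemma~\ref{lemma:removeirrelevantcomponents}) --- has already been carried out in the preceding sections.
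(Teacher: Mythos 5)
Your proposal is correct and follows essentially the same route as the paper: LP-based preprocessing, a nice decomposition via Lemma~\ref{lemma:kernel:nicedecomposition}, the iterative selection of $\relC$ using Lemma~\ref{lemma:repsetofcriticalsets}, deletion of irrelevant components justified by Lemma~\ref{lemma:removeirrelevantcomponents}, the bound $k'-MM(G')=\Oh(\ell^4)$, and composition with the randomized kernelization for \vcamm of~\cite{KratschW12}. The only (immaterial) slips are bookkeeping details such as attributing a call of Lemma~\ref{lemma:repsetofcriticalsets} to all $\ell+2$ iterations rather than the $\ell+1$ iterations after the deterministic $\relC^0$ step.
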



\section{Proof of Lemma~\ref{lemma:repsetofcriticalsets}}\label{section:proofofmatroidresult}

In this section we provide a proof of Lemma~\ref{lemma:repsetofcriticalsets}, which is a generalization of \cite[Lemma 2]{KratschW11_arxiv}; in that work, it is already pointed out that a generalization to $q$-tuples is possible by the same approach. Accordingly, the proof in this section is provided only to make the present work self-contained.

We need to begin with some basics on matroids; for a detailed introduction to matroids see Oxley~\cite{OxleyBook}: A \emph{matroid} is a pair $M=(U,\I)$ where $U$ is the \emph{ground set} and $\I\subseteq 2^U$ is a family of \emph{independent sets} such that
\begin{enumerate}
 \item $\emptyset\in\I$,
 \item if $I\subseteq I'$ and $I'\in\I$ then $I\in\I$, and
 \item if $I,I'\in\I$ with $|I|<|I'|$ then there exists $u\in I'\setminus I$ with $I\cup\{u\}\in\I$; this is called the augmentation axiom.
\end{enumerate}
A set $I\in\I$ is \emph{independent}; all other subsets of $U$ are \emph{dependent}. The maximal independent sets are called \emph{bases}; by the augmentation axiom they all have the same size. For $X\subseteq U$, the \emph{rank $r(X)$ of $X$} is the cardinality of the largest independent set $I\subseteq X$. The \emph{rank of $M$} is $r(M):=r(U)$.

Let $A$ be a matrix over a field $\fieldF$, let $U$ be the set of columns of $A$, and let $\I$ contain those subsets of $U$ that are linearly independent over $\fieldF$. Then $(U,\I)$ defines a matroid $M$ and we say that $A$ \emph{represents} $M$. A matroid $M$ is \emph{representable (over \fieldF)} if there is a matrix $A$ (over \fieldF) that represents it. A matroid representable over at least one field is called \emph{linear}.

Let $D=(V,E)$ be a directed graph and $S,T\subseteq V$. The set $T$ is \emph{linked to $S$} if there exist $|T|$ vertex-disjoint paths from $S$ to $T$; paths of length zero are permitted. For a directed graph $D=(V,E)$ and $S,T\subseteq V$ the pair $M=(V,\I)$ is a matroid, where $\I$ contains those subsets $T'\subseteq T$ that are linked to $S$ \cite{Perfect1968}. Matroids that can be defined in this way are called \emph{gammoids}; the special case with $T=V$ is called a \emph{strict gammoid}. Marx~\cite{Marx09} gave an efficient randomized algorithm for finding a representation of a strict gammoid given the underlying graph; the error probability can be made exponentially small in the runtime.

\begin{theorem}[\cite{Perfect1968,Marx09}]\label{theorem:gammoidrepresentation}
Let $D=(V,E)$ be a directed graph and let $S\subseteq V$. The subsets $T\subseteq V$ that are linked to $S$ from the independent sets of a matroid over $M$. Furthermore, a representation of this matroid can be obtained in randomized polynomial time with one-side error.
\end{theorem}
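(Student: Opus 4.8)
The statement packages two facts: (i) the subsets of $V$ linked to $S$ are exactly the independent sets of a matroid $M$ on ground set $V$ (a \emph{strict gammoid}), and (ii) a matrix representing $M$ over a finite field can be computed in randomized polynomial time with one-sided error. Write $\I=\{T\subseteq V:T\text{ is linked to }S\}$. I would first clear the routine matroid axioms. That $\emptyset\in\I$ is immediate, as no paths are needed. For closure under subsets: if $P_1,\dots,P_{|T|}$ are pairwise vertex-disjoint paths from $S$ whose endpoints are precisely the vertices of $T$, and $T'\subseteq T$, then the subfamily of those $P_i$ whose endpoint lies in $T'$ consists of $|T'|$ pairwise vertex-disjoint paths, and since all $|T|$ endpoints are distinct and lie in $T$, none of the chosen paths meets $T'$ except at its own endpoint; this subfamily is a linkage onto $T'$, so $T'\in\I$. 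The entire weight of (i) is therefore the exchange axiom.

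For the exchange axiom my plan is to realize $M$ as the dual of a transversal matroid, which also prepares the ground for the representation. From $(D,S)$ one constructs in polynomial time a bipartite graph $\Gamma$ with one color class equal to $V$ — the standard presentation of a strict gammoid as the dual of a transversal matroid — whose transversal matroid $M(\Gamma)$ satisfies $M(\Gamma)=M^{*}$. Granting this, (i) reduces to the classical fact that transversal matroids obey the exchange axiom, which follows from the deficiency (Rado) form of Hall's theorem: the partial transversals of a finite family of sets satisfy the matroid exchange property. Since the dual of a matroid is a matroid, $M=(M(\Gamma))^{*}$ is a matroid whose independent sets are exactly the subsets of $V$ linked to $S$. (If a fully self-contained combinatorial proof of (i) is preferred, one can instead argue the exchange axiom directly, following Perfect~\cite{Perfect1968}: given linkages $\mathcal P$ onto $I$ and $\mathcal Q$ onto $J$ with $|I|<|J|$, chosen among all such pairs so as to use as few arcs in total as possible, the multiset $\mathcal P\cup\mathcal Q$ has in- and out-degree at most two at every vertex, so following arcs alternately from a $J$-endpoint that is not an $I$-endpoint produces an alternating trail; minimality of the arc count forbids every way of closing this trail up or returning it to $S$ except the one that ends at some $v\in J\setminus I$ and reroutes $\mathcal P$ into a linkage onto $I\cup\{v\}$, whence $I\cup\{v\}\in\I$.)

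For (ii) I would follow Marx~\cite{Marx09}. Represent $M(\Gamma)$ by the classical generic matrix of a transversal matroid: the matrix $A$ with one row per color class $S_j$ of $\Gamma$ and one column per $v\in V$, whose $(S_j,v)$-entry is a fresh indeterminate $x_{j,v}$ if $v\in S_j$ and $0$ otherwise. By the Cauchy--Binet formula the maximal minor of $A$ indexed by a column set $F$ (with $|F|$ equal to the number of color classes, i.e.\ $r(M(\Gamma))$) is, as a polynomial, a signed sum over systems of distinct representatives of the classes within $F$; hence it is a nonzero polynomial exactly when $F$ is a partial transversal, i.e.\ independent in $M(\Gamma)$. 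Substituting independent uniformly random elements of a finite field of size $2^{\mathrm{poly}(|V|)}$ keeps all $\le 2^{|V|}$ of these minors nonzero simultaneously except with probability exponentially small in $|V|$, by the Schwartz--Zippel lemma and a union bound; a run on which this holds outputs an exact representation of $M(\Gamma)$, so the error is one-sided. I would then pass to a representation of the dual deterministically: after permuting columns, row-reduce $A$ to the form $[\,I_r\mid B\,]$ and output $[\,-B^{\top}\mid I_{|V|-r}\,]$, the standard representing matrix of the dual matroid $M=(M(\Gamma))^{*}$. Extracting $\Gamma$ from $(D,S)$, forming $A$, randomizing, row-reducing and dualizing are all polynomial time.

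The main obstacle lies in part (i), the exchange axiom. In the duality route it is the pair of classical facts that transversal matroids are matroids — the Rado/Hall deficiency argument — and that $M$ is the dual of a transversal matroid presented by an explicit $\Gamma$; in the direct route it is verifying that a pair of linkages minimal in total arc count admits no alternating trail other than an augmenting one. Part (ii) is then routine on top of three standard ingredients: Cauchy--Binet for the generic representation of a transversal matroid, the Schwartz--Zippel lemma over a field of exponential size for one-sided randomized correctness, and Gaussian elimination for the passage to the dual matroid.
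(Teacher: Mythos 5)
The paper does not prove this statement at all: it is imported from Perfect~\cite{Perfect1968} (the matroid property of linked sets) and Marx~\cite{Marx09} (the randomized representation) and used as a black box in Section~\ref{section:proofofmatroidresult}. Your sketch reconstructs essentially the argument of those cited sources rather than a new route: strict gammoids as duals of transversal matroids, a generic matrix for the transversal matroid made concrete by random substitution over a field of size $2^{\mathrm{poly}(|V|)}$ via Schwartz--Zippel, and the standard $[\,I_r\mid B\,]\mapsto[\,-B^{\top}\mid I_{|V|-r}\,]$ passage to the dual. As an outline it is sound, but note what is still owed and a few slips. The heart of part (i) --- that the linked sets are exactly the independent sets of the dual of an \emph{explicitly constructed} transversal matroid $M(\Gamma)$ (the Ingleton--Piff correspondence) --- is asserted, not proved, and you never specify $\Gamma$; also the phrasing ``$M(\Gamma)=M^{*}$'' presupposes that $M$ is a matroid, which is the thing being established, so the non-circular formulation is the one you give afterwards, namely that the independent sets of $(M(\Gamma))^{*}$ coincide with the linked sets. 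In part (ii), the expansion of a minor of the generic matrix into a signed sum over systems of distinct representatives is just the Leibniz determinant expansion (Cauchy--Binet is not needed), and the union bound should range over all square submatrices (pairs of a column set and an equally sized row set), not only the maximal minors, since independence of a column set smaller than the rank is witnessed by a nonsingular square submatrix; with the exponential field size this still yields exponentially small, one-sided error, matching the statement.
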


As in previous work~\cite{KratschW12} we use the notion of \emph{representative sets}. The definition was introduced by Marx~\cite{Marx09} inspired by earlier work of Lov\'asz~\cite{Lovasz1977}.

\begin{definition}[\cite{Marx09}]\label{definition:representativesets}
Let $M=(U,\I)$ be a matroid and let $\Y$ be a family of subsets of $U$. A subset $\Y^*\subseteq\Y$ is \emph{$r$-representative} for $\Y$ if the following holds: For every $X\subseteq U$ of size at most $r$, if there is a set $Y\in\Y$ such that $X\cap Y=\emptyset$ and $X\cup Y\in\I$ then there is a set $Y^*\in\Y^*$ such that $X\cap Y^*=\emptyset$ and $X\cup Y^*\in\I$.
\end{definition}

Note that in the definition we may as well require that $\Y$ is a family of independent sets of $M$; independence of $X\cup Y$ requires independence of $Y$.

Marx~\cite{Marx09} proved an upper bound on the required size of representative subsets of a family $\Y$ in terms of the rank of underlying matroid and the size of the largest set in $U$. The upper bound proof is similar to \cite[Theorem 4.8]{Lovasz1977} by Lov\'asz.

\begin{lemma}[\cite{Marx09}]\label{lemma:representativeset}
Let $M$ be a linear matroid of rank $r+s$ and let $\Y=\{Y_1,\ldots,Y_m\}$ be a collection of independent sets, each of size of $s$. If $\Y>\binom{r+s}{s}$ then there is a set $Y\in\Y$ such that $\Y\setminus \{Y\}$ is $r$-representative for $\Y$. Furthermore, given a representation $A$ of $M$, we can find such a set $Y$ in $f(r,s)\cdot(||A||m)^{\Oh(1)}$ time.
\end{lemma}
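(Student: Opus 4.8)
The plan is to prove Lemma~\ref{lemma:representativeset} by the exterior (wedge) algebra argument due to Lov\'asz, which is also the route taken by Marx. Fix a representation $A$ of $M$ over a field $\fieldF$ and identify each $u\in U$ with its column vector $\vec u\in\fieldF^{r+s}$. For a size-$s$ set $Y=\{y_1,\dots,y_s\}$ I define its \emph{wedge vector} $\bigwedge Y:=\vec{y_1}\wedge\dots\wedge\vec{y_s}$, an element of the $\binom{r+s}{s}$-dimensional space $\bigwedge^s\fieldF^{r+s}$; its coordinates in the standard basis are exactly the $s\times s$ minors of $[\,\vec{y_1}\mid\dots\mid\vec{y_s}\,]$. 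The two standard facts I would invoke are: (i) vectors $v_1,\dots,v_t$ satisfy $v_1\wedge\dots\wedge v_t=0$ iff they are linearly dependent; and (ii) for $t=r+s$ the product $v_1\wedge\dots\wedge v_{r+s}$ equals $\det[\,v_1\mid\dots\mid v_{r+s}\,]$ times a fixed generator of the one-dimensional space $\bigwedge^{r+s}\fieldF^{r+s}$. Consequently $\bigwedge Y\neq 0$ whenever $Y\in\I$, and for a size-$r$ set $X$ the product $(\bigwedge X)\wedge(\bigwedge Y)$ is nonzero iff $X\cup Y\in\I$ (which in particular forces $X\cap Y=\emptyset$).

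The heart of the proof is a dimension count. If $m=|\Y|>\binom{r+s}{s}=\dim\bigwedge^s\fieldF^{r+s}$, then the nonzero vectors $\bigwedge Y_1,\dots,\bigwedge Y_m$ are linearly dependent, so after reindexing there are scalars $\mu_i$ with $\bigwedge Y_m=\sum_{i<m}\mu_i\,\bigwedge Y_i$. I claim $\Y\setminus\{Y_m\}$ is $r$-representative. Let $X\subseteq U$ with $|X|\le r$ and suppose some $Y_j\in\Y$ has $X\cap Y_j=\emptyset$ and $X\cup Y_j\in\I$; only $j=m$ needs work. Using the augmentation axiom I extend $X\cup Y_m$ (independent, of size $|X|+s\le r+s$) to a base $B$; then $X':=B\setminus Y_m$ has size exactly $r$, contains $X$, is disjoint from $Y_m$, and $X'\cup Y_m=B\in\I$. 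Apply the linear functional $w\mapsto(\bigwedge X')\wedge w$ from $\bigwedge^s\fieldF^{r+s}$ to $\bigwedge^{r+s}\fieldF^{r+s}\cong\fieldF$ to the dependency, obtaining $(\bigwedge X')\wedge(\bigwedge Y_m)=\sum_{i<m}\mu_i\,(\bigwedge X')\wedge(\bigwedge Y_i)$. The left side is nonzero because $X'\cup Y_m=B$ is independent, so some summand on the right is nonzero, i.e.\ there is $i<m$ with $X'\cup Y_i\in\I$; this yields $X'\cap Y_i=\emptyset$, hence $X\cap Y_i=\emptyset$, and $X\cup Y_i\subseteq X'\cup Y_i\in\I$ gives $X\cup Y_i\in\I$. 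Thus $Y_i\in\Y\setminus\{Y_m\}$ witnesses representativity for $X$, completing the claim.

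For the algorithmic part, given $A$ one computes each $\bigwedge Y_i$ by evaluating $\binom{r+s}{s}$ determinants of $s\times s$ submatrices over $\fieldF$, then runs Gaussian elimination on the resulting $m$ vectors in a $\binom{r+s}{s}$-dimensional space; when $m>\binom{r+s}{s}$ elimination must expose a column that is a linear combination of earlier ones, and the corresponding $Y$ is the set to drop. The arithmetic cost is polynomial in $m$, in $\binom{r+s}{s}$, and in the bit-length $\|A\|$ of the representation, which fits the stated $f(r,s)\cdot(\|A\|m)^{\Oh(1)}$ bound; note that this step is deterministic, so the randomness in Lemma~\ref{lemma:repsetofcriticalsets} enters only through Theorem~\ref{theorem:gammoidrepresentation} when producing a representation of the underlying gammoid. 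I expect the only real friction to be the exterior-algebra bookkeeping: making fact~(ii) and the ``wedge nonzero iff independent'' equivalence precise, and carefully extending $X$ to a size-$r$ set $X'$ disjoint from $Y_m$ so that the target space $\bigwedge^{r+s}\fieldF^{r+s}$ is one-dimensional and the linear-functional trick applies verbatim. Once that scaffolding is fixed, the dependency and functional argument are short.
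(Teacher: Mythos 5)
Your proof is correct and is precisely the standard Lov\'asz--Marx exterior-algebra argument; the paper itself does not prove Lemma~\ref{lemma:representativeset} but cites Marx~\cite{Marx09} (noting the proof follows Lov\'asz), so you have simply supplied the omitted proof along the same intended route. The only (trivial) point worth making explicit is that the given representation $A$ should first be row-reduced so that the columns live in $\fieldF^{r+s}$, ensuring $\bigwedge^{r+s}\fieldF^{r+s}$ is one-dimensional and your determinant/functional step applies verbatim.
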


The factor of $f(r,s)$ in the runtime of Lemma~\ref{lemma:representativeset} is due to performing linear algebra operations on vectors of dimension $\binom{r+s}{s}$. Since our application of the lemma has $s=3$ and $r$ bounded by the number of vertices in the underlying graph, this factor of the runtime is polynomial in the input size. We also remark that we will tacitly use the lemma for directly computing an $r$-representative subset $\Y^*\subseteq\Y$ of size at most $\binom{r+s}{s}$ since the lemma can clearly be iterated to achieve this. We note that faster algorithms for computing independent sets was given by Fomin et al.~\cite{FominLS14}, which leads to significantly better runtimes, in particular for the case of uniform matroids, when $s$ is not constant.

Now we are ready to prove Lemma~\ref{lemma:repsetofcriticalsets}. The proof follows the strategy used for \cite[Lemma 2]{KratschW11_arxiv}. For convenience, let us recall the lemma statement.

\begin{lemma}[recalling Lemma~\ref{lemma:repsetofcriticalsets}]
Let $H=(V_H,E_H)$ be a directed graph, let $S_H\subseteq V_H$, let $\ell\in\N$, and let $\T$ be a family of nonempty vertex sets $T\subseteq V_H$ each of size at most three. In randomized polynomial time, with failure probability exponentially small in the input size, we can find a set $\T^*\subseteq\T$ of size $\Oh(\ell^3)$ such that for any set $X_H\subseteq V_H$ of size at most $\ell$ that is closest to $S_H$ if there is a set $T\in\T$ such that all vertices $v\in T$ are reachable from $S_H$ in $H-X_H$ then there is a corresponding set $T^*\in\T^*$ satisfying the same properties.
\end{lemma}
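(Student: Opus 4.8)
The plan is to reduce the statement to an application of representative sets in a gammoid, following the strategy of \cite[Lemma 2]{KratschW11_arxiv}. The key point is to encode ``all vertices of $T$ are reachable from $S_H$ in $H-X_H$, where $X_H$ is closest to $S_H$'' as an independence condition in a suitable matroid, so that Lemma~\ref{lemma:representativeset} applies with $s=3$ and $r=\ell$. First I would build an auxiliary directed graph $H'$ from $H$ by adding, for each set $T=\{t_1,t_2,t_3\}\in\T$ (padding sets of size one or two by repeating a vertex, or handling them separately), three new sink vertices $t_1^T,t_2^T,t_3^T$ together with arcs $t_j\to t_j^T$; more precisely, to obtain vertex-disjointness we should instead add three internally disjoint length-two gadgets from $t_j$ into a common ``bundle'' so that the three copies can only be reached by three vertex-disjoint $S_H$-paths. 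Let $\hat{T}=\{t_1^T,t_2^T,t_3^T\}$ denote this bundle. Take the strict gammoid $M$ on the vertex set of $H'$ with source set $S_H$, whose independent sets are the vertex subsets linked to $S_H$; by Theorem~\ref{theorem:gammoidrepresentation} a representation can be computed in randomized polynomial time with exponentially small error.

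The crucial structural step is the following equivalence, which I would state and prove as a claim: if $X_H$ is closest to $S_H$ in $H$ and $|X_H|\le\ell$, then all three vertices of $T$ are reachable from $S_H$ in $H-X_H$ if and only if $X_H\cup\hat{T}$ is independent in $M$ (equivalently $X_H$ and $\hat{T}$ are disjoint and $X_H\cup\hat{T}$ is linked to $S_H$). For the forward direction I would invoke Proposition~\ref{proposition:closest}: closeness of $X_H$ gives $|X_H|+1$ vertex-disjoint $S_H$-to-$(X_H\cup\{v\})$ paths for any $v$ reachable in $H-X_H$, and one iterates/combines this (using that the three $t_j^T$ sit behind disjoint gadgets) to get $|X_H|+3$ disjoint paths from $S_H$ to $X_H\cup\hat{T}$, i.e.\ independence. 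For the converse, if $X_H\cup\hat{T}$ is linked to $S_H$ then in particular deleting $X_H$ cannot disconnect any $t_j^T$ from $S_H$ (the path to $t_j^T$ in the linkage avoids all of $X_H$ since the paths are disjoint and each already hits a distinct vertex of $X_H$), which forces $t_j$ — and hence the gadget vertices leading to it — to be reachable in $H-X_H$. This is the step I expect to be the main obstacle: getting the gadget construction exactly right so that linkedness of the bundle is genuinely equivalent to simultaneous reachability of the original triple, and so that the ``closest'' hypothesis is used correctly rather than merely ``minimum separator''.

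Given the equivalence, the rest is routine. Let $\Y=\{X_H\cup\hat{T}\setminus(\text{something})\}$ — concretely, set $\Y:=\{\hat{T}\mid T\in\T\}$, a family of independent $3$-element sets of $M$ (discarding any $\hat{T}$ that is already dependent, since those $T$ can never satisfy the reachability condition). Apply Lemma~\ref{lemma:representativeset} with $s=3$ and $r=\ell$ to compute an $\ell$-representative subfamily $\Y^*\subseteq\Y$ of size at most $\binom{\ell+3}{3}=\Oh(\ell^3)$ in polynomial time from the representation of $M$. Let $\T^*\subseteq\T$ be the sets whose bundle $\hat{T}$ lies in $\Y^*$. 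Then for any $X_H$ of size at most $\ell$ that is closest to $S_H$: if some $T\in\T$ has all of $T$ reachable from $S_H$ in $H-X_H$, the claim gives $X_H\cap\hat{T}=\emptyset$ and $X_H\cup\hat{T}\in\I$, so by $\ell$-representativeness there is $\hat{T^*}\in\Y^*$ with $X_H\cap\hat{T^*}=\emptyset$ and $X_H\cup\hat{T^*}\in\I$; applying the claim in the other direction yields that all vertices of $T^*$ are reachable from $S_H$ in $H-X_H$, as required. The overall failure probability is that of the gammoid representation, which is exponentially small in the input size.
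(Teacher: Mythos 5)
The overall architecture (encode reachability as independence in a gammoid, then apply Lemma~\ref{lemma:representativeset} with $s=3$) is the right one and matches the paper, but the central claim of your reduction has a genuine gap in its forward direction, and it is exactly the step you flagged as the main obstacle. If every vertex of $T$ is reachable from $S_H$ in $H-X_H$ and $X_H$ is closest to $S_H$, Proposition~\ref{proposition:closest} gives, \emph{for each single vertex} $v\in T$, a system of $|X_H|+1$ vertex-disjoint paths from $S_H$ to $X_H\cup\{v\}$; these systems for different $v$ may overlap arbitrarily, and they cannot be ``iterated/combined'' into $|X_H|+3$ vertex-disjoint paths from $S_H$ to $X_H\cup\hat{T}$ inside one copy of $H$. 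Concretely, take $H$ a single directed path $s\to a\to t_1\to t_2\to t_3$, $S_H=\{s\}$, $X_H=\emptyset$ (which is closest), $T=\{t_1,t_2,t_3\}$: all of $T$ is reachable in $H-X_H$, yet at most one vertex-disjoint path can leave $S_H$, so $\hat{T}$ is dependent in your gammoid no matter what local gadgets you attach near $t_1,t_2,t_3$ --- the bottleneck sits inside $H$, not at the interface to the new sinks, so no ``length-two bundle'' can repair the equivalence. A second symptom of the same problem is rank: with source set $S_H$ unmodified, the strict gammoid has rank at most $|S_H|$, so whenever $|S_H|<\ell+3$ no set $X_H\cup\hat{T}$ of the required size can ever be independent.

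The paper's construction resolves both points globally rather than locally: it first blows up the sources (adding $\ell+1$ super-sources feeding into $S_H$), then takes \emph{three vertex-disjoint copies of the whole graph}, maps each $T\in\T$ to a triple $Y(T)$ with one (copy of a) vertex of $T$ per copy, and replaces $X_H$ by its three copies $I=\{x^1,x^2,x^3\mid x\in X_H\}$. Independence of $Y(T)\cup I$ then decomposes copy-by-copy, so the per-vertex application of Proposition~\ref{proposition:closest} is exactly what is needed, at the price of using $r=3\ell$ (rank $3\ell+3$) instead of your $r=\ell$; the bound $\Oh(\ell^3)$ on the representative family is unaffected. Your converse direction (the path of the linkage ending at a $T$-vertex must avoid $X_H$ by disjointness) and the final representative-set argument are fine, but they rest on the equivalence that the single-copy encoding does not deliver.
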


\begin{proof}
We begin with constructing a directed graph $D$ and vertex set $S$:
\begin{enumerate}
 \item Create a graph $\overline{H}$ from $H$ by adding $\ell+1$ new vertices $s_1,\ldots,s_{\ell+1}$ and adding all edges $(s_i,s)$ for $i\in\{1,\ldots,\ell+1\}$ and $s\in S$. Define $\overline{S}:=\{s_1,\ldots,s_{\ell+1}\}$ and note that $V(\overline{H})=V_H\cup \overline{S}$.
 \item Let $D$ consist of three vertex-disjoint copies of $\overline{H}$. The vertex set $V^j$ of copy $j$ is $V^j=\{v^j\mid v\in V(\overline{H})$; let $S^j:=\{s_i^j\mid s_i \in \overline{S}\}\subseteq V^i$.
 \item Let $S:=S^1\cup S^2\cup S^3$. Note that $|S|=3(\ell+1)$.
\end{enumerate}
Let $M$ be the strict gammoid defined by graph $D$ and source set $S$. Compute in randomized polynomial time a matrix $A$ that represents $M$ using Theorem~\ref{theorem:gammoidrepresentation}; it suffices to prove that we arrive at the claimed set $\T^*$ if $A$ does indeed represent $M$, i.e., if no error occurred.

We now define a family $\Y$ of subsets of $V(D)$, each of size three; for convenience, let $<$ be an arbitrary linear ordering of the vertex set $V_H$ of $H$:
\begin{enumerate}
 \item For $\{u,v,w\}\in\T$ with $u<v<w$ let $Y(\{u,v,w\}):=\{u^1,v^2,w^3\}$.
 \item For $\{u,v\}\in\T$ with $u<v$ let $Y(\{u,v\}):=\{u^1,v^2,v^3\}$.
 \item For $\{u\}\in\T$ let $Y(\{u\}):=\{u^1,u^2,u^3\}$.
\end{enumerate}
Let us remark that the particular assignment of vertices in $T\in\T$ to the three disjoint copies of $\overline{H}$ is immaterial so long as copies of all vertices are present. The following claim relates reachability of vertices in $T\in\T$ in $H-X_H$ to independence of $Y(T)$ in $M$.

\begin{claim}\label{claim:reachability:independence}
Let $X_H\subseteq V_H$ be a set of at most $\ell$ vertices that is closest to $S_H$ in $H$, and let $T\in\T$. The vertices in $T$ are all reachable from $S_H$ in $H-X_H$ if and only if $Y(T)\cup I$ is independent in $M$ and $Y(T)\cap I=\emptyset$, where $I:=\{x^1,x^2,x^3\mid x\in X_H\}$.
\end{claim}

\begin{proof}
Assume first that each vertex of $T$ is reachable from $S_H$ in $H-X_H$. Observe that this requires $T\cap X_H=\emptyset$. By Proposition~\ref{proposition:closest}, since $X_H$ is closest to $S_H$, we have that there exist $|X_H|+1$ vertex-disjoint paths from $S_H$ to $X_H\cup\{v\}$ for each vertex $v\in T$; in other words, $X_H\cup\{v\}$ is linked to $S_H$ in $H$. Since $|X_H\cup\{v\}|\leq \ell$, it follows directly that $X_H\cup\{v\}$ is linked to $\overline{S}$ in $\overline{H}$. Thus, for $v^j\in Y(T)$ with $j\in\{1,2,3\}$, it follows that $I^j\cup\{v_j\}$ is linked to $S^j$ in $D$, where $I^j:=\{x^j \mid x\in X_H\}$. Since the three copies of $\overline{H}$ in $D$ a vertex-disjoint, we conclude that $Y(T)\cup I^1\cup I^2\cup I^3=Y(T)\cup I$ is linked to $S=S^1\cup S^2\cup S^3$ in $D$. Thus, $Y(T)\cup I$ is independent in $M$, as claimed. To see that $Y(T)\cap I=\emptyset$ note that $v^j\in Y(T)\cap I$ would imply $v\in T$ and $v\in X_H$; a contradiction to $T\cap X_H=\emptyset$.

For the converse, assume that $Y(T)\cup I$ is independent in $M$ and that $Y(T)\cap I=\emptyset$. Let $v\in T$ and let $j\in\{1,2,3\}$ such that $v^j\in Y(T)$. Observe that $Y(T)\cap I=\emptyset$ implies $v\notin X_H$: Indeed, if $v\in X_H$ then we have $v^j$ in $Y(T)$; a contradiction. Now, independence of $Y(T)\cup I$ implies that $Y(T)\cup I$ is linked to $S$ in $D$. It follows, by vertex-disjointness of the three copies of $\overline{H}$ in $D$ that $y^j\cup I^j$ is linked to $S^j$ using only vertices $v^j$ with $v\in V(\overline{H})$. This implies that $y\cup X_H$ is linked to $\overline{S}$ in $\overline{H}$. Observe now that any path from $\overline{S}$ to $y\cup X_H$ must contain as its second vertex a vertex of $S$; here it is convenient that $X_H\subseteq V_H$ and $\overline{S}\cap V_H=\emptyset$, causing all paths to have length at least one and at least two vertices. Thus, we conclude that $y\cup X_H$ is linked to $S$ in $\overline{H}$, and hence also in $H$ since vertices of $\overline{S}$ cannot be internal vertices of paths (as they have only outgoing edges). Clearly, in a collection of $|X_H|+1$ paths from $S$ to $X_H\cup \{v\}$ the path from $S$ to $v$ cannot contain any vertex of $X_H$ as they are endpoints of the other paths. Thus, there exists a path from $S$ to $v$ that avoids $X_H$, implying that $v$ is reachable from $S$ in $H-X_H$. Since $v$ was chosen arbitrarily from $T$, the claim follows.
\end{proof}

Now, use Lemma~\ref{lemma:representativeset} on the gammoid $M$ defined by graph $D$ and source set $S$, represented by the matrix $A$. The rank of $M$ is obviously exactly $|S|=3\ell+3$ since no set larger than $S$ can be linked to $S$ and $S$ itself is an independent set (as it is linked to itself). For the lemma choose $r=|S|-3=3\ell$ and $s=3$ and note that all sets in $\Y$ have size exactly $s=3$ as required. We obtain a set $\Y^*$ of size at most $\binom{r+s}{s}=\Oh(|S|^3)=\Oh(\ell^3)$ that $r$-represents $\Y$. Define a set $\T^*\subseteq\T$ by letting $\T^*$ contain those sets $T\in\T$ with $Y(T)\in\Y^*$. The size of $\T^*$ is equal to $|\Y^*|=\Oh(\ell^3)$ since each $Y\in\Y^*$ has exactly one $T\in\T$ with $Y=Y(T)$. (To see this, note that dropping the superscripts in $Y$ yields exactly the members of the corresponding set $T$; some may be repeated.)

\begin{claim}\label{claim:tstar}
For any set $X_H\subseteq V_H$ of size at most $\ell$ that is closest to $S_H$ if there is a set $T\in\T$ such that all vertices $v\in T$ are reachable from $S_H$ in $H-X_H$ then there is a corresponding set $T^*\in\T^*$ satisfying the same properties.
\end{claim}

\begin{proof}
Let $T\in\T$ such that all vertices $v\in T$ are reachable from $S_H$ in $H-X_H$. By Claim~\ref{claim:reachability:independence} the set $Y(T)\cup I$ is independent in $M$ and $Y(T)\cap I=\emptyset$, where $I=\{x^1,x^2,x^3\mid x\in X_H\}$. Note that $Y(T)$ in $\Y$ and that $|I|=3|X_H|\leq 3\ell=r$. Thus, by Lemma~\ref{lemma:representativeset} there must be a set $Y^*\in\Y^*$ such that $Y^*\cap I=\emptyset$ and $Y^*\cup I$ is an independent set of $M$. Let $T^*\in\T$ with $Y^*=Y(T^*)$; such a set $T^*$ exists by definition of $\Y$ and, as discussed above, it is uniquely defined. By Claim~\ref{claim:reachability:independence} it follows that that all vertices of $T$ are reachable from $S_H$ in $H-X_H$. This completes the proof of Claim~\ref{claim:tstar}.
\end{proof}

We recall that for our case of $s=3$ the computation of $\Y^*$ can be seen to take time polynomial in the input size. The computed set $\Y^*$ fulfills the lemma statement unless the gammoid representation computed by Theorem~\ref{theorem:gammoidrepresentation} is erroneous, which has exponentially small chance of occurring. Note that boosting the success chance of Theorem~\ref{theorem:gammoidrepresentation} works by increasing the range of the random integers used therein (respectively, the field size): An additional factor of $2^p$ in the range of integers decreases the error probability by a factor of $2^{-p}$, while increasing the encoding size of the integers only by $p$ bits. Thus, by only a polynomial increase in the running time, we can get exponentially small error. This completes the proof.
\end{proof}


\section{Conclusion}\label{section:conclusion}

We have presented a randomized polynomial kernelization for \vcanb by giving a (randomized) polynomial parameter transformation to \vcamm. This improves upon the smallest parameter, namely $k-LP(G)$, for which such a result was known~\cite{KratschW12}. The kernelization for \vcamm \cite{KratschW12} involves reductions to and from \atwosat, which can be done without affecting the parameter value (cf.~\cite{RamanRS11}). We have not attempted to optimize the total size. Given an instance $(G,k,\ell)$ for \vcanb we get an equivalent instance of \atwosat with $\Oh(k^{24})$ variables and size $\Oh(k^{48})$, which still needs to be reduced to a \vc instance. 

It seems likely that the kernelization can be improved if one avoids the blackbox use of the kernelization for \vcamm and the detour via \atwosat. In particular, the underlying kernelization for \atwosat applies, in part, the same representative set machinery to reduce the number of a certain type of clauses. Conceivably the two applications can be merged, thus avoiding the double blow-up in size. As a caveat, it appears to be likely that this would require a much more obscure translation into a directed separation problem. Moreover, the kernelization for \atwosat requires an approximate solution, and it is likely that the same would be true for this approach. It would of course also be interesting whether a deterministic polynomial kernelization is possible, but this is, e.g., already not known for \atwosat and \vcamm.

We find the appearance of a notion of critical sets of size at most three and the derived separation problem in the auxiliary directed graph quite curious. For the related problem of separating at least one vertex from each of a given set of triples from some source $s$ by deleting at most $\ell$ vertices (a variant of \problem{Digraph Paircut}~\cite{KratschW12}) there is a natural $\Oh^*(3^\ell)$ time algorithm that performs at most $\ell$ three-way branchings before finding a solution (if possible). It would be interesting whether a complete encoding of \vcanb into a similar form would be possible, since that would imply an algorithm that exactly matches the running time of the algorithm of the algorithm by Garg and Philip~\cite{GargP16}.



\end{document}